\title{Descriptive complexity for distributed computing with circuits} 
\def\Ddots{\mathinner{\mkern1mu\raise\p@
\vbox{\kern7\p@\hbox{.}}\mkern2mu
\raise4\p@\hbox{.}\mkern2mu\raise7\p@\hbox{.}\mkern1mu}}
\newcommand{\N}{\mathbb N}
\newcommand{\Z}{\mathbb Z}
\newcommand{\cA}{\mathcal{A}}
\newcommand{\cK}{\mathcal{K}}
\newcommand{\cP}{\mathcal{P}}
\newcommand{\cT}{\mathcal{T}}
\newcommand*{\abs}[1]{\lvert#1\rvert}   
\newcommand{\head}{\mathrm{HEAD}}
\newcommand{\msc}{\mathrm{MSC}}
\newcommand{\cmsc}{\mathrm{CMSC}}
\newcommand{\mmsc}{\mathrm{MMSC}}
\newcommand{\mpmsc}{\mathrm{MPMSC}}
\newcommand{\ml}{\mathrm{ML}}
\newcommand{\mdt}{\mathrm{mdt}}
\newcommand{\mdi}{\mathrm{mdi}}
\newcommand{\md}{\mathrm{md}}
\newcommand{\subs}{\mathrm{SUBS}}
\newcommand{\ordo}{\mathcal{O}}
\newcommand{\ID}{\mathrm{ID}}
\newcommand{\fmpa}{\mathrm{FMPA}}
\newcommand{\mpc}{\mathrm{MPC}}
\theoremstyle{plain}
\newtheorem{theorem}{Theorem}[section]
\newtheorem{lemma}[theorem]{Lemma}
\newtheorem{corollary}[theorem]{Corollary}
\theoremstyle{definition}
\newtheorem{definition}[theorem]{Definition}
\date{} 					
\author[1]{Veeti Ahvonen}
\author[1]{Damian Heiman}
\author[1]{Lauri Hella} 
\author[1,2]{Antti Kuusisto}
\affil{Tampere University, Finland}
\affil[2]{University of Helsinki, Finland}
\begin{document}
\maketitle
\begin{center}
    { email: \url{firstname.lastname@tuni.fi}} 
    \vspace{1em}
\end{center}

\begin{abstract}
\noindent
We consider distributed algorithms in the realistic scenario where distributed message passing is operated via circuits. We show that within this setting, modal substitution calculus MSC captures the expressive power of circuits. The translations between circuits and MSC-programs are linear in both directions. Furthermore, we show that the colouring algorithm based on Cole-Vishkin can be specified via logarithmic size programs. 
\end{abstract}

\textbf{Keywords:} distributed computing, descriptive complexity, logic, graph colouring

\section{Introduction}

Distributed computing concerns computation in networks and relates directly to various fields of study including, inter alia, cellular automata and neural networks. In this paper, we study distributed systems based on circuits. A distributed system is a labeled directed graph (with self-loops allowed) where nodes communicate by sending messages to each other. In each communication round, a node sends a message to its neighbours and updates its state based on (1) its own previous state and (2) the messages received from the neighbours.

Descriptive complexity of distributed computing was initiated in \cite{weak_models}, which characterized classes of constant time distributed algorithms via modal logics. The constant-time assumption was lifted in \cite{Kuusisto13}, which showed that the expressive power of \emph{finite message passing automata} (FMPAs) is captured by \emph{modal substitution calculus} MSC, which is an extension of modal logic by Datalog-style rules. The papers \cite{weak_models} and \cite{Kuusisto13} did not consider \emph{identifiers}, i.e., $\mathrm{ID}$-numbers analogous to IP-addresses.

In this paper, we study distributed computing based on circuits in a scenario with identifiers. Each node runs a copy of the same circuit $C$. In each communication round, the node sends its current bit string $s$ to its neighbours and updates to a new string $s'$ by feeding $s$ and the
strings $s_1,\dots , s_m$ sent by the neighbours to $C$ (letting $s'$ be the output of $C$). This is a realistic model of distributed computing, which also takes \emph{local computation}---the computation steps of the circuit---into account. Typically in distributed computing, only communication steps count. Since we study distributed systems, we call our circuits \emph{message passing circuits}, or $\mpc$s, although formally they are just plain circuits.

We establish an exact match between this circuit-based model and the logic $\mathrm{MSC}$. Unlike earlier works on descriptive complexity of
distributed computing, we work in the circuit-style
paradigm where an algorithm is specified via an \emph{allocation function} $F$ that produces, in the simplest case, for each input $n\in \mathbb{Z}_+$, a circuit $F(n)$ that operates on all distributed systems (i.e., labeled directed graphs, or Kripke models) of size $n$. As one of our main results, we prove that programs of the $\msc$-logic and constant fan-in message passing circuits translate to each other with only a \emph{linear} blow-up in size. 
Thus, we can work interchangeably with circuit allocation functions and $\msc$-program allocation functions. The related formal statements are as follows, with $\Pi$ denoting the set of proposition symbols considered (including ones for $\mathrm{ID}$-bits) while $\Delta$ is a degree bound for graphs.

\smallskip

\noindent
\textbf{Theorem\ \ref{thr:mpc_to_msc}.}\ \emph{Given an $\mpc$ of size $m$ for $(\Pi, \Delta)$, we can construct an equivalent $\Pi$-program of $\msc$. For a constant bound $c$ for the fan-in of $\mpc$s, the size of the program is $\ordo(m)$.}

\smallskip

\noindent
\textbf{Theorem\ \ref{thr:msc_to_mpc}.}\ \emph{Given $\Pi$, $\Delta$ and a $\Pi$-program of $\msc$ of size $m$, we can construct an equivalent $\mpc$ for $(\Pi, \Delta)$ of size $\ordo(\Delta m + \abs{\Pi})$.}

\smallskip

We are especially interested in the feasible scenario where $F(n)$ is a circuit of size $\mathcal{O}(\log n)$. 
From the above results, we can prove that, for a constant $\Delta$ and constant fan-in bound, if we have an allocation function producing log-size circuits, we also have an allocation function for log-size programs, and vice versa. 
We put this into use by demonstrating that for
graphs of degree bound $\Delta$, we can produce programs of size $\mathcal{O}(\log n)$ that compute a $(\Delta + 1)$-colouring via a \emph{Cole-Vishkin} \cite{goldberg1987parallel} style approach---implying also an analogous result for circuits. 

Generally, the circuit-based approach suits well for studying the \emph{interplay of local computation and message passing}. While important, such effects have received relatively little attention in studies on distributed computing. We provide a range of related results.

\medskip

\noindent
\textbf{Related work.} As already mentioned, descriptive complexity of distributed computing has been largely initiated in \cite{weak_models}, which characterizes a range of related complexity classes via modal logics. It is shown, for example, that graded modal logic captures the class $\mathrm{MB}(1)$ containing problems solvable in constant time by algorithms whose recognition capacity is sufficient all the way up to distinguishing between multisets of incoming messages, but no further. In the paper, the link to logic helps also in \emph{separating} some of the studied classes. The constant-time limitation is lifted in \cite{Kuusisto13}, which shows that finite distributed message passing automata (FMPAs) correspond to modal substitution calculus MSC, which is the logic studied also in the current paper. The work on MSC is extended in \cite{reitericalp}, which proves that while MSC corresponds to synchronized automata, the $\mu$-fragment of the modal $\mu$-calculus similarly captures asynchronous distributed automata. 

Distributed computing with identifiers has been studied from the point of view of logic earlier in \cite{identifiers}. The paper \cite{identifiers} approaches identifiers via a uniform logical characterization of a certain class of algorithms using $\mathrm{ID}$s, while our work is based on the circuit-style paradigm with formulas and circuits being given based on model size. Thus, the two approaches are not comparable in any uniquely obvious way. Nevertheless, one simple difference between our work and \cite{identifiers} is that we treat $\mathrm{ID}$s bit by bit as concrete bit strings. Thus, we can express, e.g., that the current $\mathrm{ID}$ has a bit $1$ without implying that the current node cannot have the smallest $\mathrm{ID}$ in the system. This is because there is no guarantee on what the set of $\mathrm{ID}$s in the current graph (or distributed system) is, and in a directed graph, we cannot even scan through the graph to find out. On the other hand, the logic in \cite{identifiers} can express, e.g., that the current node has the largest $\mathrm{ID}$, which we cannot do. Of course, with a non-uniform formula allocation function, the circuit-style paradigm can even specify non-computable properties.

The closest work to the current article is \cite{Kuusisto13}, which gives the already mentioned characterization of finite message passing automata via $\mathrm{MSC}$. The paper does not work within the circuit-style paradigm. Furthermore, we cannot turn our circuit to an $\mathrm{FMPA}$ and then use the translation of \cite{Kuusisto13}, as this leads to an exponential blow-up in size. Also, the converse translation is non-polynomial in \cite{Kuusisto13}. 
Furthermore, that paper does not discuss identifiers, or the Cole-Vishkin algorithm, and the work in the paper is based on the paradigm of relating properties directly with single formulae rather than our circuit-style approach.
Concerning further related and \emph{very timely} work, \cite{barcelo} studies graph neural networks (or GNNs) and establishes a match between \emph{aggregate-combine} GNNs and graded modal logic. For further related work on GNNs and logic, see, e.g., \cite{grohe}. Concerning yet further work on logical characterizations of distributed computing models, we mention the theses \cite{tuamothesis}, \cite{reiterthesis}.

\section{Preliminaries}

We let $\Z_+$ denote the set of positive integers. For every $n \in \Z_+$, we let $[n]$ denote the set $\{1,  \ldots, n\}$ and $[n]_0$ the set $\{0,  \ldots, n\}$. 
Let $\mathrm{PROP}$ be a countably infinite set of proposition symbols. We suppose $\mathrm{PROP}$ partitions into two infinite sets $\mathrm{PROP}_0$ and $\mathrm{PROP}_1$, with the intuition that $\mathrm{PROP}_0$ contains ordinary proposition symbols, while $\mathrm{PROP}_1$ consists of distinguished proposition symbols reserved for encoding ID-numbers. We denote finite
sets of proposition symbols 
by $\Pi \subseteq \mathrm{PROP}$. By $\Pi_0$ (respectively, $\Pi_1$), we mean the subset of $\Pi$ containing ordinary (respectively, distinguished) propositions. The set $\mathrm{PROP}$ is associated with a linear order $<^{\mathrm{PROP}}$ which also induces a linear order $<^S$ over
any set $S\subseteq \mathrm{PROP}$.

Let $\Pi$ be a set of proposition symbols.
A \textbf{Kripke model} over $\Pi$ is a structure $(W,R,V)$ with a non-empty domain $W$, an \textbf{accessibility 
relation} $R\subseteq W\times W$ and a 
\textbf{valuation function} $V:\Pi \rightarrow \mathcal{P}(W)$ giving each $p\in \Pi$ a set $V(p)$ of nodes where $p$ is considered true. A
\textbf{pointed Kripke model} is a pair $(M,w)$, 
where $M$ is a Kripke model and $w$ a node in
the domain of $M$. We let $\mathrm{succ}(w)$ denote the set $\{\,v\in W \mid (w, v) \in R\,\}$. 

As in \cite{weak_models, Kuusisto13}, we model distributed systems by Kripke models. An edge $(w,u)\in R$ linking the node $w$ to $u$ via the accessibility relation $R$ means that $w$ can see messages sent by $u$. Thereby, we adopt the convention of \cite{weak_models, Kuusisto13} that messages travel in the direction \emph{opposite} to the edges of $R$. An alternative to this would be to consider modal logics with only inverse modalities, i.e., modalities based on the inverse accessibility relation $R^{-1}$.

Let $k\in \mathbb{N}$ and consider an infinite sequence $S = (\overline{b}_j)_{j\in \mathbb{N}}$ of $k$-bit strings $\overline{b}_j$. Let $A\subseteq [k]$ and $P\subseteq [k]$ be subsets, called \textbf{attention} bits and \textbf{print} bits (or bit positions, strictly speaking). Let $(\overline{a}_j)_{j\in \mathbb{N}}$ and $(\overline{p}_j)_{j\in \mathbb{N}}$ be the corresponding sequences of substrings of the strings in $S$, that is, $(\overline{a}_j)_{j\in \mathbb{N}}$ records the substrings with positions in $A$, and analogously for $(\overline{p}_j)_{j\in \mathbb{N}}$. Let 
$(\overline{r}_j)_{j\in \mathbb{N}}$ be the sequence of substrings with positions in $A\cup P$.
We say that $S$ \textbf{accepts} in round $n$ if 
at least one bit in $\overline{a}_n$ is $1$ and all bits in each $\overline{a}_m$ for $m< n$ are zero. 
Then also $S$ \textbf{outputs} $\overline{p}_n$. 
More precisely, $S$ \textbf{accepts in round $n$ with respect to $(k,A,P)$}, and $\overline{p}_n$ is the \textbf{output of $S$ with respect to $(k,A,P)$}. The sequence $(\overline{r}_j)_{j\in \mathbb{N}}$ is the \textbf{appointed sequence} w.r.t. $(k,A,P)$, 
and the vector $\overline{r}_j$ the \textbf{appointed 
string} of round $j$.


\subsection{Logics}

For a set $\Pi$ of 
proposition symbols, the set of \textbf{$\ml(\Pi)$-formulas} is given by the grammar
\[
    \varphi \Coloneqq \top  \mid  p  \mid \neg \varphi  \mid  ( \varphi \land \varphi )  \mid \Diamond \varphi 
\]
where $p \in \Pi$ and $\top$ is a logical constant symbol. 
The truth of such a formula $\varphi$ in a pointed Kripke model $( M, w ) $ is defined in the ordinary way: $( M, w ) \models p \Leftrightarrow w \in V( p )$, and $( M, w ) \models \Diamond \varphi \Leftrightarrow (M,v) \models \varphi$ for some $v\in W$ such that $(w,v)\in R$. The semantics for $\top,\neg,\wedge$ are the usual ones.

Now, let us fix a set $\mathrm{VAR} \coloneqq \left\{ \, V_i \mid i \in \N \, \right\} $ of \textbf{schema variables}. 
We will mostly use meta variables $X, Y, Z,$ and so on, to 
denote symbols in $\mathrm{VAR}$. The set $\mathrm{VAR}$ is associated with a linear order $<^{\mathrm{VAR}}$ inducing a corresponding linear order $<^{\cT}$ over any $\cT \subseteq \mathrm{VAR}$. Given a set $\cT \subseteq \mathrm{VAR}$ and a set $\Pi\subseteq \mathrm{PROP}$, 
the set of \textbf{$( \Pi, \cT )$-schemata} of \textbf{modal substitution calculus} (or $\msc$) is the set generated by the grammar
\[
    \varphi \Coloneqq \top  \mid  p  \mid V_i  \mid \neg \varphi  \mid  ( \varphi \land \varphi )  \mid \Diamond \varphi 
\]
where $p \in \Pi$ and $V_i \in \cT$. A \textbf{terminal clause} of $\msc$
(over $\Pi$) is a string of the form $V_i( 0 ) \coloneq \varphi$, where $V_i \in \mathrm{VAR}$ and $\varphi \in \mathrm{ML}(\Pi)$. An \textbf{iteration clause} of $\msc$
(over $\Pi$) is a string of the form $V_i \coloneq \psi$ where $V_i \in \mathrm{VAR}$ and $\psi$ is a $( \Pi, \cT ) $-schema for some set $\cT \subseteq \mathrm{VAR}$. In a terminal clause $V_i(0) \coloneq \varphi$,
the symbol $V_i$ is the \textbf{head predicate} and $\varphi$ the \textbf{body} of
the clause. Similarly, $V_i$ is the head predicate of the iteration clause $V_i \coloneq \psi$, 
while $\psi$ is the body.

Let $\cT = \{Y_1, \ldots, Y_k\} \subseteq \mathrm{VAR}$ be a finite, nonempty set of $k$ distinct schema variables. A \textbf{$( \Pi, \cT ) $-program} $\Lambda$ of $\msc$ consists of two lists 
\[
\begin{aligned}
    &Y_1 ( 0 ) \coloneq \varphi_1 \qquad &&Y_1 \coloneq \psi_1 \\ 
    &\vdots &&\vdots \\
    &Y_k ( 0 ) \coloneq \varphi_k &&Y_k \coloneq \psi_k \\ 
\end{aligned}
\] 
of clauses (or rules). The first list contains $k$ terminal clauses over $\Pi$, and
the second contains $k$ iteration clauses whose bodies are $(\Pi,\cT)$-schemata. To fully define $\Lambda$, we also fix two sets $\cP \subseteq \cT$ and $\cA \subseteq \cT$ of \textbf{print predicates} and \textbf{attention predicates} of $\Lambda$. The set $\cP \cup \cA$ is the set of \textbf{appointed predicates} of  $\Lambda$. We call $\Lambda$ a \textbf{$\Pi$-program} if it is a $( \Pi, \cT ) $-program for some $\cT \subseteq \mathrm{VAR}$. The set of head predicates of $\Lambda$ is denoted by $\head(\Lambda)$. 
For each variable $Y_i \in \head ( \Lambda ) $, we let $Y^{0}_i$ denote the body of the terminal clause $Y_i( 0 ) \coloneq \varphi_i$. Recursively, assume we have defined an $\mathrm{ML}( \Pi ) $-formula $Y^{n}_{i}$ for each $Y_i \in \head( \Lambda ) $. Let $\varphi_j$ denote the body of the iteration clause of $Y_j$. The formula $Y^{n+1}_j$ is obtained by replacing each $Y_i$ in $\varphi_j$ with $Y^{n}_i$. Then $Y^n_i$ is the $n$th \textbf{iteration formula of $Y_i$}. Supposing we have fixed a $(\Pi, \cT)$-program $\Lambda$, if $\varphi$ is a $(\Pi, \cT)$-schema, then we let $\varphi^{n+1}$ denote the $\ml(\Pi)$-formula obtained from the schema $\varphi$ by simultaneously replacing each $Y_i \in \head(\Lambda)$ with $Y^n_i$. 
Now, let $( M, w ) $ be a pointed $\Pi$-model and $\Lambda$ a $\Pi$-program of $\msc$. We define that $( M, w ) \models \Lambda $ if for some $n$ and some attention predicate $Y$ of $\Lambda$, we have $( M, w ) \models Y^n$. In Section \ref{extensions}, we will also define output conditions for $\msc$ using print predicates. 
%
%
%

%
%
%
Assume that $p_1, \ldots, p_{\ell}$ enumerate all the distinguished propositions in $\Pi$ in the order $<^{\mathrm{PROP}}$. For each node $w$ of a Kripke model $M$ over $\Pi$, we let $\ID(w)$ denote the \textbf{identifier} of $w$, that is, the $\abs{\Pi_1}$-bit string such that the $i$th bit of $\ID(w)$ is $1$ if and only if $(M,w) \models p_i$. The model $M$ is a Kripke model \textbf{with identifiers} if $\ID(w) \not= \ID(w')$ for each pair of distinct nodes $w$ and $w'$ of $M$. We let $\cK( \Pi, \Delta )$ denote the class of finite Kripke models $(W,R,V)$ over $\Pi$ with identifiers such that the out-degree of each node is at most $\Delta \in \mathbb{N}$. 
For a node $w$, let $s_1,  \ldots, s_d$ be the identifiers of the members of $\mathrm{succ}(w)$ in the lexicographic order. A node $v \in \mathrm{succ}(w)$ is the \textbf{$i$th neighbour} of $w$ iff $\ID(v) = s_i$. 
%
%
%
%
%
%
%
%
%
Analogously to local $\mathrm{ID}$s, if $p_1, \ldots, p_{m}$ enumerate all the propositions in $\Pi$ in the order $<^{\mathrm{PROP}}$,
then the \textbf{local input} of a node $w$ of a Kripke model $M$ over $\Pi$ is the $m$-bit
string $t$ such that the $i$th bit of $t$ is $1$ if and only if $(M,w) \models p_i$.
Now, let $\Lambda$ be a $(\Pi, \cT)$-program of $\msc$, and let $\psi$ be a $(\Pi, \cT)$-schema. We let $\md(\psi)$ denote the \textbf{modal depth} of $\psi$ (i.e., the maximum nesting depth of diamonds $\Diamond$ in $\psi$). 
We let $\mdt(\Lambda)$ (respectively, $\mdi(\Lambda)$) denote the maximum modal depth of the bodies of the terminal clauses (resp., of the iteration clauses) of $\Lambda$. By $\subs(\Lambda)$ we denote the set of all subschemata of $\Lambda$, including head predicates and bodies of iteration and terminal clauses. If $S$ is a set of schemata,  $\subs(S)$ is the set of all subschemata of all schemata in $S$.

\subsection{Circuits and distributed computation}

A \textbf{Boolean circuit} is a directed, acyclic graph where each node of non-zero in-degree is labeled by one of the symbols $\land , \lor , \neg$. The nodes of a circuit are called \textbf{gates}.
The in-degree of a gate $u$ is called the \textbf{fan-in} of $u$, and the out-degree of $u$ is \textbf{fan-out}. The \textbf{input gates} of a circuit are precisely the gates that have zero fan-in; these gates are not labeled by $\land , \lor , \neg$. The \textbf{output-gates} are the ones with fan-out zero; we allow
multiple output gates in a circuit. 
The fan-in of every gate labeled with $\neg$ is $1$. The \textbf{size} $\abs{C} $ of a circuit $C$ is the number of gates in $C$. The \textbf{depth} $d(C)$ of $C$ is the longest path length (number of edges) from an input gate to an output gate. The \textbf{height} $h(G)$ of a gate $G$ in $C$ is the longest path length from an input gate to the gate $G$. Thus, the height of an input gate is zero.
Both the input gates and output gates of a
circuit are linearly ordered. A circuit with $n$ input gates and $k$ output gates then computes a 
function of type $\{0,1\}^n \rightarrow \{0,1\}^k$. This is done in the natural way, analogously to the Boolean operators corresponding to $\wedge,\vee,\neg$; see, for example, \cite{libkin} for the formal definition. The output of the circuit is the \emph{binary string} determined by the output bits of the output gates. Note that 
gates with $\wedge,\vee$ can have any fan-in (also $0$). The $\land$-gates that have zero fan-in always outputs $1$ and therefore corresponds to the $\top$ symbol. The $\lor$-gates that have zero fan-in always outputs $0$ and therefore respectively corresponds to the $\bot$ symbol.

From a Boolean formula, it is easy to define a corresponding circuit by considering its \emph{inverse tree representation}, meaning the tree representation with edges pointing in the inverse direction.
A node $v$ in the inverse tree representation is the \textbf{parent} of $w$ if there is an edge from $w$ to $v$. Then $w$ is a \textbf{child} of $v$. Note that input gates do not have any children and output gates have no parents. The \textbf{descendants} of $w$ are defined such that every child of $w$ is a descendant of $w$, and also every child of a descendant of $w$ is a descendant of $w$.

\begin{definition}
Let $\Pi$ be a set of propositions and $\Delta\in \mathbb{N}$. 
    A \textbf{circuit for $( \Pi, \Delta ) $} is a circuit $C$ that specifies a function $$f:\{0,1\}^{|\Pi|+k(\Delta + 1)} \rightarrow \{0,1\}^{k}$$ 
    for some $k\in \mathbb{N}$. The number $k$ is called the \textbf{state length} of $C$.
    The circuit $C$ is also associated with sets $A\subseteq [k]$ and $P\subseteq [k]$ of \textbf{attention bits} and \textbf{print bits}, respectively. 
    For convenience, we may also call a circuit $C$ for $(\Pi, \Delta)$ a \textbf{message passing circuit} $\mathrm{(}$or $\mpc)$ for $(\Pi,\Delta)$. 

\end{definition}
%
%
%
%
%
%
The set $A\cup P$ is called the set of \textbf{appointed bits} of the circuit.
A circuit $C$ is \textbf{suitable} for a Kripke model $M$ with identifiers if $C$ is a message passing circuit for $(\Pi,\Delta)$, where $\Pi$ is precisely the set of proposition symbols interpreted by $M$, and $\Delta$ is at least the maximum out-degree of the nodes in $M$. A 
circuit $C$ for $(\Pi,\Delta)$ with $|\Pi_1| = m$ is referred to as a \textbf{circuit for $m$ $\mathrm{ID}$-bits}.
We let $\mathrm{CIRC}(\Pi_0,\Delta)$ denote the set of all circuits $C$ such that for some $\Pi$ with $\Pi\cap\mathrm{PROP}_0 = \Pi_0$, the circuit $C$ is a circuit for $(\Pi,\Delta)$. 
We stress that, strictly speaking, when specifying an $\mpc$, we should always specify (together with a circuit) the sets $\Pi$, $\Delta$, the attention and print bits, and an ordering of the input and output gates. 
%
%
%

Before giving a formal definition of distributed computation in a Kripke model $M\in\cK(\Pi,\Delta)$ with a circuit $C$ for $(\Pi,\Delta)$, we describe the process informally.
Each node $u$ of $M$ runs a copy of the circuit $C$. The node $u$ is associated with a local input that is defined as follows. Assume that $p_1, \ldots, p_{\abs{\Pi}}$ enumerate all the propositions in $\Pi$ in the order $<^{\mathrm{PROP}}$. A \emph{local input} at $u$ is $\abs{\Pi}$-bit binary string $\overline{s}$ such that $i$th bit of $\overline{s}$ is $1$ iff $u \in V(p_i)$.  
At the beginning of the computation, the circuit at $u$ reads the string $\overline{s} \cdot 0^{\ell}$ at $u$,
where $\overline{s}$ is the local
input at $u$ and $\ell= k(\Delta + 1)$, so $0^{\ell}$ is simply the part of the input to $C$ that does not
correspond to proposition symbols. Then
the circuit enters a \emph{state}, which is the $k$-bit output string of $C$. Let $s(0,u)$ denote this string; we 
call it the \textbf{state in communication round $0$} at the node $u$. Now, recursively,
suppose we know the state $s(n,u)$ in
communication round $n\in \N$ for
each node $u$. The
state $s(n+1,u)$ for round $n+1$ at $u$ is then computed as follows. 
\begin{enumerate}
\item
At each node $u$, the circuit sends $s(n,u)$ to
the nodes $w$ such that $R(w,u)$. Note here that messages
flow opposite to the direction of $R$-edges. 
\item
The circuit at $u$ \emph{updates} its
state to $s(n+1,u)$,
which is the $k$-bit string obtained as
the output of the circuit with the input
$\overline{s} \cdot \overline{s}_0\cdots \overline{s}_{\Delta}$, which is the
concatenation of the $k$-bit
strings $s_i$ (for $i\in\{0,\dots , \Delta\})$
specified as follows. 
\begin{itemize}
\item 
The string $\overline{s}$ is the local input at $u$.
\item
The string $\overline{s}_0$ is the state $s(n,u)$.
\item
Let $i\in \{1,\dots , m\}$, where $m\leq \Delta$ is the out-degree of $u$. Then $\overline{s}_i$ is the state $s(n,v_i)$ of the $i$th neighbour $v_i$ of $u$. 
%
\item
For $i > m$, we have $\overline{s}_i = 0^k$. 
\end{itemize}
\end{enumerate}

We then define computation of $\mpc$s formally. An $\mpc$ $C$ for $( \Pi, \Delta )$ of state length $k$ and a Kripke model $M = ( W, R, V ) \in \cK( \Pi, \Delta ) $ define a \emph{synchronized distributed system}, which executes an $\omega$-sequence of rounds defined as follows. Each round $n \in \N$ defines a \textbf{global configuration} $f_n \colon W \to \{0,1\}^{k }$. Let $\overline{t}_w$ denote the binary string corresponding to the set of propositions true at $w$ (i.e., local input). The configuration of round $0$ is the function $f_0$ such that $f_0( w )$ is the $k$-bit binary string produced by $C$ with the input $\overline{t}_w \cdot 0^{k(\Delta+1)}$. 
Recursively, assume we have defined $f_{n}$. Let $v_{1}, \ldots, v_{m} \in \mathrm{succ}( w ) $ be the
neighbours of $w$
($m\leq \Delta$)
given in the order of
their $\ID$s. Let $\overline{s}_w$ be the
concatenation $\overline{t}_w \cdot \overline{s}_0 \cdots \overline{s}_{\Delta}$ of $k$-bit binary strings such that 
\begin{enumerate}
\item
$\overline{s}_0 = f_n(w)$, 
\item
$\overline{s}_i = f_n(v_i)$ for 
each $i\in \{1,\dots, m\}$,
\item
$\overline{s}_j = 0^k$ for $j\in \{m+1,\dots , \Delta\}$.
\end{enumerate} 
Then $f_{n+ 1}(w)$ is the 
output string of $C$ with input $\overline{s}_w$. 
Now, consider the sequence $(f_n(w))_{n\in\mathbb{N}}$ of $k$-bit strings that $C$ produces at $w$. Suppose the sequence $(f_n(w))_{n\in\mathbb{N}}$ accepts (resp. outputs $\overline{p}$) in round $n$ with respect to $(k,A,P)$. Then $w$ \textbf{accepts} (resp., \textbf{outputs} $\overline{p}$) in round $n$. Note that the circuit at $w$ keeps executing after round $n$. 

Given a Kripke model $M = (W,R,V)$, a \textbf{solution labeling} is a function $W\rightarrow \{0,1\}^*$ associating nodes with strings. The strings represents outputs of the nodes on distributed computation. 
We could, e.g., label the nodes with strings corresponding to ``yes'' and ``no''. A \textbf{partial solution labeling} for $M$ is a partial function from $W$ to $\{0,1\}^*$, that is, a function of type $U\rightarrow \{0,1\}^*$ for some $U\subseteq W$. Partial solution labelings allow for ``divergent computations'' on some nodes in $W$. The $\textbf{global output}$ of a circuit $C$ over a model $M = (W,R,V)$ is a
function $g:U\rightarrow\{0,1\}^*$ such that 
\begin{enumerate}
    \item 
    $U\subseteq W$,
    \item
    for all $w\in U$, the circuit $C$ outputs $g(w)$ in some round $n$, and
    \item
    $C$ does not produce an output for any $v\in W\setminus U$.
\end{enumerate}

Now, fix a finite set $\Pi_0\subseteq \mathrm{PROP}_0$ of proposition symbols. Intuitively, these are the ``actual'' propositions in models, while the set of $\mathrm{ID}$-propositions will grow with model size. Let $\mathcal{M}(\Pi_0)$ denote the class of all finite Kripke models $M$ with $\mathrm{ID}$s and having a set $\Pi$ of proposition symbols such that $\Pi\cap\mathrm{PROP}_0 = \Pi_0$. Thus, $\Pi_0$ is the same for all models in $\mathcal{M}(\Pi_0)$ but the symbols for $\mathrm{ID}$s vary. Consider a subclass $\mathcal{M}\subseteq \mathcal{M}(\Pi_0)$. Now, a distributed computing \textbf{problem} over $\mathcal{M}$ is a mapping $p$ with domain $\mathcal{M}$ that associates with each input $M$ a (possibly infinite) set $p(M)$ of partial solution labelings for $M$. The set $p(M)$ represents the set of acceptable answers to the problem $p$ over $M$. Many graph problems (e.g., colourings) naturally
involve a set of such answer labelings.

For $\Delta\in \mathbb{N}$, we let $\mathcal{M}(\Pi_0,\Delta)$ denote the restriction of $\mathcal{M}(\Pi_0)$ to models with maximum out-degree $\Delta$.
A \textbf{circuit sequence}
for $\mathcal{M}(\Pi_0,\Delta)$ is a function $F : \mathbb{Z}_+\to \mathrm{CIRC}(\Pi_0,\Delta)$ such that $F(n)$ is a circuit for $\lceil\log n\rceil$ $\mathrm{ID}$-bits. 
%
%
%
%
%
%
Now, $F$ \textbf{solves} a problem $p$ over $\mathcal{M}(\Pi_0,\Delta)$ if the global output of $F(n)$ belongs to $p(M)$ for each $M\in \mathcal{M}(\Pi_0,\Delta)$ of domain size $n$. Let $c\in \mathbb{N}$. 
We define $\mathrm{DCC}_{\Delta}^c[\log n]$ to be the class of distributed computing problems solvable by a circuit sequence $F$ of maximum fan-in $c$ circuits such that 
the size of $F(n)$ is $\mathcal{O}(\log n)$. The related \textsc{LogSpace} uniform class requires that
each $F$ can be computed in $\textsc{LogSpace}$.  $\mathrm{DCC}$ stands for \emph{distributed
computing by circuits}.

\section{Extending MSC}\label{extensions}

Here we define some extensions of $\msc$ to be used mainly as tools in the proofs that follow. 
%
%
%
Let $\Pi$ be a set of propositions and $\cT$ a set of schema variables.
Let $\Delta\in \mathbb{N}$. 
In \textbf{Multimodal $\msc$} (or $\mathrm{MMSC}$), instead of $\Diamond$, we have the operators $\Diamond_1,\dots , \Diamond_{\Delta}$, and otherwise the syntax is as in $\msc$. The schema $\Diamond_i \varphi$ simply asks if $\varphi$ is true at the $i$th neighbour. More formally, if $(M, w)$ is a pointed Kripke model with identifiers, then
$(M, w) \models \Diamond_i \varphi \Leftrightarrow (M, v_i) \models \varphi$ such that $(w, v_i) \in R$  and $v_i$ is the $i$th neighbour of $w$, noting that
if the out-degree of $w$ is lesser than $i$, then $\Diamond_i \varphi$ is false at $w$. 
A \textbf{$(\Pi, \Delta)$-program} of $\mmsc$ is exactly like a $\Pi$-program of $\msc$ but we are only allowed to use operators $\Diamond_1, \ldots, \Diamond_{\Delta}$ instead of $\Diamond$. A \textbf{$\Pi$-program} $\Lambda$ of $\mmsc$ is a $(\Pi, \Delta)$-program for any $\Delta \geq d$, where $d$ is the maximum subindex in any diamond in $\Lambda$. We also fix print and attention predicates for programs of $\mmsc$. 
Note that $\mathrm{MMSC}$ is not a logic in the usual sense as the
operators $\Diamond_i$ require information about
the predicates defining $\ID$s. This could be remedied via signature changes and limiting attention to multimodal models with relations having out-degree at most one. This would be a slightly messy approach, and the current approach suffices for this article.

We then define \textbf{$\msc$ with conditional rules} (or $\mathrm{CMSC}$). Here, we allow ``if-else'' rules as iteration clauses. Let $\varphi_1, \ldots, \varphi_{n}$ and $\psi_1, \ldots, \psi_{n}$ and also $\chi$ be $(\Pi, \cT)$-schemata of basic $\msc$. A \textbf{conditional iteration clause} is a rule of the form 
$$X \coloneq_{\varphi_1, \dots, \varphi_n} \psi_1; \dots; \psi_{n}; \chi .$$
The schemata $\varphi_i$ are \textbf{conditions} for the head predicate $X$ and the schemata $\psi_i$ are the related \textbf{consequences}. The last schema $\chi$ is called the \textbf{backup}. Note that when $n = 0$, we have a standard $\msc$ clause.  
\textbf{$\Pi$-programs} of $\mathrm{CMSC}$ are exactly as for $\mathrm{MSC}$, but 
we are allowed to use conditional iteration clauses. Thus a program $\Lambda$ of $\mathrm{CMSC}$ consists of $k$ terminal clauses, $k' \leq k$ conditional iteration clauses and $k-k'$ standard iteration clauses for some $k\in\mathbb{Z}_+$.
Again we also fix some sets of schema
variables as print and attention predicates. 

 To define the semantics, we will specify---as in $\msc$---the $n$th iteration formula of each head predicate. Informally, we always use the first (from the left) condition $\varphi_i$ that holds and thus evaluate the corresponding consequence $\psi_i$ as the body of our rule. If none of the conditions hold, then we use the backup. 
 Let $\Lambda$ be a $\Pi$-program of $\mathrm{CMSC}$.
First, we let
the zeroth iteration clause $Y_i^{0}$ of a
head predicate $Y_i \in \head(\Lambda)$ be the 
the terminal clause of $Y_i$. 
Recursively, assume we have defined an $\mathrm{ML}( \Pi ) $-formula $Y^{n}_{i}$ for each $Y_i \in \head( \Lambda ) $. 
Now, consider the rule 
$$
Y_i \coloneq_{\varphi_1, \dots, \varphi_m} \psi_1; \dots; \psi_m; \chi.
$$
%
%
%
Let $\varphi_j^{n+1}$ be the formula 
obtained by replacing each 
schema variable $Y_k$ in
the condition $\varphi_j$ with $Y_k^n$. 
The formulae $\chi^{n+1}$ and $\psi_k^{n+1}$ are obtained analogously.
Then, the formula $Y^{n+1}_i$ is 
\[
\bigvee\limits_{k\leq m}
\Big(\Big(\bigwedge\limits_{j < k} \neg \varphi_j^{n+1}\Big) \land \varphi_k^{n+1} \land \psi_k^{n+1}\Big) \ \lor \ \ \Big(\Big(\bigwedge\limits_{j \leq m} \neg \varphi_j^{n+1}\Big) \land \chi^{n+1}\Big).
\]

Often the backup 
schema $\chi$ is just the head predicate $X$ of the rule. This means that
the truth value of the head predicate does not change if none of the conditions hold.
We say that a condition $\varphi_k$ is \textbf{hot} at $w$ in round $n\geq 1$ if the formula $\varphi_k^n$ is true at $w$ and none of the ``earlier'' conditions $\varphi_j^n$ of the same rule (so $j<k$) are true. Otherwise, the backup is hot. 
We say that a conditional iteration clause (or the corresponding head predicate) is \textbf{active} in round $n \geq 1$ at node $w$ if one of the condition formulas of the rule is hot.

Last, we define \textbf{message passing $\msc$} (or $\mathrm{MPMSC}$) roughly as multimodal $\msc$ with conditional rules. The \textbf{$(\Pi, \Delta)$-programs} are exactly like $(\Pi,\Delta)$-programs of $\mmsc$ with conditional rules and the following restrictions.
\begin{enumerate}
\item
The modal depths of terminal clauses and conditions of rules are zero.
\item 
The consequences, backups and bodies of standard iteration clauses all have modal
depth at most one. 
\end{enumerate}
As in $\mmsc$, operators $\Diamond$ are not allowed. A $\Pi$-program of $\mpmsc$ is defined analogously to a $\Pi$-program of $\mmsc$. Thus, a program of $\mpmsc$ contains $k$ terminal clauses, $k' \leq k$ conditional iteration clauses and $k-k'$ standard iteration clauses for some $k \in Z_+$. We also fix sets of attention and print predicates. The semantics are defined as for $\cmsc$, noting that now diamonds $\Diamond_i$ are used. A non-terminal clause of a program of $\mpmsc$ is a \textbf{communication clause} if it contains at least one diamond. A communication clause is \textbf{broadcasting} in round $n \in \mathbb{Z}_+$ at $w$ if one of the following holds.
\begin{enumerate}
    \item 
    A condition $\varphi_i$ is hot at $w$ and the corresponding consequence has a diamond.
    \item
    A backup is hot at $w$ and has a diamond.
    \item
    The rule is not conditional but has a diamond.
\end{enumerate}

We will next define acceptance and output conditions for programs of all variants of $\msc$, including standard $\msc$. The acceptance conditions will be consistent with the already given conditions for standard $\msc$.
Let $\Lambda$ be a program and
$\cA$ and $\cP$ the sets of attention and 
print predicates. Let $Y_1, \ldots, Y_k$ enumerate the head predicates in $\Lambda$ in the order $<^{\mathrm{VAR}}$. Let $M = (W, R, V)$ be a Kripke model. Each round $n \in \N$ defines a \textbf{global configuration} $g_n \colon W \to \{0,1\}^k$ given as follows. The configuration of the $n$th round is the function $g_n$ such that the $i$th bit of $g_n(w)$ is $1$ if and only if $(M, w) \models Y^n_i$. If the sequence $(g_n(w))_{n \in \N}$ accepts (respectively outputs $\overline{p}$) in round $n$ with respect to $(k, \cA, \cP)$, then we say that the node $w$ \textbf{accepts} (respectively \textbf{outputs} $\overline{p}$) in round $n$. Then $n$ is the \textbf{output round} of $\Lambda$ at $w$. We write $(M, w) \models \Lambda$ if node $w$ accepts in some round $n$. 
%
%
%
For a program $\Lambda$ of message passing $\msc$ and model $M$, a \textbf{global communication round} is a computation round $n$ where at least one communication clause is broadcasting in at least one node of $M$. A program $\Lambda$ \textbf{outputs $\overline{p}$ at $w$ in global communication time} $m$ if the output round of $\Lambda$ at $w$ is $n$
and $m\leq n$ is the number of
global communication rounds in
the set $\{0,\dots , n\}$ of rounds.


Now, let $\mathcal{L}$ denote the set of all programs of all of our variants of $\msc$.
Let $\mathcal{C}$ denote the set of all $\mpc$s. 
For each $\Lambda\in\mathcal{L}$, we say that a Kripke model $M$ is \textbf{suitable} for $\Lambda$ if $M$ interprets (at least) all the proposition symbols that occur in $\Lambda$. For a message passing circuit for $(\Pi, \Delta)$, we say that $M$ is \textbf{suitable} for the circuit if the set of proposition symbols interpreted by $M$ is precisely $\Pi$ and the maximum out-degree of $M$ is at most $\Delta$.
Now, let $x$ and $y$ be
any members of $\mathcal{C}\cup \mathcal{L}$.
We say that $x$ and $y$ are (acceptance) \textbf{equivalent} if for each Kripke model $M$ that is suitable for both $x$ and $y$ and for each node $w$ in in the model, $x$ and $y$ produce the same output at $w$ or neither produce any output at all at $w$.
We say that $x$ and $y$ are \textbf{strongly equivalent} if for 
each $M$ suitable for $x$ and $y$ and for each node $w$ in the model and in every round $n$, the objects $x$ and $y$ produce the same appointed string $\overline{r}_n$ at $w$. We also define a special equivalence notion for $\mpmsc$ and $\mpc$. We say that a program $\Lambda$ of $\mpmsc$ and circuit $C$ are \textbf{strongly communication equivalent} if for each $M$ suitable for both $\Lambda$ and $C$ and for each node $w$ in the model, the appointed sequence $S$ of the circuit is precisely the sequence $(\overline{r}_j)_{j\in \{0\}\cup G}$ of appointed strings of the program, where $G\subseteq\mathbb{Z}_+$ is the set of global communication rounds $n$ of the program. 
%
%
%
%
%
%
Finally, the \textbf{length} of a program of any variant of $\msc$ is the number of \emph{occurrences} of proposition symbols, head predicates, and operators $\top$, $\neg$, $\wedge$, $\Diamond$, $\Diamond_i$.
The modal depth $\md(\Lambda)$ of a program $\Lambda$ is the maximum modal depth of its rule bodies (iteration
and terminal).

\section{Linking MPMSC to message passing circuits}\label{Descriptive characterizations}


To obtain the desired descriptive characterizations, we begin by translating $\mpc$s to  $\mpmsc$.

\subsection{From MPC to MPMSC}\label{sec:mpc_to_mpmsc}

To ultimately translate $\mpc$s to $\mpmsc$, we will first show how to simulate the evaluation of a standard Boolean circuit with a diamond-free program of $\msc$.
Let $C$ be a circuit of depth $d$ with $\ell$ input and $k$ output gates. Let $L$ denote any of the variants of $\msc$. Fix schema variables $I_1,\dots I_{\ell}$ and $O_1,\dots, O_k$, with both sequences given here in the order $<^{\mathrm{VAR}}$.
Consider a program $\Lambda$ of $L$ with the following properties.
\begin{enumerate}
\item
The set of schema variables of $\Lambda$ contains (at least)
the variables $I_1,\dots I_{\ell}, O_1,\dots
, O_k$. 
\item
The program has no diamond operators ($\Diamond$
or $\Diamond_i$) and contains no proposition symbols.
\item
The terminal clause for
each schema variable $X$ is $X(0) \coloneq\ \bot$.
%
%
%
%
%
%
%
%
%
%
\end{enumerate}
Let $P\colon\{\bot,\top\}^{\ell}\rightarrow \{\bot,\top\}^k$ be
the function defined as follows. For each input
$(x_1,\dots , x_{\ell}) \in \{\bot,\top\}^{\ell}$ to $P$, modify $\Lambda$ to a new program $\Lambda(x_1,\dots , x_{\ell})$ by changing each terminal clause $I_i(0) \coloneq \bot$ to
$I_i(0)\coloneq x_i$. Let $(y_1,\ldots y_k)\in \{\bot,\top\}^k$ be the tuple of truth values of the $d$th iteration formulas $O^d_1,
\ldots , O^d_k$, where we recall that $d$ is the depth of our circuit $C$. Then we define $P(x_1,\ldots , x_{\ell}) := (y_1,\ldots , y_k)$.
Now, if $P$ defined this way is
identical to the function computed by $C$, then $\Lambda$ \textbf{simulates}
the circuit $C$ (w.r.t. $I_1,\dots, I_{\ell}$
and $O_1,\dots , O_k$).

\begin{lemma}\label{circuitsimulationlemma} 
For each circuit $C$ of size $m$ 
and constant fan-in, there exists a
program of $L$ of size $\ordo(m)$ that simulates $C$, 
where $L$ is any of the variants of $\msc$. 
\end{lemma}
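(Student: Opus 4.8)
The plan is to build the simulating program $\Lambda$ directly from the circuit graph, introducing one schema variable per gate and one iteration clause that mimics the Boolean operation at that gate. Fix an enumeration $G_1,\dots,G_m$ of the gates of $C$, with $G_1,\dots,G_\ell$ the input gates in their prescribed order and some subset being the output gates in their prescribed order; we will use $I_i$ for the variable attached to the $i$th input gate and $O_j$ for the variable attached to the $j$th output gate, and fresh variables $X_G$ for all remaining gates. The terminal clause for every variable is $X(0)\coloneq\ \bot$, as required. For an input gate $G_i$ the iteration clause is $I_i \coloneq I_i$ (it simply keeps whatever truth value was installed by the modified terminal clause $I_i(0)\coloneq x_i$). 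For a non-input gate $G$ with children (in the inverse-tree sense) $G_{j_1},\dots,G_{j_r}$ and variables $Y_1,\dots,Y_r$, the iteration clause is $X_G \coloneq Y_1 \land \dots \land Y_r$ if $G$ is an $\land$-gate, $X_G \coloneq \neg(\neg Y_1 \land \dots \land \neg Y_r)$ if $G$ is an $\lor$-gate (using only $\neg,\land$ so as to stay within the stated $\msc$ syntax, or just $Y_1\lor\dots\lor Y_r$ if we allow $\lor$ as an abbreviation), and $X_G \coloneq \neg Y_1$ if $G$ is a $\neg$-gate. Zero-fan-in $\land$-gates get the clause $X_G\coloneq \top$ and zero-fan-in $\lor$-gates get $X_G\coloneq \bot$, matching the $\top/\bot$ conventions stated for circuits. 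No diamonds and no proposition symbols occur, so conditions (1)--(3) in the definition of ``simulates'' are met, and the construction is uniform across all variants $L$ of $\msc$ since it uses only the common fragment.

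The core of the argument is then a correctness claim proved by induction on gate height: for every gate $G$ of height $h$ and every input tuple $(x_1,\dots,x_\ell)$, the $h$th iteration formula of the variable $X_G$ (equivalently $I_i$ or $O_j$) evaluated on any pointed model has the truth value equal to the Boolean value computed at $G$ when $C$ is fed $(x_1,\dots,x_\ell)$, treating $\bot=0,\top=1$. The base case $h=0$ is immediate: an input gate $G_i$ has $I_i^{0} = x_i$ from the modified terminal clause, and a zero-fan-in $\land$- or $\lor$-gate has height $0$ with $X_G^0=\top$ or $\bot$ respectively. For the inductive step, a gate $G$ of height $h+1$ has all its children of height at most $h$; one further observes that once an iteration formula of a variable stabilizes at a given value it stays there (the input-gate clauses are literally $I_i\coloneq I_i$, and any gate's value is a fixed Boolean function of its children's values, which have stabilized by height $h$), so $X_G^{h+1}$ is obtained by applying the gate's Boolean operation to the $Y_t^{h}$, which by the induction hypothesis are the correct child values. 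Since the depth $d$ of $C$ is at least the height of every gate, in particular of every output gate, the tuple $(O_1^{d},\dots,O_k^{d})$ equals the output of $C$ on $(x_1,\dots,x_\ell)$, which is exactly the definition of $P$; hence $P$ coincides with the function computed by $C$ and $\Lambda$ simulates $C$.

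It remains to check the size bound. The program has exactly $m$ iteration clauses and $m$ terminal clauses; each terminal clause contributes a constant number of symbols ($\bot$ is $\neg\top$ or similar, constant anyway), and each iteration clause for a gate of fan-in $r$ contributes $O(r)$ occurrences of head predicates and connectives. Since the fan-in is bounded by the constant $c$, each clause has size $O(1)$, so the total length is $O(m)$. Thus the constructed $\Lambda$ has size $\ordo(m)$, as claimed.

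I expect the main obstacle to be stating the induction invariant carefully enough that it survives the fact that iteration formulas of different variables ``settle'' at different rounds: a naive ``$X_G^{n}$ is correct for all $n\ge h(G)$'' must be justified by the stabilization remark, and one has to be a little careful with the input-gate clauses $I_i\coloneq I_i$ and with the $\lor$-gate encoding (if we insist on the bare $\msc$ grammar, the rewriting $\lor$ in terms of $\neg,\land$ must be confirmed not to blow up the constant for bounded fan-in, which it does not). A secondary minor point is the handling of zero-fan-in $\land$/$\lor$-gates, which are input gates in the circuit's graph-theoretic sense but must be given the constant clauses $X_G\coloneq\top$ / $X_G\coloneq\bot$ rather than a variable-carrying input clause; keeping the two notions of ``input gate'' straight is the only real bookkeeping hazard.
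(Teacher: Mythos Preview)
Your proof is correct and takes a somewhat different route from the paper. The paper first pads the circuit with identity gates ($\land$-gates of fan-in $1$) so that every output gate sits at height exactly $d$, and only then assigns one variable per gate with the obvious clauses; you skip the padding and instead prove directly the stabilization invariant that $X_G^n$ equals the circuit value at $G$ for all $n \geq h(G)$. Your route is arguably the cleaner one: once stabilization is established, $h(O_j) \leq d$ already yields $O_j^d$ correct, so the padding is redundant; and padding $k$ output gates up to height $d$ adds $\sum_j (d - h_j)$ identity gates, which is not obviously $O(m)$ in general. A stabilization argument of your kind is in any case implicitly needed even in the padded version, since padding only the outputs does not force every gate's children to sit at height exactly one less.

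One small slip: in your base case you say a zero-fan-in $\land$-gate has height $0$ with $X_G^0 = \top$, but every terminal clause sets the round-$0$ value to $\bot$; the iteration clause forces $X_G$ to $\top$ only from round $1$ on. Moreover, such a gate has no incoming path from any input gate, so its height is not well-defined under the paper's convention. This is harmless whenever $d \geq 1$ (and the paper treats $d = 0$ as a separate degenerate case), but you should either assign such gates effective height $1$ in your induction or dispose of them in a one-line special case.
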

\begin{proof}
Assume first that the depth $d$ of $C$ is at least $1$. 
We modify $C$ so that the height of each output gate is 
precisely $d$ by using identity gates, that is, $\wedge$-gates
with fan-in $1$.
Then we define a schema variable for each gate of the obtained circuit $C'$. The variables of the input gates are $I_1,\dots , I_{\ell}$ while those of the output gates 
are $O_1,\dots , O_{k}$.
Let $X$ be a schema variable 
for a $\wedge$-gate $G$ of $C'$.
 We define a corresponding terminal clause $X(0) \coloneq \bot$ and iteration clause $X \coloneq Y_1\wedge \cdots \wedge Y_j$,
where $Y_1,\dots , Y_j$ are the variables for the gates that connect to $G$. Similarly, for a variable $X'$ for a disjunction gate $G$', we define the rules $X'(0) \coloneq \bot$ and $X' \coloneq Y_1 '\vee \cdots \vee Y_j'$ where $Y_1',\dots , Y_j'$ are the variables for the gates connecting to $G'$. For negation, we define $X''(0) \coloneq \bot$ and $X'' \coloneq \neg Y$, where $Y$ is the variable for the connecting gate. 
We let the terminal clauses for the head predicates $I_i$ relating to input gates be $I_i(0) \coloneq \bot$. This choice is irrelevant, as when checking if a program simulates a circuit, we modify the terminal rules to match input strings. The related iteration clause 
is $I_i \coloneq I_i$.

Finally, in the extreme case where the depth of $C$ is $0$, so each 
input gate is also an output gate, we define the program
with the head predicate
sequence $(I_{1},\dots , I_{\ell}) = (O_{1},\dots , O_{k})$
and such that the (terminal and 
iteration) clause for each head predicate $I_i = O_i$
is $I_i \coloneq \bot$.
\end{proof}

\begin{theorem}\label{thr:mpc_to_mpmsc}
Given an $\mpc$ for $(\Pi, \Delta)$ of size $m$, we can construct a communication equivalent $(\Pi,\Delta)$-program of $\mpmsc$. Supposing a constant bound $c$ for the fan-in of\, $\mpc$s, the size of the program is linear in the size of the circuit. 
\end{theorem}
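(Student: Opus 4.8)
The plan is to unwind one full round of the $\mpc$ computation into a bounded-modal-depth $\mpmsc$-program by combining Lemma~\ref{circuitsimulationlemma} with the $\Diamond_i$-operators that give access to the states of the neighbours. First I would isolate the combinational part: the $\mpc$ $C$ for $(\Pi,\Delta)$ computes a function $\{0,1\}^{|\Pi|+k(\Delta+1)}\to\{0,1\}^k$, and Lemma~\ref{circuitsimulationlemma} (applied with $L=\mpmsc$) yields a diamond-free, proposition-free $\mpmsc$-program $\Lambda_C$ of size $\ordo(m)$ simulating $C$ w.r.t.\ designated input variables $I_1,\dots,I_{|\Pi|+k(\Delta+1)}$ and output variables $O_1,\dots,O_k$. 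The key point is that $\Lambda_C$ computes the circuit function in $d$ internal steps using only standard iteration clauses of modal depth $0$, so the only place modal depth will appear is at the interface where we feed in the neighbours' states.

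Next I would build the actual $\mpmsc$-program $\Lambda$. Its schema variables consist of the variables of $\Lambda_C$ together with $k$ designated \emph{state predicates} $S_1,\dots,S_k$ (the $O_i$'s of $\Lambda_C$ can play this role directly). The attention/print predicates of $\Lambda$ are the state predicates whose indices lie in $A$, resp.\ $P$. The wiring is: the first $|\Pi|$ input variables $I_1,\dots,I_{|\Pi|}$ get iteration clauses $I_j \coloneq p_j$ (the $j$th proposition of $\Pi$ in order $<^{\mathrm{PROP}}$) — these are communication-depth-$0$ clauses reading the local input; the next block of $k$ input variables gets $I_{|\Pi|+i}\coloneq S_i$, reading the node's own current state; and for block $\ell\in\{1,\dots,\Delta\}$ the input variables get $I_{|\Pi|+\ell k+i}\coloneq \Diamond_\ell S_i$, reading the $i$th state-bit of the $\ell$th neighbour, which correctly evaluates to $\bot$ when the out-degree is below $\ell$ (matching the $0^k$-padding convention in the $\mpc$ semantics). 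All of these have modal depth $\le 1$; the internal gate-simulating clauses of $\Lambda_C$ have modal depth $0$; the terminal clauses are all $\coloneq\bot$. Hence $\Lambda$ meets the $\mpmsc$ modal-depth restrictions.

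The main obstacle is the \emph{timing mismatch}: one $\mpc$ round corresponds to $\ldots$ one global communication round of $\Lambda$, but inside that communication round the circuit-simulation subprogram needs $d$ internal iteration steps to settle, while $\Diamond_\ell S_i$ reads neighbours' states ``as of the start of the round.'' Because $\Lambda_C$'s internal clauses are proposition-free with $\bot$ terminal clauses and depend only on the $I_j$'s, the $I_j$'s must themselves be held fixed across those $d$ steps — so I would latch them: give each $I_j$ an iteration clause that copies its own previous value except on rounds where a fresh read is due. Concretely the cleanest device is to interleave using conditional rules (this is exactly why $\mpmsc$ allows them): on the ``read'' tick the $I_j$-clauses fire their $\Diamond_\ell S_i$/$p_j$/$S_i$ bodies and are the broadcasting communication clauses; on the next $d$ ``compute'' ticks the $I_j$'s hold and the $O_i$-subprogram advances; then the $S_i$'s update from the settled $O_i$'s. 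One then checks that the global communication rounds of $\Lambda$ are exactly the ``read'' ticks, and that the sequence of appointed strings of $\Lambda$ restricted to these rounds reproduces the appointed sequence of $C$ — i.e.\ $\Lambda$ is communication equivalent to $C$ (indeed strongly communication equivalent, modulo the round-$0$ bookkeeping already built into the definition). The size bookkeeping is routine: $\Lambda_C$ contributes $\ordo(m)$, the interface contributes $\ordo(\Delta k+|\Pi|)$ occurrences, and $k\le m$, so under constant fan-in the total is $\ordo(m)$, as claimed. I would end by verifying the round-$0$ configuration: $\Lambda$ must produce at round $0$ the string $C(\overline{t}_w\cdot 0^{k(\Delta+1)})$, which follows because at round $0$ all state predicates are $\bot$ (terminal clauses), so the neighbour/own-state input blocks are $0^{k(\Delta+1)}$ and only the $p_j$-block is live.
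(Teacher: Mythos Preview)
Your proposal is correct and takes essentially the same route as the paper: invoke Lemma~\ref{circuitsimulationlemma} for the combinational part, wire the input predicates to $p_j$, $S_i$, and $\Diamond_\ell S_i$, and orchestrate a $(d{+}1)$-round cycle (one read tick, $d$ compute ticks) via conditional rules---the paper merely makes the clock explicit as head predicates $T_0,\dots,T_{d(C)}$ and conditions each gate predicate $X_G$ on $T_{h(G)}$ so that layers fire one at a time rather than free-running and settling. One small slip in your final paragraph: with all terminal clauses set to $\bot$, the program's round-$0$ appointed string is the zero string, not $C(\overline t_w\cdot 0^{k(\Delta+1)})$; that value only appears after the first block of compute steps, so the match with the circuit's appointed sequence begins at the first communication round rather than literally at program round~$0$.
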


\begin{proof}
    Let $C$ be an $\mpc$ for $(\Pi, \Delta)$ of state length $k$. We will first explain informally how our program $\Lambda_C$ for circuit $C$ will work. The program $\Lambda_C$ uses $k$ head predicates to simulate the state of the circuit. We will use Lemma \ref{circuitsimulationlemma} to build our program, and the operators $\Diamond_i$ will be used to simulate receiving messages of neighbours. The program $\Lambda_C$ computes in repeated periods of $d+1$ rounds, where $d = d(C)$ is the depth of $C$. Simulating the reception of neighbours' messages takes one round, and the remaining $d$ rounds go to simulating the evaluation of the circuit.

We then present the formal proof. 
First, we define a simple clock of length $d(C) + 1$ that will be used for timing the program correctly. The clock consists of the head predicates $T_0, T_1, \ldots, T_{d(C)}$ and the following rules: $T_0 (0) \coloneq \bot$, $T_0 \coloneq T_{d(C)}$, $T_1 (0) \coloneq \top$, $T_1 \coloneq T_0$, and for $i \in [d(C)-1]$, we have $T_{i+1} (0) \coloneq \bot$ and $T_{i+1} \coloneq T_i$. In every round, precisely one of the head predicates $T_i$ is true and the others are false. In round $0$, the only true predicate is $T_1$, and in round $i \in [d(C)-1]$, the only true predicate is $T_{i+1}$. After $d(C)$ rounds, $T_0$ is true, and in the next round the clock starts over again.

Let $\Gamma_C$ be a program simulating the internal evaluation of the circuit $C$ as given in the proof of Lemma \ref{circuitsimulationlemma}. We will rewrite some of the iteration clauses of $\Gamma_C$ as follows. If $X_G$ is a head predicate corresponding to a non-input gate $G$ in $\Gamma_C$, then we rewrite the corresponding iteration clause $X_G \coloneq \varphi$ to $X_G \coloneq_{T_{h(G)}} \varphi; X_G$, where $h(G)$ is the height of the gate $G$.

For every $\ell \in [\, \abs{\Pi}\, ]$, we let $I_{\ell}^{\Pi}$ refer to the head predicate of $\Gamma_C$ that corresponds to the input gate of $C$ that reads the truth value of proposition $p_{\ell}$. For every $i \in [k]$ and $j \in [\Delta]_{0}$, we let $I_{(i,j)}$ refer to a head predicate of $\Gamma_C$ that corresponds to the input gate of $C$ that reads the $i$th value of the state string of the $j$th neighbour. The ``neighbour $0$'' refers to the home node. Next, we will rewrite the clauses with head predicates corresponding to input gates. For every $i \in [k]$, we let $O_i$ refer to the head predicate of $\Gamma_C$ that corresponds to the $i$th output gate of $C$. The terminal (respectively, iteration) clause for $I^{\Pi}_i$ is rewritten to be $I^{\Pi}_i(0) \coloneq p_i$ (resp., $I^{\Pi}_i \coloneq_{T_0} p_i; I^{\Pi}_i$). If $j \neq 0$, then the terminal (resp., iteration) clause for every $I_{(i,j)}$ is rewritten to be $I_{(i,j)}(0) \coloneq \bot$ (resp., $I_{(i,j)} \coloneq_{T_0} \Diamond_j O_i; I_{(i,j)}$). The terminal (resp., iteration) clause for every $I_{(i, 0)}$ is rewritten to be $I_{(i,0)}(0) \coloneq \bot$ (resp., $I_{(i,0)} \coloneq_{T_0} O_i; I_{(i,0)}$).

The attention and print predicates of $\Lambda_C$ are defined as follows. Let $A \subseteq [k]$ (resp., $P \subseteq [k]$) be the set of the attention (resp., print) bit positions in $C$. The attention (resp., print) predicates of $\Lambda_C$ are precisely the head predicates $O_j$, where $j \in A$ (resp., $j \in P$).

We analyze how $\Lambda_C$ works. The program executes in a periodic fashion in cycles with $d(C) + 1$ rounds in each cycle.
In round $0$, the program $\Lambda_C$ reads the proposition symbols and records the local input with the head predicates $I_i^{\Pi}$ whose truth values will remain constant for the rest of the computation. 
Also, $T_1$ becomes true in round $0$. 
In round $1$, the head predicates corresponding to gates at height one are active and thus updated. (Note that the predicates $I_{(i,j)}$ for input gates are inactive because $T_0$ is false. They stay false in round $1$ because they become false in round $0$ and the backup does not affect the truth value.)
From height one, the execution then continues to predicates for gates at height two, and so on. In round $d(C)$, the head predicates for output gates $O_i$ are active. The program also prints if an attention predicate is true. In round $d(C) + 1$, the predicate $T_0$ is true. Thus, the input gate predicates $I_{(i,j)}$ are active, and the cycle starts again as they update using diamonds $\Diamond_i$. We obtain truth values that correspond to an input string to our circuit. We then continue by simulating height one in round $d(C) +2$, continuing in further rounds all the way up to height $d(C)$ gates and finishing the second cycle of the execution of $\Lambda_C$. The subsequent cycles are analogous. Thus, our program $\Lambda_C$ simulates the circuit $C$ in a periodic fashion.

The program $\Lambda_C$ is equivalent and communication equivalent to $C$. The communication clauses in $\Lambda_C$ are synchronous, i.e., every node broadcasts in the same rounds. This is because simulating the circuit takes the same amount of time at every node. Thus, the output and global communication times of $\Lambda_C$ match with $C$. The translation is clearly linear in the size of $C$ (for constant fan-in $C$) due to Lemma \ref{circuitsimulationlemma}. 
\end{proof}

\subsection{From MPMSC to MPC}\label{sec:mpmsc_to_mpc}

Converting an $\mpmsc$-program to a circuit is, 
perhaps, easier. The state string of the constructed $\mpc$ essentially stores the values of the head predicates and proposition symbols used by the program and computes a new state string by simulating the program clauses. We begin with the following lemma that shows how to get rid of conditional rules. 
The proof---given in the appendix---is based on expressing the conditions of conditional clauses within a standard clause. The non-trivial part is to keep the translation linear. 

\begin{lemma}\label{lemma:cmsc_to_msc}
 Given a $\Pi$-program of $\cmsc$, we can construct a strongly equivalent $\Pi$-program of $\msc$ of size linear in the size of the $\cmsc$-program and with the same maximum modal depth in relation to both the terminal clauses and the iteration clauses. 
\end{lemma}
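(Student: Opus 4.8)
The plan is to eliminate conditional iteration clauses one head predicate at a time, or rather all at once, by rewriting each conditional clause $X \coloneq_{\varphi_1,\dots,\varphi_m} \psi_1;\dots;\psi_m;\chi$ into a standard $\msc$ clause whose body is the Boolean combination
\[
\bigvee_{k\leq m}\Bigl(\bigl(\bigwedge_{j<k}\neg\varphi_j\bigr)\land\varphi_k\land\psi_k\Bigr)\ \lor\ \Bigl(\bigl(\bigwedge_{j\leq m}\neg\varphi_j\bigr)\land\chi\Bigr)
\]
which is exactly the schema that the semantics of $\cmsc$ already uses to define $Y_i^{n+1}$. Replacing all the $Y_k$ in the conditions, consequences and backup by the head predicates themselves (rather than by iteration formulas) gives a legitimate $(\Pi,\cT)$-schema, and since the $n$th iteration formula of this standard clause is obtained by substituting $Y_k^{n}$ for $Y_k$ everywhere, it coincides syntactically with the formula $Y_i^{n+1}$ prescribed by the $\cmsc$ semantics. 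Hence the resulting $\msc$-program produces, round by round, the same truth value at each node for every head predicate, so it is strongly equivalent (taking the same print and attention predicates). The terminal clauses are left untouched, and the maximum modal depth of terminal clauses is unchanged; for the iteration clauses, the modal depth of the new body is the maximum of the modal depths of the $\varphi_j,\psi_j,\chi$, which is at most $\mdi$ of the original program, so the modal depth bound is preserved as claimed.

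The only real difficulty is the size bound. Written naively, the displayed formula repeats each condition $\varphi_j$ about $m-j$ times inside the nested conjunctions $\bigwedge_{j<k}\neg\varphi_j$, so a single conditional clause with $m$ conditions of total size $s$ blows up to size $\Theta(ms)$, which is quadratic, not linear, in the size of that clause; summing over all clauses this is not linear in the size of the program. The fix is to share subformulas: introduce, for each conditional clause, fresh auxiliary head predicates that compute the running conjunctions of negated conditions. Concretely, for a conditional clause with conditions $\varphi_1,\dots,\varphi_m$ I would add head predicates $N_1,\dots,N_m$ (private to this clause) with iteration clauses $N_1\coloneq\neg\varphi_1$ and $N_{k+1}\coloneq N_k\land\neg\varphi_{k+1}$, and terminal clauses $N_i(0)\coloneq\bot$; then the body of the rewritten clause for $X$ becomes
\[
\varphi_1\land\psi_1\ \lor\ \bigvee_{1<k\leq m}\bigl(N_{k-1}\land\varphi_k\land\psi_k\bigr)\ \lor\ \bigl(N_m\land\chi\bigr),
\]
in which each $\varphi_j$ now occurs only a bounded number of times. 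One subtlety to check: the auxiliary predicates $N_k$ are defined via iteration clauses, so $N_k^{n}$ equals the value of $\bigwedge_{j\leq k}\neg\varphi_j$ evaluated with $(n-1)$st iteration formulas, i.e. it is ``one round behind''; but that is harmless because in the target $\msc$-program we want the body of $X$'s clause to be a schema whose $n$th iteration formula is the desired $X^{n+1}$, and the conditions inside that schema are themselves meant to be read at stage $n$ — so I must make sure $N_{k-1}$ is used as a schema variable (stage $n$) exactly where $\bigwedge_{j<k}\neg\varphi_j^{\,n+1}$ would be wanted after substitution, which works out precisely because substitution of $Y^{n}$ for $Y$ turns the schema variable $N_{k-1}$ into $N_{k-1}^{n}=\bigwedge_{j<k}\neg\varphi_j^{\,n}$ evaluated at the substituted formulas. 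I would double-check this staging argument carefully, as it is the place where an off-by-one error could creep in; introducing the $N_k$ with the right terminal clauses and noting that the $\varphi_j$ are themselves basic $\msc$-schemata over $\cT$ (no head predicates of the new program beyond the old ones) makes the bookkeeping go through.

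With the shared-subformula construction, each conditional clause of size $s$ with $m$ conditions is replaced by $O(m)$ new clauses of total size $O(s)$ (the $N_k$-clauses have total size $O(\sum_j|\varphi_j|)=O(s)$, and the rewritten $X$-clause has size $O(s)$), and standard iteration clauses and terminal clauses are copied verbatim; summing over all clauses of the input $\cmsc$-program gives a total size linear in the size of that program. Finally I would remark that the construction adds no diamonds and does not touch existing ones, so it is also valid — with the same linear bound and modal-depth preservation — for the multimodal setting, which is what is needed later when this lemma is applied to $\mpmsc$. The print and attention predicates are unchanged, establishing strong equivalence and completing the proof.
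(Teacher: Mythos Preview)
Your overall plan is sound: replace each conditional iteration clause by a single standard $\msc$ clause whose body encodes the if-then-else semantics, leave terminal clauses and appointed predicates untouched, and observe that strong equivalence and the modal-depth bounds follow directly. You also correctly spot that the displayed formula from the $\cmsc$ semantics, written out verbatim, is quadratic because of the repeated prefixes $\bigwedge_{j<k}\neg\varphi_j$.

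The gap is in your fix. Your auxiliary predicates $N_k$ are defined by iteration clauses $N_1\coloneq\neg\varphi_1$ and $N_{k+1}\coloneq N_k\land\neg\varphi_{k+1}$, so in the new program
\[
N_2^{\,n+1}\ =\ N_1^{\,n}\land(\neg\varphi_2)^{n+1}\ =\ (\neg\varphi_1)^{n}\land(\neg\varphi_2)^{n+1},
\]
and more generally $N_k^{\,n+1}$ is a conjunction of $(\neg\varphi_1)^{n+2-k},\dots,(\neg\varphi_k)^{n+1}$, i.e.\ the conjuncts are evaluated at $k$ \emph{different} rounds. Thus your claim $N_{k-1}^{\,n}=\bigwedge_{j<k}\neg\varphi_j^{\,n}$ is false for $k\geq 3$; the ``off-by-one'' you worried about is really off-by-$(k-1)$. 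Since the $\varphi_j$ depend on the original head predicates, which change each round, this cannot be repaired by waiting. The resulting program is not strongly equivalent to the input.

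The paper avoids new head predicates altogether by writing the body of the replacement clause as the \emph{nested} if-then-else
\[
(\varphi_1\land\psi_1)\ \lor\ \bigl(\neg\varphi_1\land\bigl((\varphi_2\land\psi_2)\lor(\neg\varphi_2\land(\cdots\lor(\neg\varphi_m\land\chi)\cdots))\bigr)\bigr),
\]
defined recursively by $\theta_1=(\varphi_m\land\psi_m)\lor(\neg\varphi_m\land\chi)$ and $\theta_{j+1}=(\varphi_{m-j}\land\psi_{m-j})\lor(\neg\varphi_{m-j}\land\theta_j)$. This is logically equivalent to the expanded form but each $\varphi_j$ occurs exactly twice and each $\psi_j,\chi$ exactly once, so the size is linear in the size of the original conditional clause, with no timing issues.
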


It is easy to get the following corresponding result for $\mpmsc$ from the proof of the previous lemma, recalling that terminal clauses in $\mpmsc$ are always of modal depth zero.

\begin{corollary}\label{lemma:mpmsc_to_mmsc}
    Given a 
    $\Pi$-program of $\mpmsc$ of size $m$, we can construct a strongly equivalent $\Pi$-program of $\mmsc$ of size $\ordo(m)$ and with the same maximum modal depth of iteration clauses and with terminal clauses of modal depth zero. All diamond operators in the constructed program also appear in the original one. 
\end{corollary}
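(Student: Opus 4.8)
The plan is to obtain Corollary~\ref{lemma:mpmsc_to_mmsc} as essentially a specialization of Lemma~\ref{lemma:cmsc_to_msc}, observing that the construction in the proof of that lemma already has all the right properties when applied to a program of $\mpmsc$ rather than of $\cmsc$. First I would recall precisely what an $\mpmsc$-program is: it is a $(\Pi,\Delta)$-program with conditional rules, built from the operators $\Diamond_i$ rather than $\Diamond$, in which (1) terminal clauses and conditions have modal depth $0$, and (2) consequences, backups and bodies of standard iteration clauses have modal depth at most $1$. So an $\mpmsc$-program is just a ``$\cmsc$-like'' program over the multimodal syntax. The key point is that the translation of Lemma~\ref{lemma:cmsc_to_msc}, which removes conditional rules by folding each conditional iteration clause $X \coloneq_{\varphi_1,\dots,\varphi_n} \psi_1;\dots;\psi_n;\chi$ into a single standard iteration clause whose body is the big disjunction
\[
\bigvee_{k\leq n}\Bigl(\Bigl(\bigwedge_{j<k}\neg\varphi_j\Bigr)\land\varphi_k\land\psi_k\Bigr)\ \lor\ \Bigl(\Bigl(\bigwedge_{j\leq n}\neg\varphi_j\Bigr)\land\chi\Bigr),
\]
does not care whether the underlying modality is $\Diamond$ or $\Diamond_i$: it only uses $\neg$ and $\land$ and $\lor$ to combine the conditions and consequences, and the diamonds inside $\varphi_j,\psi_k,\chi$ are copied verbatim. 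Hence running the same construction on an $\mpmsc$-program produces an $\mmsc$-program, and it introduces no new diamonds, so the final claim ``all diamond operators in the constructed program also appear in the original one'' is immediate.

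Next I would verify the modal-depth bookkeeping. In the folded body displayed above, the conditions $\varphi_j$ have modal depth $0$ by the $\mpmsc$ restriction, so the conjunctions $\bigwedge_{j<k}\neg\varphi_j$ contribute nothing to modal depth; thus the modal depth of the folded iteration clause equals the maximum modal depth among the consequences $\psi_k$ and the backup $\chi$, which is at most $1$. Standard iteration clauses are left untouched and already have modal depth at most $1$. Therefore every iteration clause of the output program still has modal depth at most $1$, matching the maximum modal depth of iteration clauses of the input (if the input had a communication clause, the max is $1$ on both sides; if not, it is $0$ on both sides). Terminal clauses: in Lemma~\ref{lemma:cmsc_to_msc} terminal clauses are carried over with the same maximum modal depth, and in $\mpmsc$ that depth is $0$, so the output terminal clauses have modal depth $0$, as required. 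This is why the statement can be sharpened from ``same maximum modal depth'' (as in the lemma) to the explicit ``modal depth of iteration clauses preserved, terminal clauses of modal depth zero.''

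For the size bound and for strong equivalence I would simply invoke Lemma~\ref{lemma:cmsc_to_msc}: its proof gives a linear-size translation $\ordo(m)$ preserving strong equivalence, and nothing in that argument is special to the single-diamond syntax, so the same estimate and the same equivalence transfer to the multimodal setting verbatim. Concretely, one re-reads the appendix proof of Lemma~\ref{lemma:cmsc_to_msc} with ``$\Diamond$'' replaced everywhere by ``$\Diamond_i$ (for the appropriate $i$)'' and checks that each step goes through unchanged; the attention and print predicates are inherited, and the semantics of $\mpmsc$ are defined ``as for $\cmsc$'' (with $\Diamond_i$ in place of $\Diamond$), so the correctness argument is literally the same computation round by round. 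I do not expect a genuine obstacle here; the only thing requiring care — and the part I would state explicitly — is the modal-depth accounting above, namely that folding in the modal-depth-$0$ conditions cannot raise the modal depth of a clause, which is exactly what lets terminal clauses stay at depth $0$ and iteration clauses stay at depth $\le 1$.
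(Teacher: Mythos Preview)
Your proposal is correct and follows the same approach as the paper, which simply remarks that the corollary ``is easy to get \dots\ from the proof of the previous lemma, recalling that terminal clauses in $\mpmsc$ are always of modal depth zero.'' Your explicit modal-depth bookkeeping (conditions have depth $0$, so folding cannot raise the depth of an iteration clause past that of its consequences/backup) is exactly the extra observation needed. One small remark: the big disjunction you display is the \emph{semantic} unfolding from the definition of $\cmsc$, not the nested schema $\theta_{(i,j)}$ actually built in the appendix proof of Lemma~\ref{lemma:cmsc_to_msc}; the former is quadratic in the number of conditions, and it is the latter that gives the $\ordo(m)$ bound you invoke --- but since you defer to the lemma for size, this does not affect your argument.
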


Let $d \in \mathbb{Z}_+$. Let $x$ be either a proposition
symbol or a head predicate.
The symbol $x$ is \textbf{$d$-omnipresent} in a program $\Lambda$ of $\mpmsc$ if the following
conditions hold.
\begin{enumerate}
    \item 
    The symbol $x$ appears in some body of an iteration clause such
    that it does \emph{not} appear in the scope of
    any diamond.
    \item
For each $i\in [d]$, the symbol $x$ appears in some body of an
iteration clause within the scope of $\Diamond_i$. 
\end{enumerate}

A proposition symbol $p$ is \textbf{weakly $d$-omnipresent} in $\Lambda$ if the \emph{disjunction} of the following conditions holds.
\begin{enumerate}
    \item
    The proposition $p$ does not appear in any iteration clause of $\Lambda$.
    \item
    The proposition symbol $p$ is $d$-omnipresent in $\Lambda$.
\end{enumerate}

Before giving a translation from an $\mpmsc$-program to $\mpc$, we prove the following lemma.

\begin{lemma}\label{lem:perform}
    Given a $\Pi$-program $\Lambda$ of $\mpmsc$ and $d \in \Z_+$, we can construct a strongly equivalent $\Pi$-program of $\mpmsc$ with the following conditions.
    \begin{enumerate}
        \item Every proposition symbol in $\Pi$ appears in the program.
        \item Every proposition symbol in $\Lambda$ is weakly $d$-omnipresent.
        \item Every head predicate of $\Lambda$ is $d$-omnipresent.
        \item The size of the program is $\ordo(d\,\abs{\Lambda} + \abs{\Pi})$.
    \end{enumerate}
\end{lemma}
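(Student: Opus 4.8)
The plan is to leave the given program $\Lambda$ entirely untouched and to adjoin a single fresh ``beacon'' clause whose body is a propositional validity but which \emph{syntactically} displays every head predicate of $\Lambda$ and every proposition occurring in $\Lambda$ in exactly the places required by $d$-omnipresence. For a schema variable or proposition symbol $x$, set
\[
\beta_d(x)\ :=\ (x\vee\neg x)\ \wedge\ \bigwedge_{i=1}^{d}\bigl(\Diamond_i x\vee\neg\Diamond_i x\bigr).
\]
Each $\beta_d(x)$ is true at every node of every Kripke model with identifiers --- if the out-degree of the node is below $i$ then $\Diamond_i x$ is simply false, so $\Diamond_i x\vee\neg\Diamond_i x$ is still true --- it has modal depth exactly $1$, and it contains $x$ once outside the scope of any diamond and once inside $\Diamond_i$ for every $i\in[d]$. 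I would introduce one new head predicate $S$, ranked after all head predicates of $\Lambda$ in $<^{\mathrm{VAR}}$, with the terminal clause $S(0)\coloneq\bot$ and the standard iteration clause
\[
S\ \coloneq\ \bigwedge_{x\in H}\beta_d(x)\ \wedge\ \bigwedge_{p\in\Pi}(p\vee\neg p),
\]
where $H$ is the finite set consisting of $S$, all head predicates of $\Lambda$, and all proposition symbols occurring in $\Lambda$. Let $\Lambda'$ be $\Lambda$ extended by these two clauses, with the print and attention predicates of $\Lambda'$ those of $\Lambda$.

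The verifications are then short. The new terminal clause has modal depth $0$ and the new iteration clause has modal depth $1$, so $\Lambda'$ is a bona fide $\Pi$-program of $\mpmsc$ (now possibly with larger diamond subindices, which is harmless), and the numbers of terminal, conditional and standard iteration clauses still match. Condition~1 holds since every $p\in\Pi$ now occurs in $\Lambda'$; conditions~2 and~3 hold because every head predicate of $\Lambda$ and every proposition occurring in $\Lambda$ lies in $H$, so the body of $S$ witnesses $d$-omnipresence, and $d$-omnipresence implies weak $d$-omnipresence. For condition~4, note $\abs{H}=\ordo(\abs{\Lambda})$ since every head predicate and every proposition occurrence of $\Lambda$ already contributes to $\abs{\Lambda}$; each $\beta_d(x)$ has size $\ordo(d)$, the conjuncts $p\vee\neg p$ contribute $\ordo(\abs{\Pi})$ altogether, and $\Lambda$ is copied verbatim, whence $\abs{\Lambda'}=\ordo(d\abs{\Lambda}+\abs{\Pi})$. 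Finally, strong equivalence: the body of $S$ is a propositional validity, so $S^n$ is equivalent to $\bot$ for $n=0$ and to $\top$ for $n\ge 1$, and since no clause of $\Lambda$ is altered and none of them mentions $S$, an induction on $n$ shows that each original iteration formula $Y^n$ is computed in $\Lambda'$ exactly as in $\Lambda$; placing $S$ last keeps the bit positions of the old head predicates (hence the appointed substring) intact, so the appointed string produced at each node in each round is unchanged.

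I expect the only genuine difficulty to be the linear size bound in condition~4: padding the individual clauses of $\Lambda$ with the mandated occurrences would cost roughly $\ordo(d\abs{\Lambda}^2)$, and it is the idea of concentrating all of them in one shared beacon clause that brings this down to $\ordo(d\abs{\Lambda}+\abs{\Pi})$. One also has to stay within the $\mpmsc$ prohibition on nesting diamonds, but this is automatic because every conjunct $\Diamond_i x\vee\neg\Diamond_i x$ has modal depth exactly one. A last bookkeeping point is the deliberate asymmetry between the ``$d$-omnipresent'' demanded of head predicates and the ``weakly $d$-omnipresent'' demanded of propositions, together with the fact that conditions~2 and~3 speak only of the symbols of the \emph{original} $\Lambda$, so neither the new predicate $S$ nor the propositions of $\Pi$ absent from $\Lambda$ create additional obligations.
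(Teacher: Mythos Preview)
Your proof is correct and achieves the stated lemma, but it proceeds differently from the paper. The paper does not introduce any new head predicate; instead it picks one existing iteration-clause body (or backup) and conjoins to it, for each head predicate $Y$ and for each proposition already appearing in an iteration clause, the tautology $(Y\vee\neg Y)\wedge\bigwedge_{i\in[d]}\Diamond_i(Y\vee\neg Y)$. Propositions in $\Pi$ that do not occur in $\Lambda$ are injected into a \emph{terminal} clause via $(p\vee\neg p)$. Your approach is more modular---the original clauses are untouched and all bookkeeping is concentrated in one fresh beacon predicate---and the size accounting is correspondingly more transparent.

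Two small remarks. First, your aside that padding ``the individual clauses'' would cost $\ordo(d\abs{\Lambda}^2)$ attacks a strawman: the paper also adds the tautological conjuncts only once per symbol, not once per occurrence, so its in-place variant already yields $\ordo(d\abs{\Lambda}+\abs{\Pi})$. Second, there is a subtle difference in what the two constructions deliver beyond the literal statement. Because the paper places the propositions of $\Pi\setminus\Lambda$ into a terminal clause, they remain absent from all iteration clauses and are therefore \emph{vacuously} weakly $d$-omnipresent; in your construction these same propositions land in the beacon iteration clause via $(p\vee\neg p)$ but not inside any $\Diamond_i$, so they fail weak $d$-omnipresence. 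Condition~2 only speaks of propositions ``in $\Lambda$'', so the lemma as stated is unaffected, but the paper's stronger side effect is exactly what the subsequent circuit construction in Theorem~\ref{thr:mpmsc_to_mpc} exploits (every proposition in $\Pi'$ must have a corresponding input gate for each neighbour). If you want that stronger conclusion too, a one-line fix suffices: either move the conjunction $\bigwedge_{p\in\Pi}(p\vee\neg p)$ into the terminal clause $S(0)$, or simply replace it by $\bigwedge_{p\in\Pi}\beta_d(p)$, which still costs only $\ordo(d\abs{\Pi})\subseteq\ordo(d\abs{\Lambda}+\abs{\Pi})$ when $d$ is the parameter that will later be taken equal to $\Delta$.
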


\begin{proof}
For each $p \in \Pi$ that does \emph{not} appear in $\Lambda$, we add $p$ to the program by replacing some body $\psi$ of a 
terminal clause by $\psi \land (p \lor \neg p)$. Thus, we ultimately get a program where each $p\in \Pi$ appears.

For each head predicate $Y$ that is not $d$-omnipresent in $\Lambda$, we modify the program by replacing some schema $\psi$ of a non-conditional iteration clause (or the backup of a conditional iteration clause) by 
$$\psi \land (Y \lor \neg Y) \land \bigwedge_{i \in [d]} \Diamond_i (Y \lor \neg Y).$$

Let $q$ be a proposition symbol in $\Lambda$ that is not weakly $d$-omnipresent in $\Lambda$. If $q$ does appear in an iteration clause of the program, then we make it $d$-omnipresent in the same way as we did for head predicates. If $q$ does not appear in an iteration clause of the program, then we do nothing.

Now, our program is ready. Clearly, it is strongly equivalent to $\Lambda$ and has the size \[\ordo((d+1)\, \abs{\Lambda} + \abs{\Pi}) = \ordo(d\,\abs{\Lambda} + \abs{\Pi}),\]   
as wanted.
\end{proof}

Next, we prove another lemma that is used as a tool to translate an $\mpmsc$-program into an $\mpc$-program. The lemma can be used to simulate two circuits with one.

\begin{lemma}\label{lem:two_circuits}
    Let $C_0$ and $C_1$ be circuits with $k_0$ and $k_1$ input gates and with the same number of output gates. From $C_0$ and $C_1$, we can construct a circuit $C$ with $k_0 + k_1 + 1$ input gates with the following properties.
    Let $\overline{s}_0$ be a $k_0$-bit input string and $\overline{s}_1$ a $k_1$-bit input string. Let $\overline{s} = \overline{s}_0 b \overline{s}_1$, where $b \in \{0,1\}$. 
    \begin{enumerate}
        \item If $b = 0$, and $\overline{t}_0$ is the output of $C_0$ with input $\overline{s}_0$, then $C$ outputs $1\, \overline{t}_0$ with input $\overline{s}$.
        \item If $b = 1$, and $\overline{t}_1$ is the output of $C_1$ with input $\overline{s}_1$, then $C$ outputs $1\, \overline{t}_1$ with input $\overline{s}$.
        \item The first $k_0$ input gates of $C$ are equivalent to the input gates of $C_0$ and the last $k_1$ input gates of $C$ are equivalent to the input gates of $C_1$.
    \end{enumerate}
\end{lemma}

\begin{proof}
    Before giving a formal construction, we give an example (drawn below) where $C_0$ has $3$ input gates, $C_1$ has $5$ input gates, and both have $2$ output gates (drawn in gray). In the figure, the symbols $C_0$ and $C_1$ correspond to the circuits, and only the input and output gates are drawn to illustrate the construction. The blue $\lor$-gate (drawn in a double circle) is the output gate that outputs the bit $1$ which begins the output string of $C$. The red $\lor$-gates at the top of the picture are new output gates that mimic the output of $C_0$ if $G_{\text{in}}$ gets $0$ as input; otherwise, the red $\lor$-gates mimic the output of $C_1$. The input gate $G_{\mathit{in}}$ is the one that takes the bit $b$ in the input $\overline{s}_0b\overline{s}_1$ to $C$.

    \begin{center}
    \begin{tikzpicture}[scale=0.8, every node/.style={scale=0.8}, nodes={draw, circle}, <-]
    \node(I1) {};
    \node[right of=I1](I2) {}; 
    \node[right of=I2](I3){};

    \node[right of=I3, node distance=3cm](G) {$G_{\text{in}}$};

    \node[right of=G, node distance=3cm](i1) {};
    \node[right of=i1](i2) {}; 
    \node[right of=i2](i3){};
    \node[right of=i3](i4){};
    \node[right of=i4](i5){};

    \node[above of=I2, node distance=2cm, draw=none](C0) {$C_0$};
    \node[above of=i3, node distance=2cm, draw=none](C1) {$C_1$};
    \node[above left of=G, node distance=1.5cm](N1) {$\neg$};
    
    \node[above left of=C0, node distance=2cm, fill=black!20](O1) {};
    \node[above right of=C0, node distance=2cm, fill=black!20](O2) {};
    \node[above left of=C1, node distance=2cm, fill=black!20](o1) {};
    \node[above right of=C1, node distance=2cm, fill=black!20](o2) {};

    \node[above left of=o1, node distance=2cm](l1) {$\land$};
    \node[above right of=O1, node distance=2cm] (L1) {$\land$};
    
    \node[above left of=o2, node distance=2cm](l2) {$\land$};
    \node[above right of=O2, node distance=2cm] (L2) {$\land$};

    \node[above of=l1, node distance=2cm, fill=red!20] (d1) {$\lor$};
    \node[above of=l2, node distance=2cm, fill={red!20}] (d2) {$\lor$};
    \node[above of=G, node distance=2cm, fill=blue!20, double](D1) {$\lor$};

    \path [-stealth, thick]
    (I1) edge node [draw=none] {} (C0)
    (I2) edge node [draw=none] {} (C0)
    (I3) edge node [draw=none] {} (C0)
    (i1) edge node [draw=none] {} (C1)
    (i2) edge node [draw=none] {} (C1)
    (i3) edge node [draw=none] {} (C1)
    (i4) edge node [draw=none] {} (C1)
    (i5) edge node [draw=none] {} (C1)
    
    (G) edge node [draw=none] {} (N1)
    (G) edge node [draw=none] {} (D1)
    (N1) edge node [draw=none] {} (D1)

    (C0) edge node [draw=none] {} (O1)
    (C0) edge node [draw=none] {} (O2)

    (C1) edge node [draw=none] {} (o1)
    (C1) edge node [draw=none] {} (o2)

    (O1) edge node [draw=none] {} (L1)
    (O2) edge node [draw=none] {} (L2)
    (N1) edge [bend left] node [draw=none] {} (L1)
    (N1) edge node [draw=none] {} (L2)
    
    (o1) edge node [draw=none] {} (l1)
    (o2) edge node [draw=none] {} (l2)
    (G) edge node [draw=none] {} (l1)
    (G) edge [bend right] node [draw=none] {} (l2)

    (L1) edge node [draw=none] {} (d1)
    (l1) edge node [draw=none] {} (d1)
    (l2) edge node [draw=none] {} (d2)
    (L2) edge node [draw=none] {} (d2);

    \end{tikzpicture}    
\end{center}

    To construct $C$, we first construct a circuit $C'$ for the fresh input gate $G_{\text{in}}$ such that $C'$ outputs $1$ no matter what $G_{\text{in}}$ gets as input. This circuit can be constructed by using a $\neg$-gate and a $\lor$-gate---which is the blue output gate in a double circle---as in the example. We use $C'$ to output the bit $1$ in the beginning of the output of $C$, and we use $G_{\text{in}}$ and the $\neg\,$-gate as flag for $C$ to tell which circuit ($C_0$ or $C_1$) we should use as output.  
    
    The rest of the circuit $C$ is constructed as follows. We copy $C_0$ and $C_1$ to $C$ and also add the circuit $C'$ to $C$. The inputs of the circuit $C$ is ordered such that we have the inputs of $C_0$ on the left, those of $C'$ in the middle, and those of $C_1$ on the right. Next, we connect the output gates of the circuits $C_0$ and $C_1$ by using $G_{\text{in}}$ and the $\neg\,$-gate of $C'$ as follows. If $G_0$ is the $i$th output gate of $C_0$, and respectively, $G_1$ is the $i$th output gate of $C_1$ (recall that circuits have the same number of output gates), then we can construct a fresh output gate that outputs the value of $G_0$ if and only if $G_{\text{in}}$ gets $0$ as input, and otherwise it outputs the value of $G_1$. This can be constructed as follows.
\begin{enumerate}
    \item 
    For each output-gate of $C_0$, we introduce a fresh $\land$-gate,
    and we then connect each of the output-gates of $C_0$ bijectively
    to these $\land$-gates. We also similarly introduce a fresh $\land$-gates for each output-gate of $C_1$ and link them in a similar fashion. 
\item
We link the $\neg\,$-gate of $C'$ to the fresh $\land$-gates of $C_0$.
\item
We link the $G_{\mathit{in}}$ to the fresh $\land$-gates of $C_1$.
\item
We finally introduce fresh $\lor$-gates $G_1,\dots , G_p$, one for
each output of $C_0$.
Then we link the fresh $\land$-gates for $C_0$ bijectively to these $\lor$-gates. Finally, we similarly link the fresh $\land$-gates of $C_1$ to 
the gates $G_1, \dots , G_p$ (see the figure), making sure that the order of the output strings of of $C_0$ and $C_1$ is correctly simulated. 
\end{enumerate}
%
%
%
Now, clearly our circuit $C$ works as wanted in the statement.
\end{proof}

We are now ready to prove the following. 

\begin{theorem}\label{thr:mpmsc_to_mpc}
Given $\Pi$, $\Delta$ and a $\Pi$-program of $\mpmsc$ of size $m$, we can construct a strongly equivalent $\mpc$ for $(\Pi, \Delta)$ of size $\mathcal{O}(\Delta m + \abs{\Pi})$.
\end{theorem}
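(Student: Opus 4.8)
The plan is to build the $\mpc$ directly from the $\mpmsc$-program, using Lemma~\ref{lem:perform} to first normalize the program and Lemma~\ref{lem:two_circuits} to combine the ``local update'' circuit with the ``communication update'' circuit into a single circuit that toggles between the two phases. First I would invoke Corollary~\ref{lemma:mpmsc_to_mmsc} to eliminate conditional rules (turning the program into $\mmsc$ with terminal clauses of modal depth zero and iteration clauses of modal depth at most one), which costs only a linear blow-up. Then I would apply Lemma~\ref{lem:perform} with $d = \Delta$ to obtain a strongly equivalent program of size $\ordo(\Delta m + \abs{\Pi})$ in which every proposition in $\Pi$ appears, every head predicate is $\Delta$-omnipresent, and every proposition is weakly $\Delta$-omnipresent. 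The point of omnipresence is that it guarantees, for each head predicate $Y$ and each $i \in [\Delta]$, that the subschema $\Diamond_i Y$ (and a diamond-free occurrence of $Y$) literally appears somewhere in the program, so when we turn the program's bodies into a Boolean circuit there is a dedicated input gate reading ``the value of $Y$ at the $i$th neighbour'' and one reading ``the value of $Y$ here'', i.e.\ the circuit's input layout matches the $\mpc$ input format $\overline{t}_w \cdot \overline{s}_0 \cdots \overline{s}_\Delta$ exactly.

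Next I would construct two auxiliary circuits. The first, $C_{\mathrm{term}}$, computes the zeroth configuration: it reads the $\abs{\Pi}$ propositional bits and outputs the truth values of the terminal-clause bodies (modal depth zero, so this is a plain Boolean circuit on $\abs{\Pi}$ inputs). The second, $C_{\mathrm{iter}}$, computes one update step: its inputs are the $\abs{\Pi}$ propositional bits together with $(\Delta+1)$ copies of the $k$ head-predicate bits (one copy for the home node, one for each of the $\Delta$ neighbours), and its $k$ outputs are the truth values of the iteration-clause bodies, evaluated treating each diamond-free head-predicate occurrence $Y$ as reading the home copy and each occurrence $\Diamond_i Y$ as reading the $i$th neighbour copy. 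Since the iteration bodies have modal depth at most one, this is again a plain Boolean circuit, and omnipresence ensures every needed input gate is present; its size is linear in the normalized program size, hence $\ordo(\Delta m + \abs{\Pi})$. I would then use Lemma~\ref{lem:two_circuits} (with a single toggle bit kept inside the state) to splice $C_{\mathrm{term}}$ and a copy of $C_{\mathrm{iter}}$ into one circuit: a dedicated ``phase'' bit of the state string starts at one value so that round $0$ runs the terminal computation, then is flipped so that all subsequent rounds run the iteration computation. One must be slightly careful, because $C_{\mathrm{term}}$ reads only $\abs{\Pi}$ bits while $C_{\mathrm{iter}}$ reads $\abs{\Pi} + k(\Delta+1)$; padding $C_{\mathrm{term}}$'s input with unused gates and aligning the propositional inputs handles this, and Lemma~\ref{lem:two_circuits} then yields a single circuit of the required type and size.

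Finally I would set the state length of the resulting $\mpc$ to be $k$ plus a constant number of bookkeeping bits (the phase/toggle bit, and possibly a bit to account for the extra leading $1$ produced by Lemma~\ref{lem:two_circuits}), define the attention and print bit positions of the $\mpc$ to be exactly those coordinates corresponding to the attention and print predicates of $\Lambda$, and verify by induction on the round number $n$ that the global configuration $f_n$ of the $\mpc$ at each node $w$ agrees bit-for-bit with the program's configuration $g_n(w)$ on the head-predicate coordinates; the base case is the terminal computation and the inductive step is one application of $C_{\mathrm{iter}}$, using that the $i$th neighbour copy fed to $C_{\mathrm{iter}}$ is $f_n(v_i)$, exactly matching the semantics of $\Diamond_i$. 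Strong equivalence then follows since the appointed strings coincide in every round. The total size is $\ordo(\Delta m + \abs{\Pi})$ as claimed. I expect the main obstacle to be the bookkeeping around input-gate alignment: making the circuit's ordered input gates literally match the $\mpc$ input convention $\overline{t}_w \cdot \overline{s}_0 \cdots \overline{s}_\Delta$ while simultaneously reconciling the differing input arities of $C_{\mathrm{term}}$ and $C_{\mathrm{iter}}$ inside Lemma~\ref{lem:two_circuits}, and keeping every reindexing linear rather than incurring a hidden $\abs{\Pi}$- or $\Delta$-factor blow-up.
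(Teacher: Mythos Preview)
Your approach is essentially the paper's: normalize via Lemma~\ref{lem:perform} and Corollary~\ref{lemma:mpmsc_to_mmsc}, build one Boolean circuit from the terminal clauses and one from the iteration clauses, and splice them with Lemma~\ref{lem:two_circuits}. The order in which you apply Lemma~\ref{lem:perform} and Corollary~\ref{lemma:mpmsc_to_mmsc} is reversed relative to the paper, but this is harmless (an $\mmsc$-program satisfying the modal-depth constraints is a special $\mpmsc$-program, and the size bound comes out the same).

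There are, however, two genuine omissions that would make the construction fail as written. First, your state string carries only the $k$ head-predicate bits (plus bookkeeping), but after Lemma~\ref{lem:perform} any proposition $p$ occurring in an iteration clause is $\Delta$-omnipresent, so subschemata $\Diamond_i p$ appear. Your description of $C_{\mathrm{iter}}$ only routes \emph{head predicates} under diamonds to neighbour copies; it has no way to read $p$ at the $i$th neighbour, because the $\mpc$ only passes state strings between nodes, not the local $\Pi$-input. The paper fixes this by enlarging the state to $\abs{\Pi'} + 1 + k$ bits, where $\Pi' \subseteq \Pi$ are exactly the propositions occurring in iteration clauses, and copying those bits forward with identity gates each round.

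Second, you do not explain how $\Diamond_j\varphi$ evaluates to false when the $j$th neighbour does not exist. In that case the $\mpc$ convention feeds $0^{k}$ for the missing neighbour, which your $C_{\mathrm{iter}}$ cannot distinguish from a genuine all-zero state. The paper's solution is precisely the ``extra leading $1$'' you mention in passing: because Lemma~\ref{lem:two_circuits} prepends a $1$ to every output, any actual neighbour sends a nonzero state string, so a dedicated input bit $I_{(0,j)}$ (the $1$-position of neighbour $j$'s block) witnesses existence of neighbour $j$, and each $\Diamond_j$-node in the tree is replaced by an $\wedge$-gate that also takes $I_{(0,j)}$ as input. You should make this explicit rather than treating that bit as incidental bookkeeping.
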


\begin{proof}
    We give the proof idea. We first transform the $\mpmsc$-program to a strongly equivalent $\mpmsc$-program, where proposition symbols and head predicates are fixed (Lemma \ref{lem:perform}). Then, we transform that program to a strongly equivalent $\mmsc$-program (Corollary \ref{lemma:mpmsc_to_mmsc}). From that program, we construct an $\mpc$ whose state string stores the truth values of head predicates and proposition symbols. 
    The circuit is essentially constructed directly from the inverse tree representations of clauses. Head predicates and proposition symbols in the scope of a diamond will correspond to input gates for bits sent by neighbouring nodes. Moreover, head predicates and propositions not in the scope of a diamond relate to input gates for the home node. In communication round zero, the circuit uses a subcircuit constructed from terminal clauses, and in later rounds, it uses a subcircuit constructed from iteration clauses, which can be achieved by applying Lemma \ref{lem:two_circuits}.

    Fix $\Delta \in \Z_+$ and a set of propositions $\Pi$. Let $\Lambda$ be a $\Pi$-program of $\mpmsc$ of size $m$. First, we transform $\Lambda$ to a strongly equivalent $\Pi$-program $\Lambda'$ of $\mpmsc$ of size $\ordo(\Delta m + \abs{\Pi})$, where propositions symbols and head predicates are fixed as in the statement Lemma \ref{lem:perform}. By Corollary \ref{lemma:mpmsc_to_mmsc}, for $\Lambda'$ there exists a strongly equivalent $\Pi$-program $\Gamma$ of $\mmsc$ of size $\ordo(\abs{\Lambda'}) = \ordo(\Delta m + \abs{\Pi})$ such that the modal depths of the terminal clauses of $\Gamma$ are zero, and the modal depths of iteration clauses are at most one. 

Let $\Pi' \subseteq \Pi$ be the set of propositions in $\Pi$ which appear in an iteration clause of the program, and let $k$ be the number of head predicates in $\Gamma$. Suppose that $X_1,\dots , X_k$ are the head predicates of $\Gamma$, given in the order $<^{\mathrm{VAR}}$. We assume that $p_1,\dots , p_{\abs{\Pi}}$ enumerate the
proposition symbols in $\Pi$ in the order $<^{\mathrm{PROP}}$. 
We will construct from $\Gamma$ a message passing circuit $C_{\Gamma}$ for $(\Pi, \Delta)$ with state length $\abs{\Pi'} + 1 + k$.
The last $k$ bits of a state string of $C_{\Gamma}$ will encode the truth values of $X^n_i$ in every round $n$. The first $\abs{\Pi'}$ bits will encode truth values of propositions---thereby being static or constant---during computation (these bits will be used for technical convenience to help when dealing with diamond operators). The one extra bit indicates whether we have already evaluated the terminal clauses of $\Gamma$ or not, and it also indicates whether a node has received a message from its neighbours or not. Having constant truth values, the propositions in $\Pi'$ are trivial to handle: 
for every input gate $G_{\text{in}}$ for a proposition symbol, we add a fresh $\land$-gate and connect $G_{\text{in}}$ to it in order to get an identity transformation.
The rest of the construction takes more work. Intuitively, the bit of a head predicate $X_i$ is updated by two subcircuits: in the first round by a subcircuit that corresponds to the terminal clause of $X_i$ and in subsequent rounds by a subcircuit that corresponds to the iteration clause.

To construct $C_{\Gamma}$, we begin by showing how to translate the terminal
clauses of $\Gamma$ to corresponding
circuits. These circuits will then be combined to form a circuit $C_0$ which will later on be attached to be part of the final circuit $C_{\Gamma}$. Now, let $X_i(0) \coloneq \varphi_i$ be a terminal clause of $\Gamma$. Note that thus $\varphi_i$ is free of diamonds. Let us construct a circuit $C_{(i,0)}$ for the formula $\varphi_i$. For each $j \in [\, \abs{\Pi}\, ]$, let $I^{\Pi}_{(j,0)}$ denote an input gate whose role will---intuitively---be to read the truth value of the symbol $p_j$ at the home node of the circuit. We transform $\varphi_i$ to its inverse tree representation. Then, for each $j$, we replace each instance of the proposition symbol $p_j$ in the tree with the input gate $I^{\Pi}_{(j,0)}$. Now, we have the circuit $C_{(i,0)}$ for $\varphi_i$ ready. The output gate is the gate with fan-out zero. We combine these circuits $C_{(i,0)}$ for all $i$ to a single circuit $C_0$ such that they share input gates, i.e., all repetitions of a gate $I^{\Pi}_{(j,0)}$ are combined to form a single input gate. The output nodes of $C_0$ are the output nodes of the circuits $C_{(i,0)}$ in the obvious order such that $C_{(j',0)}$ is before $C_{(j'',0)}$ if $j'< j''$.

We will next construct a circuit for each iteration clause of $\Gamma$. These will be then combined to a single circuit $C_1$, to be ultimately connected to be part of $C_{\Gamma}$. The inputs of a circuit corresponding to an iteration clause will correspond to the messages of neighbours accessed by diamonds $\Diamond_i$. Now, let $X_{k'} \coloneq \psi_{k'}$ be an iteration clause of $\Gamma$. Let us construct a circuit $C_{(k',1)}$ for the schema $\psi_{k'}$. Let $j \in [\Delta]$ and $i \in [k]$. Let $I_{(i,j)}$ denote an input gate of $C_{(k',1)}$ whose role will---intuitively---be to read the truth value that $X_i$ has at the $j$th neighbour. Similarly, by $I_{(i,0)}$ we denote the input gate reading $X_i$ at the home node. We let $I_{(0,j)}$ denote the input gate whose role will---intuitively---be to tell if the home node received message from the $j$th neighbour. Also, we let $I^{\Pi'}_{(i,j)}$ denote an input gate reading the value of the symbol $p_i\in\Pi'$ at the neighbour $j$, noting that ``neighbor $0$'' refers to the home node. Now, we first transform $\psi_{k'}$ to its inverse tree representation. Then, we replace propositions and head predicates in the tree with input gates as follows. First, if a head predicate $X_i$ (respectively, proposition symbol $p_i$) is a  descendant of some $\Diamond_j$-node in the tree, then we use $I_{(i,j)}$ (resp., $I^{\Pi'}_{(i,j)}$) as the replacing input gate. (Note here that the descendant does not have to be a child of the $\Diamond_j$-node.) If the head predicate (resp., proposition symbol) is not a descendant of a diamond node, we use $I_{(i,0)}$ (resp., $I^{\Pi'}_{(i,0)}$) as the replacing input gate. After all the head predicates and propositions have been replaced, we replace the $\Diamond_j$-nodes with a $\land$-gate, connecting the parent of each diamond node directly to its child, and we also connect $I_{(0,j)}$ to that replacing $\land$-gate. Now, if $I_{(0,j)}$ gets $0$ as input then the replacing $\land$-gate will also output $0$. Now, the circuit $C_{(k',1)}$ for $\psi_{k'}$ is ready. The output is the gate with fan-out zero. To construct $C_1$, we combine the circuits $C_{(k', 1)}$ for all $k'$ by making sure that for all $i$ and $j$, repetitions of input gates $I_{(i,j)}$ become a single input gate. Similarly, repetitions of $I_{(i,j)}^{\Pi}$ are removed by combining input gates. The output gates of $C_{(j,1)}$ are the output-gates of the circuits $C_{(k',1)}$ ordered so that the output of $C_{(k'',1)}$ is before that of $C_{(k''',1)}$ when $k''< k'''$.

The circuit $C_0$ now mimics the initializing computation round of the program and $C_1$ mimics the remaining rounds. Next, we will combine $C_0$ and $C_1$ to one circuit $C_{\Gamma}$, with $C_0$ computing the very first round and $C_1$ all the later rounds. This can be constructed by applying Lemma \ref{lem:two_circuits} to $C_0$ and $C_1$. The circuit $C_{\Gamma}$ uses the input gates $I^{\Pi}_{(i,0)}$, $I^{\Pi'}_{(i',j)}$, $I_{(i'',j)}$ and one input gate that is added by Lemma \ref{lem:two_circuits} (in total, this makes $\abs{\Pi} + (\abs{\Pi'} + 1 + k)(\Delta + 1)$ input gates). The order of these input gates is obvious. The attention and print bits are corresponding indexes of attention and print predicates of $\Gamma$. 
Now, $C_{\Gamma}$ receives a non-zero message from its neighbours since Lemma \ref{lem:two_circuits} adds a $1$ in front of the output of the circuit. This makes sure that if some $j$th neighbour does not exist, then $\land$-gates that replaced $\Diamond_j$ nodes output $0$ as wanted.  
Also, as we mentioned in the beginning of the proof, we have the bits for $\Pi'$ that should remain static in every evaluation round. The input gates for these bits were denoted by $I^{\Pi'}_{(i,0)}$. At this stage, we have not yet defined corresponding output gates for the bits for $\Pi'$, so we will manipulate $C_{\Gamma}$ as follows. For each input gate $I^{\Pi'}_{(i,0)}$, it is trivial to construct an identity circuit that gives the input to $I^{\Pi'}_{(i,0)}$ as an output. The output gates of these identity circuits are added in front of the output of $C_{\Gamma}$ in the obvious order.

The circuit $C_{\Gamma}$ has been constructed to simulate the program $\Gamma$, and a straightforward, but tedious, induction shows that the circuit and the program are strongly equivalent. As the circuit construction is based on simple modifications of inverse tree representations of clauses of the programs, it is easy to see that the size of the circuit is $\mathcal{O}(\Delta m + \abs{\Pi})$. 

\end{proof}

\section{Linking standard MSC to message passing circuits}

\subsection{Using a clock in a program}\label{clock}
    
Our next goal is to define a program for simulating a clock that has a \emph{minute hand} and a \emph{second hand}. The program for the minute hand works on binary strings of length $\ell$, while the second hand helps in updating the minute hand.
The minute hand starts with the string $0^{\ell}$ and goes through all $\ell$-bit strings in the canonical lexicographic order. To update the minute hand, we use the second hand as follows. We look for the first zero (from the right) of a string. We call this zero---or its position in the string---the \emph{flip point}. 
The bits to the left of the flip point are kept as they are, and the remaining bits are flipped, including the flip point. For example, $111011$ has the third bit from the right as a flip point, and the string gets updated to $111100$. After
reaching the string $1^{\ell}$,
the minute hand starts again from $0^{\ell}$.

We will use schema variables $M_1,\dots , M_{\ell}$ to encode bits of the minute hand such that $M_1$ records the rightmost bit, $M_2$ the second bit from the right, and so on. The second hand is defined using the head predicates $S_1 , \ldots, S_{\ell}$ and $S_{\text{changing}}$. 
Consider an example (see the figure) where $\ell=4$. The leftmost column gives the computation rounds, and then we have the bit strings for the minute hand $(M_4,M_3,M_2,M_1)$; in the middle we have the strings for the second hand $(S_4,S_3,S_2,S_1)$; the rightmost column gives the values of variable $S_{\text{changing}}$. 

{\footnotesize
\setlength{\arraycolsep}{2.5pt}
\[
\begin{array}{lllllllllllllll}
{}&\qquad\qquad & M_4&M_3&M_2&M_1\qquad\qquad & & & S_4&S_3&S_2&S_1\qquad\qquad & & &S_{\text{changing}}\\
0.& & 0&0&0&0 & & & 0&0&0&0 & & &0\\
\hdashline
1.& & 0&0&0&1 & & & 0&0&0&0 & & &1\\
2.& & 0&0&0&1 & & & 0&0&0&1 & & &1\\
\rowcolor{blue!5}
3.& & 0&0&0&1 & & & 0&0&0&1 & & &0\\
\hdashline
4.& & 0&0&1&0 & & & 0&0&0&0 & & &1\\
\rowcolor{blue!5}
5.& & 0&0&1&0 & & & 0&0&0&0 & & &0\\
\hdashline
6.& & 0&0&1&1 & & & 0&0&0&0 & & &1\\
7.& & 0&0&1&1 & & & 0&0&0&1 & & &1\\
8.& & 0&0&1&1 & & & 0&0&1&1 & & &1\\
\rowcolor{blue!5}
9.& & 0&0&1&1 & & & 0&0&1&1 & & &0\\
\hdashline
10.& & 0&1&0&0 & & & 0&0&0&0 & & &1\\
\textbf{\vdots}& & {}&{}&\textbf{\vdots}&{} & & & {}&{}&\textbf{\vdots}&{} & & &\textbf{\vdots}\\
\hdashline
k.& & 0&1&1&1 & & & 0&0&0&0 & & &1\\
k+1.& & 0&1&1&1 & & & 0&0&0&1 & & &1\\
k+2.& & 0&1&1&1 & & & 0&0&1&1 & & &1\\
k+3.& & 0&1&1&1 & & & 0&1&1&1 & & &1\\
\rowcolor{blue!5}
k+4.& & 0&1&1&1 & & & 0&1&1&1 & & &0\\
\hdashline
k+5.& & 1&0&0&0 & & & 0&0&0&0 & & &1\\
\textbf{\vdots}& & {}&{}&\textbf{\vdots}&{} & & & {}&{}&\textbf{\vdots}&{} & & &\textbf{\vdots}\\
\end{array} 
\]}

Now, row $0$ is a special case, as round $0$ is the round of initiation. After that, the computation proceeds in vertical blocks of rounds separated by dashed lines, with the block of rounds 1-3 being first, then the block for rounds 4-5, then 6-9, et cetera. The blocks have different heights. 
The strings for $(M_4,M_3,M_2,M_1)$ correspond to the minute hand. We observe that the minute hand is constant inside each block and gets increased in the standard lexicographic way when changing blocks. The variable $S_{\text{changing}}$ is $1$ within each block, with the exception of the last rows of blocks (highlighted in the figure). Thus, $S_{\text{changing}}$ indicates that we should start a new block. The string for $(S_4,S_3,S_2,S_1)$ is always of type $0^i 1^j$, and the number of bits $1$ increases until we reach the penultimate row. This reflects the idea that $(S_4,S_3,S_2,S_1)$ is copying the string for $(M_4,M_3,M_2,M_1)$ from right to left, until we reach the flip point, i.e., the first $0$ (from the right) in the string for $(M_4,M_3,M_2,M_1)$.

Now, it is easy to generalize this from the case $\ell = 4$ to the general case. 
We also need a clock that is always one step ahead of the basic clock. We define this \emph{forward clock} as follows. For each head predicate $M_i, S_i, S_{\text{changing}}$ of the basic clock, we define a fresh symbol $M_i', S_i', S_{\text{changing}}'$. The forward clock uses only these fresh symbols. The iteration clauses are obtained by copying the corresponding rules and changing the predicates to fresh ones. The terminal clauses of the forward clock are obtained similarly by copying the terminal clauses of the basic clock, with the exception of the clauses for $M'_1$ and $S'_{\text{changing}}$, which we set to $M'_1 (0) \coloneq \top$ and $S'_{\text{changing}} (0) \coloneq \top$. This means the forward clock is otherwise the same as the basic one, but it starts from the string $0^{\ell -1}1$ instead of $0^{\ell}$. Note that the forward clock starts with $S_{\text{changing}}$ as true since otherwise in round $1$ the minute hand of the forward clock would correspond to the string $0^{\ell}$. It is similarly possible to define a \emph{double forward clock} beginning with $0^{\ell-2}10$. 

Let us fix the program formally. Let $\Pi$ be a finite set of propositions and $\ell \coloneqq \abs{\Pi_1}$.
For every $i \in [\ell]$, we let $S_i (0) \coloneq \bot$. Also, we let $S_1 \coloneq_{S_{\text{changing}}} M_1; \bot$ and $S_{i} \coloneq_{S_{\text{changing}}} S_{i-1} \land M_{i}; \bot$ 
for $i \in \{2,\dots , \ell \}$. 
To define the rules for $S_{\text{changing}}$, notice that the point of $S_{\text{changing}}$ is to become false precisely on those rounds where the value for $(S_{\ell},\dots , S_1)$ becomes repeated, i.e., identical to the corresponding string from the previous round. Since the head predicates $S_{i}$ (for $i\not=1$) are updated according to the rule body $S_{i - 1} \land M_i$ (or $M_1$ for $i=1$), we can ensure that the current and previous string for
$(S_{\ell},\dots , S_1)$ are different by making sure that either
$S_{i}$ and $S_{i - 1} \land M_i$ fail to be
equivalent for some $i\not = 1$ or, alternatively, 
$M_1$ and $S_1$ fail to be equivalent. 
This can be forced by the iteration rule
\[S_{\text{changing}} \coloneq_{S_{\text{changing}}}  \neg (S_{1} \leftrightarrow M_1) \lor \bigvee_{1 < i \le \ell}^{} \neg \left( S_{i} \leftrightarrow \left( S_{i - 1} \land M_{i} \right)  \right) ; \top. \]
The corresponding terminal rule is $S_{\text{changing}}(0) \coloneq \bot$.

We then fix the minute hand with the head predicates $M_1,  \ldots, M_{\ell}$. For all $i \in [\ell]$, we let $M_i (0) \coloneq \bot$. We also let $M_1 \coloneq_{S_{\text{changing}}} M_1; \neg M_1$, and for $i \in \{2,\dots , \ell\}$, we define $M_{i} \coloneq_{S_{\text{changing}}} M_{i}; \psi_{\text{change } i}$, where 
\[\psi_{\text{change }i} \coloneqq \left( S_{i} \land \neg M_{i}\right) \lor \left( S_{i-1} \land \neg S_{i} \land \neg M_{i} \right) \lor \left( \neg S_{i-1} \land \neg S_{i} \land M_{i} \right).\]

The first disjunct of $\psi_{\text{change }i}$ takes care of the values to the right of the flip point; the second disjunct changes the flip point to one; and the last disjunct keeps the bits as they are to the left of the flip point.

We have now defined a clock for circulating the bit strings for $(M_{\ell},\dots , M_1)$. The clock is of linear size in relation to the predicates $M_i$. \\

\subsection{Simulating multimodal diamonds}\label{Ordering neighbors by propositions}

To simulate $\mpmsc$ (or $\mmsc$) in $\msc$, we will need to simulate each $\Diamond_i$ with $\Diamond$ only. For this, we will use $\ID$s and clocks. The idea is to scan through the neighbours one by one in the order given by the $\ID$s. To keep our translations linear in size, different diamonds $\Diamond_i$ will be ``read'' in different rounds. We note that we could speed up reading the diamonds, especially for constant out-degree models, but the approach below suffices for the current study.
We begin with an example that conveys the intuition of the simulation. In the example, we assume that $\Pi_1 = \{p_1,p_2,p_3\}$ and $\ell = \abs{\Pi_1} = 3$ and consider a program where the maximum subindex of a diamond is $I = 3$. 
Let us examine a node with three neighbours with the identifiers $000$, $010$ and $111$. Consider the following array.

{\footnotesize
\setlength{\arraycolsep}{2.5pt}
\[
\begin{array}{llllllllllllllllllllll}
{}&\qquad &{} &X_{\ID} &{}\qquad\qquad & & & M_3&M_2&M_1\qquad\qquad & & M'_3&M'_2&M'_1 & & &X_{\text{reset}}\qquad & &X_{\text{not same}} &N_1 &N_2 &N_3\\
0.& & \neg p_3&\neg p_2&\neg p_1 & & & 0&0&0 & & 0&0&1 & & &0 & &1 &0 &0 &0\\
\hdashline
1.& & \neg p_3&\neg p_2&\text{ }\ p_1 & & & 0&0&1 & & 0&0&1 & & &0 & &1 &1 &0 &0\\
2.& & \neg p_3&\neg p_2&\text{ }\ p_1 & & & 0&0&1 & & 0&0&1 & & &0 & &0 &1 &0 &0\\
3.& & \neg p_3&\neg p_2&\text{ }\ p_1 & & & 0&0&1 & & 0&1&0 & & &0 & &0 &1 &0 &0\\
\hdashline
4.& & \neg p_3&\text{ }\ p_2&\neg p_1 & & & 0&1&0 & & 0&1&0 & & &0 & &1 &1 &0 &0\\
5.& & \neg p_3&\text{ }\ p_2&\neg p_1 & & & 0&1&0 & & 0&1&1 & & &0 & &0 &1 &1 &0\\
\hdashline
6.& & \neg p_3&\text{ }\ p_2&\text{ }\ p_1 & & & 0&1&1 & & 0&1&1 & & &0 & &1 &1 &1 &0\\
{}& & {}&\textbf{\vdots}&{} & & & {}&\textbf{\vdots}&{} & & {}&\textbf{\vdots}&{} & & &\textbf{\vdots}& &\textbf{\vdots}&{} &\textbf{\vdots}&{}\\
k.& & \text{}\ p_3&\text{ }\ p_2&\neg p_1 & & & 1&1&0 & & 1&1&1 & & &0 & &0 &1 &1 &0\\
\hdashline
k+1.& & \text{}\ p_3&\text{ }\ p_2&\text{ }\ p_1 & & & 1&1&1 & & 1&1&1 & & &0 & &1 &1 &1 &0\\
k+2.& & \text{}\ p_3&\text{ }\ p_2&\text{ }\ p_1 & & & 1&1&1 & & 1&1&1 & & &0 & &0 &1 &1 &1\\
k+3.& & \text{}\ p_3&\text{ }\ p_2&\text{ }\ p_1 & & & 1&1&1 & & 1&1&1 & & &0 & &0 &1 &1 &1\\
k+4.& & \text{}\ p_3&\text{ }\ p_2&\text{ }\ p_1 & & & 1&1&1 & & 1&1&1 & & &0 & &0 &1 &1 &1\\
k+5.& & \text{}\ p_3&\text{ }\ p_2&\text{ }\ p_1 & & & 1&1&1 & & 0&0&0 & & &1 & &0 &1 &1 &1\\
\hdashline
k+6.& & \neg p_3&\neg p_2&\neg p_1 & & & 0&0&0 & & 0&0&0 & & &0 & &1 &0 &0 &0\\
k+7.& & \neg p_3&\neg p_2&\neg p_1 & & & 0&0&0 & & 0&0&1 & & &0 & &0 &1 &0 &0\\
\hdashline
k+8.& & \neg p_3&\neg p_2&\text{ }\ p_1 & & & 0&0&1 & & 0&0&1 & & &0 & &1 &1 &0 &0\\
{}& & {}&\textbf{\vdots}&{} & & & {}&\textbf{\vdots}&{} & & {}&\textbf{\vdots}&{} & & &\textbf{\vdots}& &\textbf{\vdots}&{} &\textbf{\vdots}&{}\\
\end{array}
\]}

The dashed lines define blocks for the clock $(M_3,M_2,M_1)$ so that again the corresponding bit strings increase from block to block. Only the minute hand predicates are shown. The tuple $(M_3', M_2', M_1')$ encodes a forward clock running one step ahead. Intuitively, $X_{\text{reset}}$ is a flag that shows when to reset the flags $N_i$ (to be explained later) back to zero. Formally, $X_{\text{reset}}$ is true precisely when $(M_3,M_2,M_1) = (1,1,1)$ and $(M_3', M_2', M_1') = (0,0,0)$, which happens in the round when the basic clock corresponds to the
string $111$ for the last time. 
The flag $X_{\text{not same}}$ is true
precisely \emph{after} each round where the basic and forward clock differ in at least one bit, that is, the strings for $(M_3,M_2,M_1)$ and $(M_3', M_2', M_1')$ are different. ($X_{\text{not same}}=1$ in round $0$ for technical convenience.) The purpose of $X_{\text{not same}}$ is to help us deal with the predicates $N_i$ in the following way. First, we have precisely the three predicates $N_1,N_2,N_3$ because the maximum subindex of a diamond is $I = 3$. The predicate $N_i$ becomes true in rounds $n$ such that the string $(M_3,M_2,M_1)$ has precisely matched the ID of the $i$th neighbouring node in round $n-1$. For example, our $2$nd neighbour has ID $010$, and $(M_3,M_2,M_1) = (0,1,0)$ in round $4$, so $N_2$ becomes true in round $5$. The predicates $N_i$ stay true until resetting in the round directly after the round $X_{\text{reset}}$ is true. The predicate $X_{\mathrm{ID}}$ is true precisely in those rounds where $(M_3,M_2,M_1)$ encodes the $\mathrm{ID}$ of the current node, that is, we have $(M_3,M_2,M_1) = (b_3,b_2,b_1)$ if that $\mathrm{ID}$ is $b_3b_2b_1\in \{0,1\}^*$, meaning that the truth values of $p_3,p_2,p_1$ are $b_3,b_2,b_1$. The first three columns of the figure encode the vectors $$(\varphi_3, \varphi_2, \varphi_1)\in \{p_3, \neg p_3\}\times \{p_2, \neg p_2\}\times \{p_1, \neg p_1\}$$ with the property that in the round $n$, the variable $X_{\ID}$ is equivalent to the conjunction $\varphi_3 \wedge \varphi_2 \wedge \varphi_1$. For example, in round six, $(M_3, M_2, M_1) = (0,1,1)$ and $(\varphi_3, \varphi_2, \varphi_1) = (\neg p_3, p_2, p_1)$. A key intuition in simulating a diamond $\Diamond_i$ with $\Diamond$ relates to scanning the $\mathrm{ID}$s of \emph{neighbouring} nodes via the dynamically changing truth value of $X_{\mathrm{ID}}$ at the \emph{neighbouring} nodes.

We specify the program formally in the general case for a fixed $\Pi_1$ and $I \in \Z_+$. Let $\ell = \abs{\Pi_1}$. We construct a program for the head predicates $(M_{\ell}, \ldots, M_1)$ and $(N_I, \ldots, N_1)$. Assume that $p_{1}, \ldots, p_{\ell}$ enumerate the propositions in $\Pi_1$ in the order $<^{\mathrm{PROP}}$. We specify a
clock via the tuple $(M_{\ell},\dots , M_1)$ of 
head predicates and a forward
clock via $(M_{\ell}', \dots , M_1')$. For technical convenience, we even define a double forward clock via $(M_{\ell}'', \dots , M_1'')$. 
The program for the head predicate $X_{\text{reset}}$ is
$
X_{\text{reset}}(0) \coloneq \bot,\ \ X_{\text{reset}} \coloneq \bigwedge_{i \leq \ell} \neg M''_i \land \bigwedge_{i \leq \ell} M'_i.
$
The program for the head predicate $X_{\text{not same}}$ is
$
X_{\text{not same}}(0) \coloneq \top$,\ \ $ X_{\text{not same}} \coloneq \bigvee_{i \leq \ell} \neg( M_i \leftrightarrow M'_i).
$
The program for $X_{\ID}$ is $X_{\ID} (0) \coloneq \bigwedge_{i \leq \ell} \neg p_i$, $X_{\ID} \coloneq \bigwedge_{i \leq \ell} (M'_i \leftrightarrow p_i)$.

For every $i \in [I]$, the rules for $N_i$ are 
$
N_i (0) \coloneq \bot$ 
and \[N_{i}\coloneq_{X_{\text{reset}},\ \neg N_{i} \land N_{i-1} \land X_{\text{not same}}} \bot; \Diamond X_{\ID}; N_{i}\]
where $N_{i-1}$ is simply deleted when $i - 1 = 0$.
The first condition $X_{\text{reset}}$ is obvious. The second condition $\neg N_i \land N_{i-1} \land X_{\text{not same}}$ checks that we have already found the neighbour $i-1$, i.e., $N_{i-1}$ is true (note that the variable $N_0$ does not exist, so no such check is needed for $N_1$). Moreover, this condition checks that we have not yet found neighbour $i$, that is, $N_i$ is false and also that the clock has been updated in the previous round, i.e., $X_{\text{not same}}$ is true. 
This makes sure that within one block, precisely one flag $N_i$ can become true.

\textbf{The diamond $\Diamond_{i}$ macro} for $\varphi$ and with respect to $(N_i,N_{i-1}, X_{\mathrm{ID}})$
 is then the schema 
$\Diamond_{i}\varphi \coloneqq (\neg N_{i} \land N_{i-1} \land \Diamond \left( \varphi \land X_{\ID} \right) ).$
Note that in the $\Diamond_1$ macro, $N_0$ is omitted
(meaning $N_{i-1}$ is deleted in the above). Informally, the $\Diamond_i$ macro is ``reliable'' only at specific times of computation and thus must be used with care. This is demonstrated in the proof of the following lemma, given in the appendix. Also, the subsequent result (Theorem \ref{thr:mpmsc_to_msc}) then follows immediately by Lemma \ref{lemma:cmsc_to_msc}.

\begin{lemma}\label{ridofsubindiceslemma}
    Given $\Pi$ and a $\Pi$-program of $\mpmsc$ of size $m$ where the maximum subindex of a diamond is $I$, we can construct an equivalent $\Pi$-program of $\cmsc$ of size $\ordo(I + \abs{\Pi_1} + m)$. The computation time is $\ordo(2^{\abs{\Pi_1}})$ times the computation time of the $\mpmsc$-program. 
\end{lemma}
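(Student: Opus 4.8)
The plan is to build the $\cmsc$-program $\Gamma$ by attaching the clock-and-scanning apparatus of Sections~\ref{clock}--\ref{Ordering neighbors by propositions} to a ``slowed-down'' copy of $\Lambda$, together with a few auxiliary head predicates that cache, over one clock cycle, the values of the diamonds of $\Lambda$. Write $\ell=\abs{\Pi_1}$; if $\Pi_1=\emptyset$ every model has at most one node, so each $\Diamond_1\varphi$ may be replaced by $\Diamond\varphi$ and each $\Diamond_j\varphi$ with $j\ge2$ by $\bot$, and we are done, so assume $\ell\ge1$. Then $\Gamma$ contains, verbatim, the basic clock $M_1,\dots,M_\ell$ with its second hand $S_1,\dots,S_\ell,S_{\text{changing}}$, the forward clock $M'_1,\dots,M'_\ell$, the double forward clock $M''_1,\dots,M''_\ell$, and the predicates $X_{\text{reset}},X_{\text{not same}},X_{\ID},N_1,\dots,N_I$ exactly as defined there. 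For every occurrence of a subschema $\Diamond_j\theta$ in $\Lambda$ (note $\theta$ is diamond-free, since $\mpmsc$-iteration-clauses have modal depth at most one; distinct occurrences get distinct predicates), $\Gamma$ gets a fresh head predicate $D_{j,\theta}$ with clauses
\[
D_{j,\theta}(0)\coloneq\bot,\qquad D_{j,\theta}\coloneq_{\,X_{\text{reset}},\ \neg N_j\wedge N_{j-1}\wedge\Diamond(\theta\wedge X_{\ID})}\ \bot;\ \top;\ D_{j,\theta}
\]
(with $N_{j-1}$ deleted when $j=1$): $D_{j,\theta}$ is forced to $\bot$ one round after the clock last shows $1^\ell$, switched to $\top$ at the unique round of a clock cycle at which the $\Diamond_j$ macro of Section~\ref{Ordering neighbors by propositions} fires for $\theta$, and otherwise retained. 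Finally, each terminal clause of $\Lambda$ is copied into $\Gamma$ unchanged, each standard iteration clause $X\coloneq\psi$ becomes $X\coloneq_{X_{\text{reset}}}\psi^{\ast};X$, and each conditional iteration clause $X\coloneq_{\varphi_1,\dots,\varphi_n}\psi_1;\dots;\psi_n;\chi$ becomes
\[
X\coloneq_{\,X_{\text{reset}}\wedge\varphi_1,\ \dots,\ X_{\text{reset}}\wedge\varphi_n,\ X_{\text{reset}}}\ \psi_1^{\ast};\ \dots;\ \psi_n^{\ast};\ \chi^{\ast};\ X,
\]
where $(\cdot)^{\ast}$ replaces every diamond occurrence $\Diamond_j\theta$ by the corresponding $D_{j,\theta}$ (the conditions $\varphi_i$ are already diamond-free). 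The print and attention predicates of $\Gamma$ are those of $\Lambda$, and $\Gamma$ uses no $\Diamond_j$, so it is indeed a $\Pi$-program of $\cmsc$.

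The intended behaviour, to be verified by induction on rounds, is that $\Gamma$ runs the clock in periodic cycles of length $T=\ordo(2^\ell)$ --- each minute-hand block lasts $\ordo(1)$ rounds plus the number of trailing ones of its value, and these sum to $\ordo(2^\ell)$ over all $\ell$-bit strings --- so there are checkpoint rounds $0=a_0<a_1<a_2<\cdots$ with $a_{j+1}-a_j=T$ for $j\ge1$ at which, at every node $w$, the head predicates of $\Lambda$ hold exactly the values of the $j$th configuration of $\Lambda$ at $w$, while on the rounds strictly between consecutive checkpoints these predicates are frozen (the $X_{\text{reset}}$-guarded rules keep them via the backup $X$). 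Since they are frozen throughout a cycle at every node, the schema $\theta$ under each $\Diamond_j$ has a constant truth value at the $j$th neighbour during the cycle, and that is precisely the value $D_{j,\theta}$ captures: by the clock properties established in Section~\ref{Ordering neighbors by propositions}, the forward minute hand equals the $\ID$ of the $j$th neighbour at exactly one round while $\neg N_j\wedge N_{j-1}$ holds (neighbours being scanned in $\ID$ order), this round precedes the next checkpoint, and if the $j$th neighbour does not exist the macro never fires and $D_{j,\theta}$ stays $\bot$, matching the semantics of $\Diamond_j$. Hence at round $a_{j+1}$ the $X_{\text{reset}}$-guarded rules recompute each head predicate of $\Lambda$ from the still-frozen round-$j$ values and the now-settled $D_{j,\theta}$'s, producing exactly the round-$(j+1)$ configuration. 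An induction on $j$ then yields equivalence: $\Lambda$ first accepts and outputs $\overline p$ at $w$ in round $j$ iff $\Gamma$ does so in round $a_j$, and $a_j=\ordo(2^\ell\cdot j)$ gives the stated $\ordo(2^{\abs{\Pi_1}})$ time blow-up.

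The size bound is routine: the basic, forward and double forward clocks together with $X_{\text{reset}},X_{\text{not same}},X_{\ID}$ have size $\ordo(\ell)=\ordo(\abs{\Pi_1})$; the $N_i$ add $\ordo(I)$; the modified copy of $\Lambda$ adds only $\ordo(1)$ symbols per condition plus one extra consequence per conditional clause, and $(\cdot)^{\ast}$ does not increase size, so it contributes $\ordo(\abs{\Lambda})=\ordo(m)$; and since $\mpmsc$ forbids nesting of diamonds, the arguments of distinct diamond occurrences occupy disjoint parts of $\Lambda$, so the predicates $D_{j,\theta}$ contribute $\ordo(m)$ in total. Altogether $\abs{\Gamma}=\ordo(I+\abs{\Pi_1}+m)$.

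The hard part --- precisely the ``use with care'' attached to the $\Diamond_i$ macro --- is the round-by-round verification behind the second paragraph: a simultaneous induction showing that the basic/forward/double forward clocks, the $N_i$, $X_{\text{reset}}$, $X_{\text{not same}}$ and $X_{\ID}$ behave as claimed in Section~\ref{Ordering neighbors by propositions}; that each $D_{j,\theta}$ is reset at $a_j$ and has reached its correct value strictly before $a_{j+1}$ (which uses that the last block, for minute-hand value $1^\ell$, is long enough, and that the priority of the $X_{\text{reset}}$ condition overrides a scanning condition that might be spuriously satisfied at the very end of a cycle); and that the head predicates of $\Lambda$ are frozen on $(a_j,a_{j+1})$ and recomputed correctly at $a_{j+1}$. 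The individual arguments are elementary, but the off-by-one bookkeeping around the $X_{\text{reset}}$ and reset rounds is delicate. Theorem~\ref{thr:mpmsc_to_msc} then follows by composing this lemma with Lemma~\ref{lemma:cmsc_to_msc}.
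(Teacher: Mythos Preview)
Your overall architecture is essentially the paper's: attach the clock apparatus of Sections~\ref{clock}--\ref{Ordering neighbors by propositions}, cache each diamond's value in a fresh head predicate over a clock cycle, and gate the original iteration clauses on $X_{\text{reset}}$. The size and time analyses are fine. However, the specific rule you give for the cache predicates is wrong.

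Your rule $D_{j,\theta}\coloneq_{\,X_{\text{reset}},\ \neg N_j\wedge N_{j-1}\wedge\Diamond(\theta\wedge X_{\ID})}\ \bot;\ \top;\ D_{j,\theta}$ only ever \emph{sets} the cache to $\top$ and otherwise retains it; it never overwrites a spurious $\top$ with $\bot$. And spurious firings do occur: if the block for the $(j-1)$th neighbour's $\ID$ lasts more than one round, then $N_{j-1}$ turns on after its first round while $X_{\ID}$ still points at that neighbour for the remainder of the block, and $N_j$ stays $\bot$ (its own guard needs $X_{\text{not same}}$). Hence on those leftover rounds your second condition is $\neg N_j\wedge N_{j-1}\wedge\Diamond(\theta\wedge X_{\ID})$ with $X_{\ID}$ matching neighbour $j-1$; if $\theta$ holds there but not at neighbour $j$, $D_{j,\theta}$ becomes $\top$ and is never corrected. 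Your uniqueness claim (``the minute hand equals the $\ID$ of the $j$th neighbour at exactly one round while $\neg N_j\wedge N_{j-1}$ holds'') is literally true but beside the point: the condition also holds while the minute hand is still at the $(j-1)$th neighbour's $\ID$.

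The paper's rule $X_{(i,j)}\coloneq_{\neg N_i}\Diamond_i\psi_j; X_{(i,j)}$ avoids this by \emph{re}-evaluating the macro on every round until $N_i$ turns on, so any temporary wrong value is overwritten by the correct one at the round the $i$th neighbour is actually scanned. An equivalent fix in your formulation is to add the guard $X_{\text{not same}}$ to your second condition, which restricts firing to the first round of each block and hence to the correct neighbour. With either fix, the rest of your argument goes through.
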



\begin{theorem}\label{thr:mpmsc_to_msc} 
Given $\Pi$ and a $\Pi$-program of $\mpmsc$ of size $m$, where the maximum subindex in a diamond is $I$, we can construct an equivalent $\Pi$-program of $\msc$ of size $\ordo(I + \abs{\Pi_1} + m)$. The computation time is $2^{\ordo(\abs{\Pi_1})}$ times the computation time of the $\mpmsc$-program.
\end{theorem}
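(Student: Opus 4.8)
The plan is to obtain the theorem by composing Lemma~\ref{ridofsubindiceslemma} with Lemma~\ref{lemma:cmsc_to_msc}; essentially all the genuine work has already been done in those two lemmas, in particular in the clock construction and the $\Diamond_i$-macros of Section~\ref{Ordering neighbors by propositions} that underlie Lemma~\ref{ridofsubindiceslemma}.

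Concretely, first I would apply Lemma~\ref{ridofsubindiceslemma} to the given $\mpmsc$-program of size $m$ with maximum diamond subindex $I$. This yields an equivalent $\cmsc$-program $\Lambda'$ of size $\ordo(I + \abs{\Pi_1} + m)$, where the computation time of $\Lambda'$ is $\ordo(2^{\abs{\Pi_1}})$ times the computation time of the original $\mpmsc$-program. Then I would apply Lemma~\ref{lemma:cmsc_to_msc} to $\Lambda'$, obtaining a strongly equivalent $\Pi$-program $\Lambda''$ of $\msc$ whose size is \emph{linear} in $\abs{\Lambda'}$, hence still $\ordo(I + \abs{\Pi_1} + m)$.

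It then remains only to check that the quantitative claims survive the composition, which is routine bookkeeping. For size, a linear blow-up applied to $\ordo(I + \abs{\Pi_1} + m)$ stays $\ordo(I + \abs{\Pi_1} + m)$. For equivalence, Lemma~\ref{ridofsubindiceslemma} gives acceptance equivalence of $\Lambda'$ with the $\mpmsc$-program, and strong equivalence (from Lemma~\ref{lemma:cmsc_to_msc}) is in particular acceptance equivalence, so by transitivity $\Lambda''$ is equivalent to the original program. For timing, strong equivalence means $\Lambda''$ produces the same appointed string as $\Lambda'$ in \emph{every} round, so there is no additional time loss, and the total blow-up remains $\ordo(2^{\abs{\Pi_1}}) = 2^{\ordo(\abs{\Pi_1})}$ times the computation time of the $\mpmsc$-program (a constant factor being absorbed into the exponent, using $\abs{\Pi_1} \ge 1$).

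Given that both ingredient lemmas are already available, there is no real obstacle at this stage: the only care needed is to keep the notions ``equivalent'' versus ``strongly equivalent'' straight and to verify the $\ordo$-bounds compose as claimed. If one instead wanted a self-contained proof, the hard part would be re-establishing Lemma~\ref{ridofsubindiceslemma}, i.e.\ verifying that scanning the neighbours one by one via the dynamically changing truth value of $X_{\mathrm{ID}}$ at the neighbouring nodes, synchronized by the basic and forward clocks, faithfully simulates each $\Diamond_i$ while keeping the program size linear in $I + \abs{\Pi_1} + m$ rather than blowing up with $2^{\abs{\Pi_1}}$.
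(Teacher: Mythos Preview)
Your proposal is correct and matches the paper's approach exactly: the paper states that Theorem~\ref{thr:mpmsc_to_msc} ``follows immediately by Lemma~\ref{lemma:cmsc_to_msc}'' after Lemma~\ref{ridofsubindiceslemma}, which is precisely the composition you describe. Your bookkeeping on size, equivalence (via strong equivalence in the second step), and timing is accurate and more explicit than what the paper spells out.
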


\subsection{A normal form for MSC}

A program of \textbf{$\msc[1]$} is a program of $\msc$ such that the modal depth of terminal (respectively, iteration) clauses is zero (resp., at most one).
We begin with the following lemma. The proof, given in the appendix, is based on (1) making terminal clauses part of iteration clauses by suitably using conditional $\msc$ and (2)
translating $\cmsc$ to $\msc$.

\begin{lemma}\label{lem:term_zero} 
    For every $\Pi$-program $\Lambda$ of $\msc$, there exists an equivalent $\Pi$-program of $\msc$ where the modal depth of terminal clauses is zero. The size of the program is linear in $|\Lambda|$. The computation time is linear in the computation time of $\Lambda$.
%
%
%
\end{lemma}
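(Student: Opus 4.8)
The plan is to eliminate modal depth from terminal clauses by "delaying" the program by one round: we introduce a dedicated flag predicate that is true only in round $0$, force every head predicate to be $\bot$ in round $0$ via its terminal clause, and rewrite each iteration clause so that in round $1$ it computes (the round-$0$ value of) the original terminal body, and from round $2$ onwards it computes the original iteration body evaluated one step late. Because the original terminal bodies may contain diamonds, this "delayed terminal computation" must itself happen inside an iteration clause, which is allowed to have positive modal depth; what we gain is that the new terminal clauses are all literally $X(0) \coloneq \bot$, hence of modal depth zero.

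**Concretely, first** I would add a fresh head predicate $Z$ with $Z(0) \coloneq \top$ and $Z \coloneq \bot$, so that $Z$ is true exactly in round $0$ and false thereafter; equivalently $\neg Z$ is true exactly in rounds $\geq 1$. For each original head predicate $Y_i$ with terminal clause $Y_i(0)\coloneq \varphi_i$ and iteration clause $Y_i \coloneq \psi_i$, I would set the new terminal clause to $Y_i(0) \coloneq \bot$ and the new iteration clause to a conditional clause of $\cmsc$ of the form $Y_i \coloneq_{Z} \varphi_i ; \psi_i$ — that is, "if $Z$ holds (round $0$), become the old terminal body; otherwise become the old iteration body". Here is the subtlety that makes the bookkeeping work: when this conditional clause fires in round $1$, the condition $Z^{1}$ is true (since $Z$ is true in round $0$... wait, $Z^1$ is the value of $Z$ in round $1$, which is false). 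Because of this off-by-one issue I would instead use the flag differently: I want the round-$1$ value of $Y_i$ to equal the round-$0$ value of $\varphi_i$. Since the iteration-clause semantics define $Y_i^{n+1}$ from the round-$n$ formulas, the round-$1$ value of $Y_i$ under clause $Y_i \coloneq_{Z}\varphi_i;\psi_i$ is: if $Z^1$ then $\varphi_i^1$ else $\psi_i^1$. So I actually want the condition to detect "we are computing round $1$", i.e. to be true when the previous round was round $0$; this is exactly what a flag that is true in round $0$ gives us, provided we read it as the condition $Z$ (whose round-$1$ formula $Z^1 = \bot$...). Cleaner: use a flag $Z'$ with $Z'(0)\coloneq \top$, $Z' \coloneq Z$ where $Z(0)\coloneq \top, Z\coloneq\bot$; then $Z'$ is true in rounds $0$ and $1$ and false afterwards, so $Z' \wedge \neg Z$ is true exactly in round $1$ — wait, $Z$ is true in round $0$ only, so in round $1$ we have $Z=\bot$, $Z'=\top$. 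Then the condition $Z' \wedge \neg Z$ picks out round $1$ precisely. I would use iteration clause $Y_i \coloneq_{Z' \wedge \neg Z} \varphi_i; \psi_i'$, where $\psi_i'$ is $\psi_i$ with each head predicate $Y_j$ replaced by itself (no change) — but I must verify the indexing so that rounds $\geq 2$ of the new program reproduce rounds $\geq 1$ of the old one, and that acceptance/attention/print predicates and the output round shift by exactly one, which changes "equivalent" (acceptance equivalence, which ignores the exact round) but keeps computation time linear (in fact $+1$).

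**Having set up the delayed program in $\cmsc$**, I would then invoke Lemma~\ref{lemma:cmsc_to_msc} to convert it to a strongly equivalent $\Pi$-program of plain $\msc$ with only a linear blow-up in size and with the same maximal modal depth in terminal and in iteration clauses; since our $\cmsc$ program has terminal clauses of modal depth $0$ and iteration clauses of modal depth $\leq \max(\mdi(\Lambda),\mdt(\Lambda))$, the final $\msc$ program has terminal modal depth $0$ as required, size $\ordo(|\Lambda|)$, and computation time equal to that of $\Lambda$ plus a constant, hence linear.

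**The main obstacle** I anticipate is purely the round-indexing bookkeeping: getting the flag predicates and the conditional clause arranged so that round $1$ of the new program faithfully installs the (possibly diamond-containing) terminal-body values, round $\geq 2$ faithfully continues the original iteration, and the attention/print semantics still detect acceptance and the correct output string, merely one round later. There is no conceptual difficulty — diamonds in terminal bodies are harmless once they live inside an iteration clause — but one has to be careful that a terminal body $\varphi_i$ referencing propositions and now also (vacuously, since $\varphi_i \in \ml(\Pi)$ contains no head predicates) nothing else, is evaluated against the correct round; since $\varphi_i$ has no schema variables, its round-$1$ formula $\varphi_i^1$ is just $\varphi_i$ itself, so in fact the substitution machinery is trivial here and the delay works cleanly. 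I would close by noting the size is linear (each clause grows by an additive constant plus the $\cmsc\to\msc$ linear factor) and the computation-time overhead is an additive constant.
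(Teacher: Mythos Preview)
Your overall approach is exactly the paper's: delay the program by one round using a flag predicate, set all new terminal clauses to $\bot$, use a conditional iteration clause that evaluates the old terminal body in the first real round and the old iteration body thereafter, and then apply Lemma~\ref{lemma:cmsc_to_msc} to return from $\cmsc$ to $\msc$.

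However, you tripped over the semantics of conditional clauses and abandoned the right construction for a broken one. Recall (from Section~\ref{extensions}) that in computing $Y_i^{n+1}$, the condition schema $\varphi$ is turned into $\varphi^{n+1}$, which substitutes the \emph{round-$n$} iteration formulae $Y_k^{n}$ for the head predicates. Hence with your single flag $Z$ (where $Z(0)\coloneq\top$, $Z\coloneq\bot$), the condition used when forming $Y_i^{1}$ is $Z^{0}=\top$, so the consequence $\varphi_i$ fires and $Y_i^{1}\equiv\varphi_i = X_i^{0}$; for $n\ge 1$ the condition is $Z^{n}=\bot$, so the backup $\psi_i$ fires and $Y_i^{n+1}\equiv X_i^{n}$ by induction. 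Your first construction was therefore correct. The paper does exactly this with the complementary flag $I$ (so $I(0)\coloneq\bot$, $I\coloneq\top$) and the clause $X_{\psi_i}\coloneq_I \psi_i;\varphi_i$.

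Your two-flag fallback with condition $Z'\wedge\neg Z$ is not only unnecessary but has a genuine gap: when forming $Y_i^{1}$, the condition uses round-$0$ values, giving $Z'^{0}\wedge\neg Z^{0}=\top\wedge\neg\top=\bot$, so you take the backup $\psi_i$ with all head predicates set to $\bot$. This round-$1$ value is junk and may make an attention predicate true, causing the new program to output prematurely and destroying equivalence. Stick with the single flag.
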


We then show that the modal depth of iteration clauses can be reduced to one. The proof, given in the appendix, is based on the simple strategy of using fresh head predicates $X_{\Diamond\psi}$ for each subschema of type $\Diamond\psi$ of the original program. With that strategy and also by using Lemma \ref{lem:term_zero}, we can prove the following normal form result.

\begin{theorem}\label{thr:msc1}
    For every $\Pi$-program $\Lambda$ of $\msc$, there exists an equivalent $\Pi$-program of $\msc[1]$. The size of the $\msc[1]$-program is $\ordo(\abs{\Lambda})$ and the computation time of the program is $\ordo(max(1, \md(\Lambda)))$ times the computation time of $\Lambda$.
\end{theorem}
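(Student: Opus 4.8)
The plan is to normalise the terminal clauses first and then flatten the iteration clauses, compensating for the latter with a modest clocked slow-down. First I would apply Lemma~\ref{lem:term_zero} to $\Lambda$, obtaining an equivalent $\Pi$-program $\Lambda_0$ of $\msc$ with terminal clauses of modal depth $0$, of size $\ordo(\abs{\Lambda})$ and with computation time linear in that of $\Lambda$. Put $d \coloneqq \mdi(\Lambda_0)$; since the construction of $\Lambda_0$ only merges terminal and iteration bodies, $d \le \md(\Lambda)$, so $d+1 = \ordo(\max(1,\md(\Lambda)))$. If $d \le 1$ then $\Lambda_0$ is already an $\msc[1]$-program and we are done, so from now on assume $d \ge 2$.

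Next I would flatten the iteration clauses using fresh head predicates. For every subschema of the form $\Diamond\psi$ occurring in an iteration-clause body of $\Lambda_0$ introduce a fresh head predicate $X_{\Diamond\psi}$ and call $\md(\Diamond\psi)$ its \emph{level}. For a schema $\theta$, write $\widehat{\theta}$ for the schema obtained by replacing every \emph{maximal} subschema of $\theta$ of the form $\Diamond\chi$ (one not properly inside a larger diamond subschema) with the head predicate $X_{\Diamond\chi}$; then $\widehat{\theta}$ has modal depth $0$ and $\Diamond\widehat{\psi}$ has modal depth exactly $1$. I would replace each iteration clause $Y_i \coloneq \varphi_i$ of $\Lambda_0$ by $Y_i \coloneq \widehat{\varphi_i}$, keep the original (depth-$0$) terminal clauses, and for each $X_{\Diamond\psi}$ add the iteration clause $X_{\Diamond\psi} \coloneq \Diamond\widehat{\psi}$ and the terminal clause $X_{\Diamond\psi}(0) \coloneq \bot$. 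Using \emph{maximal} subschemata keeps this linear: each diamond node and each non-diamond node of $\Lambda_0$ is charged to at most one new clause, so the resulting program has size $\ordo(\abs{\Lambda})$. It has terminal clauses of modal depth $0$ and iteration clauses of modal depth at most $1$, but the fresh predicates now lag behind, so without further care attention and print predicates could fire at the wrong rounds.

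To repair the timing I would run the flattened program under a cyclic clock $T_0,\dots,T_d$ of period $d+1$, built in linear size exactly as in the proof of Theorem~\ref{thr:mpc_to_mpmsc}, with $T_j$ true precisely in the rounds congruent to $j$ modulo $d+1$ and $T_0$ true in round $0$, and I would turn the iteration clauses into conditional clauses of $\cmsc$: for an original predicate, $Y_i \coloneq_{T_0} \widehat{\varphi_i};\, Y_i$; for a fresh predicate of level $j$, $X_{\Diamond\psi} \coloneq_{T_j} \Diamond\widehat{\psi};\, X_{\Diamond\psi}$. Then each block of $d+1$ consecutive rounds proceeds level by level: the level-$1$ diamonds are evaluated from the (block-constant) original-predicate values, then the level-$2$ diamonds from the level-$1$ values, and so on up to level $d$, after which the original predicates commit. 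A straightforward induction on the block index $m$ then shows that at round $(d+1)m$ every original head predicate holds exactly its round-$m$ value in $\Lambda_0$, and that the original head predicates (hence the attention and print predicates) stay constant between consecutive block boundaries; therefore the clocked program is equivalent to $\Lambda_0$, and hence to $\Lambda$, with computation time multiplied by $d+1 = \ordo(\max(1,\md(\Lambda)))$.

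Finally I would apply Lemma~\ref{lemma:cmsc_to_msc} to convert this $\cmsc$-program into an equivalent $\Pi$-program of $\msc$ of linear size and with unchanged maximum modal depth of terminal and of iteration clauses; since the clock clauses and the backups have modal depth $0$ and the only diamonds are the single ones in the clauses $\Diamond\widehat{\psi}$, the outcome is an $\msc[1]$-program. Its size is $\ordo(\abs{\Lambda_0})+\ordo(d)=\ordo(\abs{\Lambda})$ and its computation time is $\ordo(\max(1,\md(\Lambda)))$ times that of $\Lambda$, as required. I expect the main obstacle to be exactly the third step: arranging which clock phase updates which predicates so that the level-by-level propagation inside a block yields the correct round-$m$ values at block boundaries, and verifying that attention and print predicates are never spuriously true at the intermediate rounds.
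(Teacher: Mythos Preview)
Your approach is essentially the paper's: apply Lemma~\ref{lem:term_zero}, introduce one fresh head predicate per diamond subschema with a depth-$1$ body $\Diamond\widehat{\psi}$, drive everything with a cyclic clock, and then remove the conditional clauses via Lemma~\ref{lemma:cmsc_to_msc}. Your linearity argument using maximal diamond subschemata is correct, and the overall plan would succeed.

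However, your concrete clock-phase assignment is inverted. You set $T_0$ true in round $0$ and attach the original head predicates to the condition $T_0$; under the $\cmsc$ semantics this makes $T_0$ hot in round $1$, so the original predicates $Y_i$ update \emph{first} in each block, before any $X_{\Diamond\psi}$ has been computed. In round $1$ every $X_{\Diamond\psi}$ still has its terminal value $\bot$, so $Y_i^1 = \widehat{\varphi_i}[X\mapsto\bot,\,Y\mapsto Y^0]$, which is in general neither $Y_i^0$ nor the intended $Y_i^1$ of $\Lambda_0$; the level-$1$ predicates then update in round $2$ from these corrupted $Y$-values, and the error propagates. Consequently your invariant ``$Y_i^{(d+1)m}$ equals $\Lambda_0$'s $Y_i^m$'' already fails at $m=1$, and an attention predicate can become true spuriously in round $1$. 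The fix is exactly what your prose describes but your assignment contradicts: put the original predicates at $T_d$ and the level-$j$ predicates at $T_{j-1}$ (or, equivalently, shift the clock). The paper does essentially this, and moreover uses period $d$ rather than $d{+}1$ by not introducing fresh predicates for the level-$d$ diamonds at all: it lets the original clause $X_{X_i}\coloneq_{T_d}\psi_i^*;X_{X_i}$ carry a single top-level diamond directly, which saves one phase per block.
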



%
%
%

\subsection{Equivalence and time loss}

We are now ready to link $\msc$ to $\mpc$s.
In Section \ref{sec:mpc_to_mpmsc}, we proved Theorem \ref{thr:mpc_to_mpmsc} that shows we can translate $\mpc$s to communication equivalent $\mpmsc$-programs of size linear in the size of the $\mpc$. On the other hand, Theorem \ref{thr:mpmsc_to_msc} shows that we can translate any $\mpmsc$-program to an equivalent program of $\msc$. Thus, we get the following.
\begin{theorem}\label{thr:mpc_to_msc}
    Given an $\mpc$ for $(\Pi, \Delta)$, we can construct an equivalent $\Pi$-program of $\msc$. For a constant bound $c$ for the fan-in of $\mpc$s, the size of the program is linear in the size of the circuit. The computation time is $\ordo(d + 2^{\abs{\Pi_1}})$ times the computation time of the $\mpc$, where $d$ is the depth of the $\mpc$.
\end{theorem}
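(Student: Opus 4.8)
The plan is to chain the two translations already established: first pass from the $\mpc$ to a communication equivalent $\mpmsc$-program via Theorem \ref{thr:mpc_to_mpmsc}, and then pass from that $\mpmsc$-program to an equivalent $\msc$-program via Theorem \ref{thr:mpmsc_to_msc}. First I would invoke Theorem \ref{thr:mpc_to_mpmsc} on the given $\mpc$ $C$ for $(\Pi,\Delta)$: this yields a $(\Pi,\Delta)$-program $\Lambda_C$ of $\mpmsc$ that is communication equivalent to $C$, and whose size is $\ordo(m)$ when the fan-in of $C$ is bounded by a constant $c$, where $m=\abs{C}$. Since communication equivalence implies (acceptance) equivalence when we only care about outputs at nodes, $\Lambda_C$ is in particular equivalent to $C$ in the sense of producing the same output (or no output) at every node of every suitable model.

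Next I would apply Theorem \ref{thr:mpmsc_to_msc} to $\Lambda_C$. The maximum subindex $I$ of a diamond appearing in $\Lambda_C$ is at most $\Delta$ (the construction in Theorem \ref{thr:mpc_to_mpmsc} only introduces diamonds $\Diamond_j$ for $j\in[\Delta]$, and in fact only those $j$ that occur as a neighbour index), so $I=\ordo(\Delta)$; treating $\Delta$ as a constant this is $\ordo(1)$. Theorem \ref{thr:mpmsc_to_msc} then produces an equivalent $\Pi$-program of $\msc$ of size $\ordo(I+\abs{\Pi_1}+\abs{\Lambda_C}) = \ordo(\abs{\Pi_1}+m)$; since $\abs{\Pi_1}=\ordo(m)$ is absorbed (or, if one prefers, one states the bound as linear in $\abs{\Lambda_C}$ plus $\abs{\Pi_1}$, and under the standing conventions this is linear in $m$), the size is linear in the size of the circuit, as claimed. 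For the time bound, Theorem \ref{thr:mpc_to_mpmsc} gives that $\Lambda_C$ runs in $d+1$ rounds per communication round of $C$ (one cycle of length $d(C)+1$ per message-passing step), so the computation time of $\Lambda_C$ at any node is $\ordo(d)$ times that of $C$; then Theorem \ref{thr:mpmsc_to_msc} multiplies the computation time by a further factor $2^{\ordo(\abs{\Pi_1})}$. Composing, the final $\msc$-program has computation time $\ordo(d)\cdot 2^{\ordo(\abs{\Pi_1})} = \ordo(d + 2^{\abs{\Pi_1}})$ times that of the $\mpc$, after absorbing the polynomial factor into the exponential (or equivalently writing $\ordo(d\cdot 2^{\abs{\Pi_1}})$, which the statement records in the slightly looser additive form).

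The only subtlety worth checking carefully — and the place where a careless composition could go wrong — is bookkeeping the equivalence notion through the two steps: Theorem \ref{thr:mpc_to_mpmsc} delivers \emph{communication} equivalence (and \emph{acceptance} equivalence), whereas Theorem \ref{thr:mpmsc_to_msc} preserves \emph{acceptance} equivalence but not strong or communication equivalence (the $\msc$-program has no notion of communication rounds left). Since the statement of Theorem \ref{thr:mpc_to_msc} only claims ordinary equivalence, this is fine: acceptance equivalence composes transitively, so the resulting $\msc$-program is equivalent to $C$. I do not expect any real obstacle here — the theorem is essentially a corollary obtained by composing the two main constructions — but one should state explicitly that the constant bounds $c$ on fan-in and $\Delta$ on degree are what make both size estimates linear, and note that the diamond-subindex parameter $I$ fed into Theorem \ref{thr:mpmsc_to_msc} is controlled by $\Delta$.
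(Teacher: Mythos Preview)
Your approach is exactly the paper's: compose Theorem \ref{thr:mpc_to_mpmsc} (MPC $\to$ MPMSC) with Theorem \ref{thr:mpmsc_to_msc} (MPMSC $\to$ MSC), noting that acceptance equivalence composes and that $I\leq\Delta$ and $\abs{\Pi_1}\leq\abs{\Pi}\leq m$ keep the size linear. One arithmetical slip: you write $\ordo(d)\cdot 2^{\ordo(\abs{\Pi_1})} = \ordo(d + 2^{\abs{\Pi_1}})$ and call the additive form ``looser'', but in fact $d+2^{\abs{\Pi_1}}\leq d\cdot 2^{\abs{\Pi_1}}$ for $d\geq 1$, so the additive bound is \emph{tighter} and does not follow from the product; the straightforward composition yields $\ordo(d\cdot 2^{\abs{\Pi_1}})$, and the paper itself gives no further argument here, so either accept the multiplicative bound or flag the additive one as an imprecision in the statement.
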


 Theorem \ref{thr:mpmsc_to_mpc} showed that we can translate an $\mpmsc$-program to a strongly equivalent $\mpc$. 
Theorem \ref{thr:msc1} showed how to translate an $\msc$-program to a strongly equivalent $\msc[1]$-program, implying that translating an $\msc$-program to an $\mpmsc$-program can be done without blowing up program size too much. These results imply the following corollary. 
\begin{theorem}\label{thr:msc_to_mpc}
Given $\Pi$, $\Delta$ and a $\Pi$-program of $\msc$ of size $m$, 
there exists an equivalent $\mpc$ for $(\Pi, \Delta)$ of size $\ordo(\Delta m + \abs{\Pi})$. The computation time is $\ordo(\max(1,d))$ times the
computation time of the $\msc$-program, where $d$ is the modal depth of the $\msc$-program.  
\end{theorem}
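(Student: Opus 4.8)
The plan is to chain together the results already established in the paper. Given a $\Pi$-program $\Lambda$ of $\msc$ of size $m$, Theorem~\ref{thr:msc1} gives an equivalent $\Pi$-program $\Lambda_1$ of $\msc[1]$ with $\abs{\Lambda_1} = \ordo(m)$ and whose computation time is $\ordo(\max(1,\md(\Lambda)))$ times that of $\Lambda$. A program of $\msc[1]$ has terminal clauses of modal depth zero and iteration clauses of modal depth at most one, so it is already a program of $\mmsc$ except that it uses the single diamond $\Diamond$ rather than the operators $\Diamond_i$ (and it has no conditional rules). Hence $\Lambda_1$ can be regarded directly as a program of $\mpmsc$: we simply reinterpret $\Diamond$ as, say, $\Diamond_1$? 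No---this does not preserve semantics. The correct move is to observe that $\Lambda_1$, as an $\msc[1]$-program, is a legitimate $\mpmsc$-program once we note that $\msc$ is a syntactic fragment of $\mpmsc$ only if we allow a diamond $\Diamond$; since $\mpmsc$ forbids $\Diamond$, we must instead translate $\Diamond$ into the multimodal setting by $\bigvee_{i \le \Delta}\Diamond_i$, which has the same semantics as $\Diamond$ on models in $\cK(\Pi,\Delta)$. Applying this substitution to each (at most one) diamond occurrence in $\Lambda_1$ increases the size by a factor $\ordo(\Delta)$, giving an $\mpmsc$-program $\Lambda_2$ (with no conditional rules, modal depth of terminal clauses zero, consequences/backups/iteration bodies of modal depth at most one, as required) of size $\ordo(\Delta m)$ that is equivalent to $\Lambda_1$ and hence to $\Lambda$, with the same computation time as $\Lambda_1$.

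Next I would feed $\Lambda_2$ into Theorem~\ref{thr:mpmsc_to_mpc}, which produces a strongly equivalent $\mpc$ for $(\Pi,\Delta)$ of size $\ordo(\Delta\,\abs{\Lambda_2} + \abs{\Pi}) = \ordo(\Delta^2 m + \abs{\Pi})$. This is worse than the claimed $\ordo(\Delta m + \abs{\Pi})$ by a factor of $\Delta$, so the naive chaining is not tight enough, and closing this gap is the main obstacle. The fix is to avoid expanding $\Diamond$ into $\bigvee_{i\le\Delta}\Diamond_i$ at the \emph{program} level and instead handle it at the \emph{circuit} level: one extends the circuit construction in the proof of Theorem~\ref{thr:mpmsc_to_mpc} so that a diamond-node in a clause body is simulated by a subcircuit that takes the disjunction over the $\Delta+1$ neighbour-blocks of the relevant head-predicate input gates, using a single $\lor$-gate of fan-in $\ordo(\Delta)$ rather than $\Delta$ separate diamond-macros each of constant size. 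Equivalently, and more cleanly, one observes that an $\msc[1]$-program is already essentially in the form the proof of Theorem~\ref{thr:mpmsc_to_mpc} consumes, and the $\ordo(\Delta m + \abs{\Pi})$ bound there comes precisely from the $\ordo(\Delta)$ blow-up in Lemma~\ref{lem:perform} applied once per head predicate/proposition, not per diamond; so directly reusing that proof (with $\Diamond$ treated as a built-in ``all-successors'' modality whose circuit gadget costs $\ordo(\Delta)$ once) yields size $\ordo(\Delta m + \abs{\Pi})$ without the extra $\Delta$.

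Thus the cleanest route, and the one I would write up, is: first apply Theorem~\ref{thr:msc1} to get an equivalent $\msc[1]$-program $\Lambda_1$ of size $\ordo(m)$ with computation-time blow-up $\ordo(\max(1,\md(\Lambda)))$; then observe that, because $\Lambda_1$ has terminal clauses of modal depth $0$ and iteration clauses of modal depth $\le 1$, it can be converted to a strongly equivalent $\mpmsc$-program over $(\Pi,\Delta)$ with only a constant-factor size increase by treating $\Diamond$ as the ``some successor'' modality and, in the diamond-handling step of the subsequent circuit construction, charging the $\ordo(\Delta)$ cost to the circuit rather than to the program; finally apply (the proof of) Theorem~\ref{thr:mpmsc_to_mpc} to obtain a strongly equivalent $\mpc$ for $(\Pi,\Delta)$ of size $\ordo(\Delta\,\abs{\Lambda_1}+\abs{\Pi}) = \ordo(\Delta m + \abs{\Pi})$. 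The computation-time statement follows by composing the time bounds: Theorem~\ref{thr:mpmsc_to_mpc} gives strong equivalence with no time loss, so the only time loss is the $\ordo(\max(1,\md(\Lambda)))$ factor from Theorem~\ref{thr:msc1}, matching the claim with $d = \md(\Lambda)$. The delicate point to get right in the write-up is exactly the bookkeeping that keeps the diamond-expansion cost at $\ordo(\Delta)$ overall rather than $\ordo(\Delta)$ per diamond occurrence, which is why one wants to route through $\msc[1]$ (where diamonds are already sparse) and reuse the $\mpmsc\to\mpc$ construction directly rather than literally invoking Lemma~\ref{lem:perform} on an already-expanded multimodal program.
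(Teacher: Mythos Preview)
Your proposal is correct and follows exactly the route the paper takes: chain Theorem~\ref{thr:msc1} (reducing to $\msc[1]$) with Theorem~\ref{thr:mpmsc_to_mpc} (building the circuit). The paper presents this theorem as an immediate corollary in two sentences and does not spell out either the $\Diamond \mapsto \bigvee_{i\le\Delta}\Diamond_i$ step or the size arithmetic. You are more careful than the paper here: you correctly notice that a purely black-box composition of the stated bounds yields $\ordo(\Delta^2 m + \abs{\Pi})$ rather than $\ordo(\Delta m + \abs{\Pi})$, and your proposed fix---either absorb the $\ordo(\Delta)$ cost of each $\Diamond$ into a single wide $\lor$-gate at the circuit level, or note that the $\Delta$-blow-up in the proof of Theorem~\ref{thr:mpmsc_to_mpc} comes from Lemma~\ref{lem:perform}, whose real cost is $\ordo(\Delta)$ per \emph{distinct} head predicate rather than per symbol occurrence (and the diamond expansion does not change the number of head predicates)---recovers the claimed bound. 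The time analysis is also right: Theorem~\ref{thr:mpmsc_to_mpc} gives strong equivalence, so all the slowdown is the $\ordo(\max(1,\md(\Lambda)))$ factor from Theorem~\ref{thr:msc1}.
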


%
%
%

%
%
%
%
%

By the above results, we observe that
problems in $\mathrm{DCC}_{\Delta}^c[\log n]$ can be alternatively described with sequences of $\msc$-programs. 

Finally, we note that also Theorem \ref{thr:mpc_to_mpmsc} is one of our main results. It reminds us that
communication time is indeed a different concept from computation time.

We note that in addition to standard distributed computing, we can use $\msc$ for modeling, .e.g., neural networks. Real numbers can be naturally modeled via using head predicates to encode floating-point numbers. Identifiers (and some lighter solutions) can be used for keeping track where messages were sent from, and biases are naturally encoded into proposition symbols. Activation functions can be approximated, and of course standard arithmetic calculations as well. We leave this for future work.

\section{Coloring based on Cole \& Vishkin}\label{C&V}

\subsection{Summary of the Cole-Vishkin algorithm}\label{Summary of the Cole-Vishkin algorithm}

A \textbf{graph} is an ordered pair $G = (N, E)$, where $N$ is a set of \textbf{nodes}, and $E$ is a set of \textbf{edges} $(v, u) \in N \times N$. The edge relation is symmetric (if $(v, u) \in E$ then $(u, v) \in E$) and irreflexive ($(v, v) \notin E$ for all $v \in N$). We call $v$ and $u$ \textbf{neighbors} if $(v, u) \in E$.

An \textbf{orientation} of graph $G$ is an ordered pair $(N, E')$, where for each pair $(v, u) \in E$, either $(v, u) \in E'$ or $(u, v) \in E'$, but not both. If $(v, u) \in E'$, then we call $u$ the \textbf{parent} of $v$ and $v$ the \textbf{child} of $u$. We call the tuple $(v_1, \dots, v_k)$, $k > 1$, a \textbf{directed cycle} if $v_1 = v_k$ and $(v_1, v_2), \dots, (v_{k - 1}, v_k) \in E'$.

A \textbf{forest decomposition} of graph $G$ is an ordered set $\mathcal{F}$ of ordered pairs $F_i = (N, E_i)$ such that there is an orientation $(N, E')$ of $G$ with no directed cycles, where $\bigcup E_i = E'$, and $E_i \cap E_j = \emptyset$ for all $i \neq j$. We call each $F_i \in \mathcal{F}$ an \textbf{oriented forest} (of $G$).

We denote $\log(k) \coloneqq \lfloor \log_2(k) \rfloor + 1$, so that $\log(k)$ is the number of bits of $k$ in binary. Let us consider a graph $G$, where the number of nodes is $n$, and the maximum degree of a node is $\Delta$. We describe a color reduction algorithm derived from \cite{barenboim2013distributed}, pages 36-37, where each node begins with a unique identifier ($\ID$) from the set $\{1, \dots, n\}$; this is its first color. The algorithm produces a $(\Delta + 1)$-coloring in $3^\Delta + \log^*(n) + \mathcal{O}(1)$ communication rounds.

Using the identifiers, each node starts by orienting its adjacent edges so that each edge is pointing at the node with the higher $\ID$; this gives us an orientation of graph $G$. Additionally, each node labels all of its outgoing edges according to the $\ID$s of the connected nodes in ascending order with the labels $1, \dots, \Delta$. By sorting the edges into separate sets according to their labels, we get a forest decomposition of $\Delta$ oriented forests: $\mathcal{F} = (F_1, \dots, F_{\Delta})$.

In each oriented forest, each node runs the Cole-Vishkin algorithm \cite{cole1986deterministic, goldberg1987parallel} (or CV algorithm). The node first compares its color to that of its parent. Colors are coded into binary in the ordinary way. It finds the rightmost bit that is different from that of its parent; let us assume it is the $i$th bit. If the $i$th bit of the node is $b \in \{0, 1\}$, the first bit of its new color is also $b$. The remaining bits of the new color are the bits of the binary representation of the number $i$. A root does not have a parent and instead compares its color to the color $0$.

By repeating this algorithm, each node is able to reduce the number indexing its current color in each oriented forest. If the highest color of a node in a forest has length $k$, the next color will have a length at most $\log(k) + 1$ after the next communication round. After $\log^*(n) + \mathcal{O}(1)$ communication rounds, each node will have a color of length at most $3$ in each forest. This is a $7$-coloring in each forest because the color $0$ is not used.

After this, each $7$-coloring can be further reduced to a $3$-coloring with a technique called ``shift-down'' in just $\mathcal{O}(1)$ communication rounds. These colorings can be concatenated into a $3^{\Delta}$-coloring of the whole graph, which can then be reduced to a $(\Delta + 1)$-coloring with a technique called ``basic color reduction'' in less than $3^\Delta$ communication rounds.

In the next section, we outline an $\mpmsc$-program that simulates the beginning of this coloring algorithm (full details in the appendix). When defining this program, the sets of print predicates, attention predicates and, thus, appointed predicates are all the same. Our notation will differ slightly from before. Previously, head predicates were denoted with single italic letters with indices (e.g., $X_i$). In the program that follows, they are not written in italic and may include multiple symbols (e.g., $\mathrm{HP1}_i$). For convenience, we may also use the same rule for formulae; whether a denotation refers to a variable or formula is made clear when it is introduced.

\begin{definition}\label{define coloring}
    Let $\Lambda$ be a ($\Pi, \Delta$)-program of $\mathrm{MPMSC}$ where for each Kripke-model $(W, R, V) \in \cK(\Pi, \Delta)$, each node $w \in W$ outputs $\overline{p}_{w}$ in some round $n \in \N$. We say that $\Lambda$ \textbf{defines a $k$-coloring} if for each node $w \in W$ in each Kripke-model $(W, R, V) \in \cK(\Pi, \Delta)$ there exist some $a \in [k - 1]_0$ and $b \in \N$ such that $\overline{p}_{w} = 0^{a}10^{b}$. This coloring is \textbf{proper} if $\overline{p}_{w} \neq \overline{p}_{v}$ for each $(w, v) \in R$ in each Kripke-model $(W, R, V) \in \cK(\Pi, \Delta)$.
\end{definition}

\subsection{Simulating the CV algorithm with a program}\label{Simulating the C&V algorithm with a program}

Let $G = (N, E)$ be a graph with unique identifiers, where the number of nodes is $n$, and the maximum degree of a node is $\Delta$. Its \textbf{corresponding} Kripke model with identifiers over $\Pi = \{p_1, \dots, p_{\log(n)}\}$ is the model $(W, R, V)$, where $W = N$, $R = E$, and $v \in V(p_i)$ if and only if the $i$th bit of the $\mathrm{ID}$ of $v$ is $1$. Note that $(W, R, V) \in \cK(\Pi, \Delta)$.

The program starts by simulating the separation of the graph into oriented forests. For this, each node needs to know which of its neighbors have a higher $\ID$ than itself. This is done by comparing the bits of the $\ID$s one by one, which takes $\ell = \log(n)$ iterations. We count these iterations with variables $\mathrm{T}_1, \dots, \mathrm{T}_{\ell + 2}$. Each variable starts off as false. Variable $\mathrm{T}_1$ becomes permanently true after the first iteration, $\mathrm{T}_2$ after the second, etc.

To compare $\mathrm{ID}$s with neighbors, we define variables $\mathrm{I}_1, \dots, \mathrm{I}_\ell$ for the bits of a node's $\ID$ and variables $\mathrm{I}^\delta_1, \dots, \mathrm{I}^\delta_\ell$ ($1 \leq \delta \leq \Delta$) for the bits of its $\delta$th neighbor's $\ID$. In the terminal clauses, each $\mathrm{I}_i$ receives the truth value of the proposition $p_i$. In the first iteration, variables $\mathrm{I}_i$ stay the same while variables $\mathrm{I}^\delta_i$ receive the truth values of formulae $\Diamond_\delta \mathrm{I}_i$. For the next $\ell$ iterations, the variables rotate truth values, i.e., $\mathrm{I}^{(\delta)}_i$ receives the truth value of $\mathrm{I}^{(\delta)}_{i-1}$.

After one iteration round, we have the leftmost bit of a node's ($\delta$th neighbor's) $\ID$ in variable $\mathrm{I}^{(\delta)}_{\ell}$. After another iteration, this variable holds the second bit, etc. We determine if a node has a higher $\ID$ than its neighbor by finding the first bit that separates the $\ID$s; the node whose bit is $1$ has the higher $\ID$. We define variables $\mathrm{DIF}^{\delta}$ ($1 \leq \delta \leq \Delta$) for comparing the bits in the variables $\mathrm{I}_{\ell}$ and $\mathrm{I}^{\delta}_{\ell}$; if the truth values (i.e., bits) are different, $\mathrm{DIF}^{\delta}$ is true.

Finally, we define variables $\mathrm{HIGH}^{\delta}$ and $\mathrm{LOW}^{\delta}$ ($1\leq \delta \leq \Delta$). After $\mathrm{DIF}^{\delta}$ becomes true, $\mathrm{HIGH}^{\delta}$ turns true if a node has a higher $\ID$ than its $\delta$th neighbor; otherwise $\mathrm{LOW}^{\delta}$ turns true. After $\ell + 2$ iterations, either $\mathrm{HIGH}^{\delta}$ or $\mathrm{LOW}^{\delta}$ is true for every $\delta$ in every node.

It is now possible to define a formula $\langle \delta \rangle \varphi$ of length $\mathcal{O}(\Delta)$ that is true in node $v$ if and only if $\varphi$ is true in the parent of node $v$ in the oriented forest $F_{\delta}$. The formula guesses the index $i$ (if it exists) such that $\Diamond_i \varphi$ is true and $i$ is the $\delta$th index for which $\mathrm{LOW}^{i}$ is true.

Next, we extend the program to simulate the CV algorithm in each oriented forest. We compare the colors of nodes bit by bit and place communication rounds between the color comparisons. To do this, we use the clock from Section \ref{clock} with its associated variables. The strings of the clock have $\log(\ell) = \log\log(n)$ bits. A unique moment at the end of the clock's cycle is used to trigger the global communication rounds in this phase of the program.

To compare colors with neighbors, we define variables $\mathrm{B}^{\delta}_i$ ($1 \leq \delta \leq \Delta$) for the bits of a node's color in forest ${F_{\delta}}$ and variables $\mathrm{P}^{\delta}_i$ ($1 \leq \delta \leq \Delta$) for the bits of its parent's color, starting with the $\mathrm{ID}$s. Each variable $\mathrm{B}^{\delta}_i$ and $\mathrm{P}^{\delta}_i$ cycles through the bits of the color in forest $F_{\delta}$, i.e., $\mathrm{B}^{\delta}_i$ receives the truth value of $\mathrm{B}^{\delta}_{i+1}$, and $\mathrm{P}^{\delta}_i$ receives the truth value of $\mathrm{P}^{\delta}_{i+1}$. The variables rotate once every ``minute''. The minute hand always indicates the original position of the bits currently in $\mathrm{B}^{\delta}_1$ and $\mathrm{P}^{\delta}_1$ in binary. During the global communication round, the variables $\mathrm{B}^{\delta}_i$ update to the next color $\mathrm{N}^{\delta}_i$ (see below) and the variables $\mathrm{P}^{\delta}_i$ update to the next color of the parent with the formula $\langle \delta \rangle \mathrm{N}^{\delta}_i$. These formulae are always untrue for root nodes, which means that each root compares its color to the color $0$ just as described in section \ref{Summary of the Cole-Vishkin algorithm}.

We compare the bits of colors between child and parent nodes in variables $\mathrm{B}^{\delta}_{1}$ and $\mathrm{P}^{\delta}_{1}$ using the same variables $\mathrm{DIF}^{\delta}$ as before. Finally, we define variables $\mathrm{N}^{\delta}_i$ ($1 \leq \delta \leq \Delta$) that compute the next color of a node during each round of the clock. When $\mathrm{DIF}^{\delta}$ becomes true, the distinct bit of the node is copied from $\mathrm{B}^\delta_1$ to $\mathrm{N}^{\delta}_1$ and the bits of its position are copied from the minute hand variables $\mathrm{M}_i$ to the variables $\mathrm{N}^{\delta}_{i + 1}$. These values are preserved until the next global communication round when they are used to update variables $\mathrm{B}^{\delta}_i$ and $\mathrm{P}^{\delta}_i$.

The CV algorithm takes $L = \log^*(n) + 3$ communication rounds. We count them with an hour hand consisting of variables $\mathrm{H}_1, \dots, \mathrm{H}_L$; after $\log^*(n) + 2$ ``hours'' (i.e., cycles of the clock), $\mathrm{H}_L$ becomes true, and one ``hour'' later, the iteration of all variables is stopped.

For Lemma \ref{Cole-Vishkin lemma}, we define $7^\Delta$ ``appointed'' predicates corresponding to the color of each node when its colors from the $\Delta$ forests are concatenated by placing all the bits in a row. The following Lemma is now true, and extending the program with the shift-down and basic color reduction techniques, so is the Theorem thereafter (proofs are given in the appendix):
\begin{lemma}\label{Cole-Vishkin lemma}
There exists a formula of $\mpmsc$ with the following properties.
\begin{enumerate}
    \item It defines a proper $7^\Delta$-coloring.
    \item Ignoring appointed predicates, it has $\mathcal{O}(\Delta \log(n))$ heads and a size of $\mathcal{O}(\Delta^2 \log(n))$.
    \item The number of iterations needed is $\mathcal{O}(\log(n) \log\log(n) \log^*(n))$.
    \item There are exactly $\log^*(n) + 4$ global communication rounds.
\end{enumerate}
\end{lemma}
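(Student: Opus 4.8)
The plan is to prove the lemma by exhibiting the program explicitly---assembling the blocks sketched in Section \ref{Simulating the C&V algorithm with a program}---and then verifying the four itemized claims by an inductive analysis of the run of the program on the Kripke model $(W,R,V)\in\cK(\Pi,\Delta)$ corresponding to an arbitrary graph $G$ with $n$ nodes and maximum degree $\Delta$; the verification is the bulk of the argument and is deferred to the appendix. First I would fix the forest-decomposition block: the counter $\mathrm{T}_1,\dots,\mathrm{T}_{\ell+2}$ with $\ell=\log(n)$, the ID registers $\mathrm{I}_i$ and $\mathrm{I}^\delta_i$, the comparison variables $\mathrm{DIF}^\delta$, and the orientation flags $\mathrm{HIGH}^\delta,\mathrm{LOW}^\delta$. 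The only diamond here is $\Diamond_\delta$, used in the first iteration to import the neighbours' ID bits, so this phase costs exactly one global communication round; after $\ell+2$ local iterations the flags $\mathrm{HIGH}^\delta,\mathrm{LOW}^\delta$ are stable and correctly record, for every node and every $\delta$, whether the $\delta$th neighbour is the parent or a child in $F_\delta$, which also validates the parent-access macro $\langle\delta\rangle\varphi$ of size $\mathcal{O}(\Delta)$.

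Next I would attach the Cole--Vishkin phase: instantiate the clock, forward clock and double forward clock of Section \ref{clock} on strings of length $\log\log(n)$---exactly the number of bits needed to address a bit position inside a $\log(n)$-bit ID, hence inside any color that subsequently arises---together with the color registers $\mathrm{B}^\delta_i,\mathrm{P}^\delta_i$ and next-color registers $\mathrm{N}^\delta_i$ initialised to the ID, the once-per-minute rotation of these registers, the bitwise comparison of $\mathrm{B}^\delta_1$ against $\mathrm{P}^\delta_1$ via $\mathrm{DIF}^\delta$, and the copy of the distinguishing bit together with its minute-hand position into $\mathrm{N}^\delta_\bullet$. A global communication round is triggered once per clock cycle, at the unique moment when the clock reads $1^{\log\log(n)}$ for the last time, and precisely then the updates $\mathrm{B}^\delta_i\leftarrow\mathrm{N}^\delta_i$ and $\mathrm{P}^\delta_i\leftarrow\langle\delta\rangle\mathrm{N}^\delta_i$ fire (the latter being vacuously false at roots, so roots compare against color $0$). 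Finally add the hour hand $\mathrm{H}_1,\dots,\mathrm{H}_L$ with $L=\log^*(n)+3$ to freeze all registers after the last Cole--Vishkin round, and introduce the $7^\Delta$ appointed predicates, one for each possible concatenation of $\Delta$ forest-colors of length at most $3$; arrange outputs so that each node prints $0^{a}10^{b}$ with $a$ the index of its appointed predicate.

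For the verification, claims (2)--(4) are bookkeeping. Claim (2): there are $\mathcal{O}(\log n)$ counter/clock/ID variables, $\mathcal{O}(\Delta\log n)$ variables among $\mathrm{I}^\delta_i,\mathrm{B}^\delta_i,\mathrm{P}^\delta_i,\mathrm{N}^\delta_i,\mathrm{HIGH}^\delta,\mathrm{LOW}^\delta$, and $\mathcal{O}(\log^* n)$ hour variables, giving $\mathcal{O}(\Delta\log n)$ heads; the size is dominated by the $\mathcal{O}(\Delta\log n)$ rule bodies containing the macro $\langle\delta\rangle$ of size $\mathcal{O}(\Delta)$, hence $\mathcal{O}(\Delta^2\log n)$. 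Claim (4) is the accounting above: one communication round to build the forests plus $L=\log^*(n)+3$ for Cole--Vishkin, so $\log^*(n)+4$ in total. Claim (3): the forest phase needs $\mathcal{O}(\log n)$ iterations, and each of the $\mathcal{O}(\log^* n)$ Cole--Vishkin rounds is preceded by one clock cycle lasting at most $\mathcal{O}(\log n\cdot\log\log n)$ iterations (there are $\Theta(\log n)$ minutes and each block has length at most $\log\log(n)+2$), giving $\mathcal{O}(\log n\,\log\log n\,\log^* n)$ overall. Claim (1) is the substantive part: one shows by induction over minutes that within a clock cycle the registers $\mathrm{B}^\delta_1,\mathrm{P}^\delta_1$ sweep the bits of the current color and of the parent's color in the order recorded by the minute hand, so at the end of a cycle $\mathrm{N}^\delta_\bullet$ holds exactly the Cole--Vishkin successor color; then, by induction over the $L$ rounds and the standard length recursion $k\mapsto\log(k)+1$, after $\log^*(n)+\mathcal{O}(1)$ rounds every forest color has length at most $3$, i.e.\ lies in $\{1,\dots,7\}$, so the concatenation is a $7^\Delta$-coloring. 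Properness holds because any edge $(w,v)\in R$ lies in exactly one forest $F_\delta$, where $w,v$ are parent and child, so their $F_\delta$-colors---hence their concatenated colors---differ.

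The main obstacle will be the timing bookkeeping inside claim (1): proving that the basic clock, the forward and double forward clocks, the register rotations and the once-per-cycle communication rounds stay in lockstep across an unbounded number of cycles; that the orientation flags $\mathrm{HIGH}^\delta/\mathrm{LOW}^\delta$ are genuinely stable before the first Cole--Vishkin communication round consults them; and that $\log\log(n)$ clock bits always suffice to store the flip-point index, including the first Cole--Vishkin round where colors still have the full $\log(n)$ bits. These are exactly the places where the conditional-rule ``hot''/``active'' semantics of $\mpmsc$ must be tracked carefully, so the full induction is relegated to the appendix.
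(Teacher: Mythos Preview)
Your proposal is correct and follows essentially the same approach as the paper: you assemble exactly the two blocks the paper builds in the appendix (forest decomposition via $\mathrm{T}_i,\mathrm{I}_i,\mathrm{I}^\delta_i,\mathrm{DIF}^\delta,\mathrm{HIGH}^\delta,\mathrm{LOW}^\delta$, then Cole--Vishkin via the clock of Section~\ref{clock}, the registers $\mathrm{B}^\delta_i,\mathrm{P}^\delta_i,\mathrm{N}^\delta_i$, and the hour hand $\mathrm{H}_1,\dots,\mathrm{H}_L$), and your accounting for heads, size, iterations, and communication rounds mirrors the paper's line by line. The only cosmetic difference is that you trigger $\mathrm{CR}$ via the forward/double-forward clocks and the $X_{\text{reset}}$-style moment $1^{\log\log n}$, whereas the paper flags the all-zeros moment directly (and explicitly remarks that your alternative would work but is unnecessary here); this changes nothing in the bounds or the correctness argument.
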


\begin{theorem}\label{Cole-Vishkin theorem}
There exists a formula of $\mpmsc$ 
with the following properties.
\begin{enumerate}
    \item It defines a proper $(\Delta + 1)$-coloring.
    \item It has $\mathcal{O}(\Delta \log(n)) + \mathcal{O}(3^\Delta)$ heads and its size is $\mathcal{O}(\Delta^2 \log(n)) + \mathcal{O}(3^\Delta)$.
    \item The number of iterations needed is $\mathcal{O}(\log(n) \log\log(n) \log^*(n)) + \mathcal{O}(\Delta 3^{\Delta})$.
    \item There are exactly $\log^*(n) + 3^\Delta - \Delta + 11$ global communication rounds.
\end{enumerate}
\end{theorem}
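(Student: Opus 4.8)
The plan is to extend the $\mpmsc$-program $\Lambda_7$ of Lemma~\ref{Cole-Vishkin lemma} --- which already produces a proper $7^\Delta$-colouring encoded by $7^\Delta$ appointed predicates --- by appending gadgets that realise the two remaining phases of the Cole--Vishkin pipeline described in Section~\ref{Summary of the Cole-Vishkin algorithm}: the \emph{shift-down} reduction of each per-forest $7$-colouring to a $3$-colouring, and the \emph{basic colour reduction} of the concatenated $3^\Delta$-colouring down to a $(\Delta+1)$-colouring. All new gadgets are activated \emph{after} the $\log^*(n)+4$ global communication rounds of $\Lambda_7$; this is sound because that phase is synchronous (driven by the hour hand $\mathrm{H}_1,\dots,\mathrm{H}_L$, which behaves identically at every node), so all nodes enter the extension phase in the same round and with the same structural state.

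First I would implement shift-down. For each forest $F_\delta$ I introduce a constant number of fresh head predicates holding the node's current $F_\delta$-colour (initialised from the $3$-bit colour already present at the end of $\Lambda_7$) and use the macro $\langle\delta\rangle$ of Section~\ref{Simulating the C&V algorithm with a program} to let a node read its parent's $F_\delta$-colour. One global communication round performs the shift-down move (every non-root copies its parent's colour; every root recolours to a colour distinct from its own), after which each node sees at most two distinct colours among its $F_\delta$-neighbours; then a constant number of further global communication rounds, in each of which the nodes currently carrying one of the largest colours recolour to the least free colour among three fixed colours, yields a proper $3$-colouring of $F_\delta$. Running all $\Delta$ forests in parallel costs $\mathcal{O}(\Delta)$ heads, and since each gadget invokes $\langle\delta\rangle$ of size $\mathcal{O}(\Delta)$, the size contribution is $\mathcal{O}(\Delta^2)$, absorbed into the $\mathcal{O}(\Delta^2\log(n))$ term. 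Next I concatenate: I maintain a running one-hot encoding of the partial colour over $F_1,\dots,F_j$ (a $3^j$-colouring), each extension step combining the $3^{j-1}$ partial-colour predicates with the three colour predicates of $F_j$; the total cost of obtaining the $3^\Delta$ predicates for the concatenated colour is $\sum_{j\le\Delta}\mathcal{O}(3^{j})=\mathcal{O}(3^\Delta)$ heads and size, and this step uses no communication. Properness of the concatenation follows since the $\Delta$ oriented forests cover an orientation of $E$, so every edge of $G$ is a parent--child edge of some $F_\delta$, where the endpoints already differ.

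Then I would implement basic colour reduction. Using an $\mathcal{O}(\Delta)$-bit counter --- a fresh small clock built as in Section~\ref{clock} --- that steps through the colour values $3^\Delta-1, 3^\Delta-2, \dots, \Delta+1$, in the global communication round associated with value $c$ every node currently coloured $c$ recolours to the least colour in $[\Delta]_0$ not occurring at any of its at most $\Delta$ neighbours; such a colour exists because a node has at most $\Delta$ neighbours, and the step preserves properness by construction. This uses $3^\Delta-\Delta-\mathcal{O}(1)$ global communication rounds; the local ``least free colour'' computation, carried out between consecutive communication rounds with the help of the clock, runs $\mathcal{O}(\Delta)$ iterations per step and hence $\mathcal{O}(\Delta 3^\Delta)$ iterations overall, and its size is $\mathcal{O}(\Delta^2)+\mathcal{O}(3^\Delta)$, again within budget. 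Finally I would designate $\Delta+1$ print and attention predicates so that the node holding final colour $a\in[\Delta]_0$ outputs $0^a 1 0^b$ as required by Definition~\ref{define coloring}, with an attention predicate turning true in the first round after the last communication round, once all colours have stabilised. Summing the round counts of $\Lambda_7$, shift-down, the (communication-free) concatenation, and basic colour reduction, and pinning down the constants, gives exactly $\log^*(n)+3^\Delta-\Delta+11$ global communication rounds; the head, size, and iteration bounds follow by adding the contributions listed above.

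The main obstacle I anticipate is not the combinatorics of Cole--Vishkin, which is standard and already sketched in Section~\ref{Summary of the Cole-Vishkin algorithm}, but the \emph{exact bookkeeping of communication rounds}: in $\mpmsc$ a ``global communication round'' is any iteration round in which some communication clause broadcasts, so one must arrange, via the clock of Section~\ref{clock} and carefully placed conditions, that each phase broadcasts in \emph{precisely} the intended rounds --- in particular that no spurious broadcast occurs while a node is locally computing its least free colour, that the phase transitions neither create nor lose a communication round at the boundary, and that the three phases together contribute exactly $\log^*(n)+3^\Delta-\Delta+11$. Establishing this exact constant, rather than merely an $\mathcal{O}(\cdot)$ bound, together with the (routine but lengthy) strong-equivalence-style induction showing the gadgets faithfully simulate shift-down and basic colour reduction, is where the detailed appendix construction does its work.
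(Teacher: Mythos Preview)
Your high-level plan matches the paper's: extend the program of Lemma~\ref{Cole-Vishkin lemma} by a shift-down phase and then a basic colour reduction phase, and you are right that the exact bookkeeping of broadcasting rounds is the main chore.

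There is, however, a concrete gap in your shift-down sketch. You write that \emph{one} shift-down move suffices, after which ``a constant number of further global communication rounds, in each of which the nodes currently carrying one of the largest colours recolour'' brings each forest down to a $3$-colouring. This does not work. The property that a node sees at most two colours among its $F_\delta$-neighbours holds immediately after a shift-down, but a single reduction round destroys it: siblings that shared colour $7$ may recolour to different values in $\{1,2,3\}$, since their own children (the grandchildren of the common parent) need not share a colour. A node $v$ of colour $6$ can then, in the next reduction round, see children with all three colours $1,2,3$ and a parent with colour $4$ or $5$, leaving $v$ nowhere to go in $\{1,2,3\}$. The paper (following the standard algorithm it cites) therefore \emph{interleaves}: each of four loops performs a fresh shift-down (one round to inherit the parent's colour, one round to learn the parent's new colour) followed by a local reduction step, for $8$ additional communication rounds; this is precisely what makes the constant $11$ come out.

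Two smaller points where your realisation diverges harmlessly from the paper's, and which you may want to keep since they are equally valid: you spend your $\mathcal{O}(3^\Delta)$ heads on a one-hot encoding of the concatenated colour, whereas the paper keeps the colour in $2\Delta$ binary bits and instead spends $\mathcal{O}(3^\Delta)$ heads on a \emph{unary} loop counter for basic colour reduction; and your basic colour reduction eliminates the colour named by a counter, whereas the paper has each node recolour whenever it is a strict local maximum among its neighbours. Both variants meet the stated bounds and use the same number of communication rounds.
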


The global communication rounds of the program match the communication rounds of the algorithm, and if $\Delta$ is a constant, the size of the program is $\mathcal{O}(\log(n))$.

\medskip

\medskip 

\noindent
\textbf{Acknowledgments.} Veeti Ahvonen was supported by the Vilho, Yrjö and Kalle Väisälä Foundation of the Finnish Academy of
Science and Letters.  Antti Kuusisto also supported by the Academy of Finland project Theory of computational logics, grant numbers $324435$, $328987$, $352419$, $352420$, $352419$, $353027$. Antti Kuusisto was also supported
by the Academy of Finland project Explaining AI via Logic (XAILOG),
grant number $345612$. 


\bibliographystyle{plain}
\bibliography{references}

\section{Appendix A}

\noindent
\textbf{Proof of Lemma \ref{lemma:cmsc_to_msc}:}

\begin{proof}
    Let $\Lambda$ be a program with the conditional iteration clauses 
    \[
    \begin{aligned}
        &X_1 \coloneq_{\varphi_{1,1}, \ldots, \varphi_{1,m_1}} \psi_{1,1};\ldots;\psi_{1,m_1}; \chi_1 \\
        &\vdots \\
        &X_k \coloneq_{\varphi_{k,1}, \ldots, \varphi_{k,m_k}} \psi_{k,1};\ldots;\psi_{k,m_k}; \chi_k, \\
    \end{aligned}
    \]
    where $\varphi_{i,j}$ is the
    is the $j$th condition of
    the $i$th rule and $\psi_{i,j}$ the corresponding consequence. All remaining iteration clauses are non-conditional. We define a strongly equivalent $\Pi$-program $\Lambda'$ of $\msc$ as follows. We keep all the terminal clauses and non-conditional iteration clauses as they are. Conditional iteration clauses will be modified. The appointed predicates will be chosen as in $\Lambda$.
    For every head predicate $X_i$ of a conditional clause, the new iteration clause will be a non-conditional rule that has the same effect as the conditional clause. 
    The non-conditional rule is defined recursively by specifying schemata $\theta_{(i,1)},\dots , \theta_{(i,m_i)}$ for each $X_i$.
    Intuitively, $\theta_{(i,1)}$ will cover
    the situation where 
    the truth of $X_i$ is determined by the backup or 
    the last condition-consequence pair. The schema $\theta_{(i,2)}$ will deal with the penultimate condition-consequence pair, $\theta_{(i,3)}$ with the condition-consequence pair just before the penultimate one, and so on. The final schema $\theta_{(i,m_i)}$ deals with the first condition.

To define $\theta_{(i,1)},\dots , \theta_{(i,m_i)}$,
we first let 
\[\theta_{(i,1)} \coloneqq (\varphi_{i, m_i} \land \psi_{i, m_i}) \lor (\neg \varphi_{i, m_i} \land \chi_i).\]
Supposing that we have
defined $\theta_{(i,j)}$ for $j < m_i$, we let 
\[
\theta_{(i,j+1)} \coloneqq (\varphi_{i, m_i-j} \land \psi_{i, m_{i}-j}) \lor (\neg \varphi_{i, m_i-j} \land \theta_{(i,j)}).
\]
The iteration clause for the head predicate $X_i$ is $X_i \coloneq \theta_{(i,m_i)}$.

We have now defined $\Lambda'$. It is straightforward to show that $\Lambda'$ is strongly equivalent to $\Lambda$ and also linear in the size of $\Lambda$ with the same maximum modal depth both in relation to terminal as well as iteration clauses.  
\end{proof}

\noindent 
\textbf{Proof of Lemma \ref{ridofsubindiceslemma}:} 
\begin{proof}
    To simulate a program of $\mpmsc$ correctly, we have to time some subprograms. Intuitively, we will define a head predicate for each subschema of the form $\Diamond_i \psi$ that appears in the program and replace these subschemata by corresponding head predicates in the original program. The rules for these new head predicates are timed such that the rule for $\Diamond_i \psi$ activates precisely when we are scanning the $i$th neighbour. Moreover, we manipulate all relevant rules to wait until our new rules have scanned through all possible identifiers and thus correctly checked for conditions with diamonds.

    Let $\Pi$ be a finite set of propositions and $\Lambda$ be a $\Pi$-program of $\mpmsc$, and recall that the modal depth is zero for terminal clauses and at most one for iteration clauses. Assume that the maximum subindex in a diamond that appears in $\Lambda$ is $I$. From $\Lambda$ we will construct a program $\Gamma$ of $\cmsc$ of size $\ordo(I+\abs{\Pi_1}+\abs{\Lambda})$. We begin by including a copy of $\Lambda$ in $\Gamma$. After this, we add---according to the recipe in sections \ref{clock} and \ref{Ordering neighbors by propositions}---a program for simulating multimodal diamonds with the head predicates $(M_{\abs{\Pi_1}}, \ldots, M_1)$ and $(N_I, \ldots, N_1)$ (note that we even use the same symbols as in the appendix above), including all the rules in the above appendix sections ``\emph{A formal program for the clock}'' and ``\emph{A formal program for simulating multimodal diamonds.}'' We also simultaneously replace every subschema $\Diamond_i \psi$ of $\mpmsc$ with the $\Diamond_i \psi$ macro, as described in section \ref{Ordering neighbors by propositions} (and more formally in the appendix above). From here onwards, all schemata with a $\Diamond_i$ will refer to these new macros. 

    We time $\Gamma$ as follows. We start by separating some subschemata of $\Gamma$ to new subprograms in the way described next. For every $i \in [I]$, let $S_i \subseteq \subs(\Gamma)$ be the set of schemata in $\subs(\Gamma)$ of the form $\Diamond_i \psi$. Note that the modal depth of the terminal clauses of $\Gamma$ is zero, so $\Diamond_i$ only appears in iteration clauses. We let $S_i \coloneqq \{\Diamond_i \psi_1, \ldots, \Diamond_i\psi_{k_i}\}$, where $k_i \coloneqq \abs{S_i}$. For every $i \in [I]$ and $j \in [k_i]$, we add a fresh head predicate $X_{(i,j)}$ to $\Gamma$ with the rules
    \[
    \begin{aligned}
        &X_{(i,j)}(0) \coloneq \bot &&X_{(i,j)} \coloneq_{\neg N_{i}} \Diamond_i \psi_j; X_{(i,j)},
    \end{aligned}
    \]
    where $N_i$ is described in section \ref{Ordering neighbors by propositions} (and more formally in the appendix above). The rule for $X_{(i,j)}$ ``tests'' the schema $\Diamond_i\psi_j$ in every round until and (importantly) including the round the neighbour $i$ has been scanned. At first we may get ``wrong'' truth values for $X_{(i,j)}$, but in the actual round when the neighbour $i$ is scanned, this becomes fixed.

    Next, we replace each $\Diamond_i \psi_j \in S_i$ in the program $\Gamma$ with the corresponding schema variable $X_{(i,j)}$. Now we have timed every $\Diamond_i$, but we still have to time other parts of the program $\Gamma$ by using $X_{\text{reset}}$ from the program for simulating multimodal diamonds as follows. Let $X \coloneq \psi$ be a non-conditional iteration clause of $\Gamma$ that was originally taken from $\Lambda$, i.e., the rule is not part of the clocks or the other auxiliary programs. We manipulate this clause by adding a new condition to it such that the iteration clause $X \coloneq \psi$ is transformed into $X \coloneq_{X_{\text{reset}}} \psi; X$. 
    Adding the condition $X_{\text{reset}}$ to this clause prevents possible misuse of the diamond $\Diamond_i$ macro, since the macro is only reliable at specific times. Now, let $Y \coloneq_{\varphi_1, \ldots, \varphi_k} \psi'_1; \ldots; \psi'_k; \chi$ be a \emph{conditional} iteration clause of $\Gamma$ that was originally taken from $\Lambda$. We similarly add a new condition to it such that
the rule is transformed into 
    $
    Y \coloneq_{\neg X_{\text{reset}}, \varphi_1, \ldots, \varphi_k} Y; \psi'_1; \ldots; \psi'_k; \chi.
    $
    Since the condition $X_{\text{reset}}$ is true just before the basic clock resets, we can be sure that all head predicates $N_i$ that would turn true have indeed already become true in the current or some earlier block. Note that the diamond macro for $\Diamond_i$ is not reliable in any block before the one where $N_i$ becomes true, and furthermore, even in the block where $N_i$ does become true, the first round is not reliable. 
Our $\cmsc$-program is now ready. Its size is clearly $\ordo(I+\abs{\Pi_1}+\abs{\Lambda})$. We explain briefly how the program $\Gamma$ simulates the head predicates in $\head(\Lambda)$. The terminal clauses for $\head(\Lambda)$ in $\Gamma$ have been kept the same as they were in $\Lambda$, so the beginning round for them works exactly the same as in $\Lambda$. Next, we describe how the rounds after the zeroeth round for $\Lambda$ are simulated by $\Gamma$, noting that simulating a single round, of course, lasts for several rounds. In a single cycle of simulation---simulating a round of $\Lambda$---the clock goes through all possible identifiers, and the cycle ends with $X_{\text{reset}}$ being true. Within the cycle, our new iteration clauses for the head predicates $X_{(i,j)}$ of $\Gamma$ are working with the clock and other auxiliary programs to simulate each $\Diamond_i$. When a $\Diamond_i$ is reliable, the new head predicates $X_{(i,j)}$ are active and store the truth values for $\Diamond_i \psi_j$. When the clock has scanned all possible identifiers, $X_{\text{reset}}$ becomes true, and then every original head predicate of $\head(\Lambda)$ in $\Gamma$ activates in such a way that it operates exactly as the corresponding rule in $\Lambda$. At the same time, the flags $N_i$ reset. A new cycle will then begin. It is relatively easy to more rigorously show that $\Gamma$ is (acceptance) equivalent to $\Lambda$.

    The ``time loss'' in every round can be seen from the definition of the clock for diamonds $\Diamond_i$. The delay depends on the length of the identifiers, since we always have to check all $\ID$s. It takes $\ordo(2^{\abs{\Pi_1}})$ rounds to simulate a round of $\Lambda$. 
\end{proof}

\noindent
\textbf{Proof of Lemma \ref{lem:term_zero}:}

\begin{proof}
    Let $\Lambda$ be a $\Pi$-program of $\msc$. If $\mdt(\Lambda) = 0$, we are done, so we may assume that $\mdt( \Lambda ) > 0$. Let $X_1, \ldots, X_q$ be the head predicates of $\Lambda$, and let $\varphi_1,\ldots, \varphi_q$ be the bodies of the terminal clauses and $\psi_1,\ldots, \psi_q$ the bodies of the iteration clauses. We define a program $\Lambda'$ of \emph{conditional} $\msc$ as follows.
    We add a new schema variable $I$ such that $I( 0 ) \coloneq \bot$, $I \coloneq \top$ to indicate whether we are computing an iteration step or the terminal step. We replace every head predicate $X_i$ with a head $X_{\psi_i}$ and set $X_{\psi_i}( 0 ) \coloneq \bot$ and $X_{\psi_i} \coloneq_I  \psi_i ; \varphi_i$. The attention (respectively, print) predicates are chosen in $\Lambda'$ such that $X_{\psi_i}$ is an attention (resp., print) predicate in $\Lambda'$ if and only if $X_i$ is an attention (resp., print) predicate in $\Lambda$. 

    It is straightforward to prove that 
    $
        ( M, w ) \models X^{n + 1}_{\psi_i} \iff ( M, w ) \models X^{n}_{i}  
    $ 
    holds for every pointed $\Pi$-model $( M, w ) $, $n \ge 0$ and $i = 1, \ldots, q$. 
    Also,  $\abs{\Lambda'} = \ordo( \abs{\Lambda} )$. The computation time of $\Lambda'$ is linear in the computation time of $\Lambda$.
%
%
%
\end{proof}

\noindent 
\textbf{Proof of Theorem \ref{thr:msc1}:}

\begin{proof}
    The transformation is based on labeling subschemata of the $\msc$-program with fresh schema variables. First, we briefly describe the proof idea informally.
    We define a clock that allows us to work with cycles of program evaluation steps, the length of a cycle being the maximum modal depth of the original program. The new program simulates the original one with evaluation formulas of increasing modal depth, step by step, until reaching the maximum modal depth. 
    
    Let $\Gamma$ be a $\Pi$-program of $\msc$. By Lemma \ref{lem:term_zero}, there exists an equivalent $\Pi$-program $\Lambda$ of $\msc$ with $\mdt(\Lambda) = 0$. If $\mdi( \Lambda ) \le 1$, we are done, so we assume that $\mdi( \Lambda ) > 1$. 
    Let $X_1, \ldots, X_q$ be the head predicates of $\Lambda$, and let $\varphi_1,\ldots, \varphi_q$ 
    (respectively, $\psi_1,\ldots, \psi_q$) be the bodies of the corresponding terminal (resp., iteration) clauses.    Let $S_{\Diamond}$ 
    be the set of subschemata of the schemata
    $\psi_1, \ldots, \psi_q$ of
    the form $\Diamond\psi$. 
%
%
%
    Let $S_i = \left\{ \, \Diamond \psi \in S_{\Diamond} \mid \md( \Diamond \psi ) = i \, \right\}$ for every $i = 1, \ldots,  \mdi( \Lambda )-1$, i.e., we split $S_{\Diamond}$ by modal depth into subsets. We let $k_i := |S_i|$ and let $\theta_{( i, 1 )} ,\ldots, \theta_{( i, k_i )}$ denote the schemata in $S_i$, i.e., 
    now $S_i = \{\theta_{( i, 1 )} ,\ldots, \theta_{( i, k_i )}\}$. We are now ready to translate $\Lambda$ to the desired program $\Lambda'$.

    We first define a clock that ticks for $\mdi(\Lambda)$ rounds and then repeats from the beginning. The program for the clock consists of the head predicates $T_1,\dots , T_{\mdi(\Lambda)}$ and the following rules: $T_1 ( 0 ) \coloneq \top$, $T_1 \coloneq T_{\mdi(\Lambda)}$ and for $i \in [\mdi(\Lambda) -1]$, the rules $T_{i+1} (0) \coloneq \bot$ and $T_{i+1} \coloneq T_i$. 

We will next build rules that evaluate the original program one modal depth at a time. 
\emph{After the construction, we will give a concrete example} which may help the reader already while reading the technical specification of the rules below.

Now, for each head predicate $X_i$ of
$\Lambda$ and the body $\varphi_i$ of
the corresponding terminal clause, 
we define a fresh head predicate $X_{X_i}$ and a corresponding terminal clause $X_{X_{i}} (0)\coloneq \varphi_i$. These are added to $\Lambda'$. The corresponding iteration clause will be defined later on. 
Before that, we define some auxiliary rules for subschemata $\theta_{(i,j)}$.

We add a fresh head predicate $X_{\theta( 1, j ) }$ to $\Lambda'$ with the rules 
    $
    X_{\theta( 1, j ) } ( 0 ) \coloneq \bot\text{ and }X_{\theta( 1, j ) } \coloneq_{T_1} \theta^*_{( 1, j )}; X_{\theta( 1, j ) }
    $
    for every $j \in \left[ k_0 \right] $, where $\theta^*_{( 1, j )}$ is the schema obtained from $\theta_{( 1, j )}$ by replacing every head predicate $X_i$ with $X_{X_i}$. 
For  $i \in \{2,\dots , \mdi( \Lambda ) - 1\}$, we add a fresh head predicate $X_{\theta( i, j ) }$ to $\Lambda'$ with the rules $X_{\theta( i, j ) } ( 0 ) \coloneq \bot$ and   $ 
        X_{\theta_{( i, j )} } \coloneq_{T_{i}} \theta^*_{( i, j )} ; X_{\theta_{( i, j )}}, 
    $ where the schema $\theta^*_{( i, j )}$ is defined from $\theta_{(i,j)}$ as follows. We start by replacing every $\theta_{(i-1,j')}$ in $\theta_{(i,j)}$ with $X_{\theta_{(i-1, j')}}$, meaning that we replace every modal depth $i-1$ subschema of $\theta_{(i,j)}$ with the corresponding head predicate. Next, in the obtained schema, we replace every subschema of modal depth $i-2$ with the corresponding head predicate. Then we continue by replacing modal depth $i - 3$ subschemata, and so on, until we have replaced every possible subschema with a corresponding head predicate. Last, we replace every head predicate $X_i$ with $X_{X_i}$. This ultimately gives the schema $\theta^*_{(i,j)}$ a modal depth of one.

    Next we define the iteration clauses for the head predicates $X_{X_i} $ in $\Lambda' $. For every $i \in \left[ k \right] $, we define the clause $X_{X_{i}} \coloneq_{T_{\mdi(\Lambda)}} \psi_{i}^*; X_{X_{i}}$, where the schema $\psi^*_{i}$ is defined from $\psi_i$ by using a similar strategy as for the schemata $\theta_{(i,j)}$, that is, we first replace every modal depth $\mdi(\Lambda)-1$ subschema $\theta \in S_{\mdi(\Lambda)-1}$ of $\psi_i$ with the corresponding head predicate $X_{\theta}$, then similarly for modal depth $\mdi(\Lambda)-2$, and so on. Last, we replace every head predicate $X_i$ with $X_{X_i}$.

    As a final step, the attention (respectively, print) predicates of $\Lambda'$ are chosen such that $X_{X_i}$ is an attention (resp, print) predicate in $\Lambda'$ if and only if $X_{i}$ is an attention (resp., print) predicate in $\Lambda$. 

    We give a simple example of the construction.  Consider the case where $\Lambda$ is the program
    \[
    \begin{aligned}
        &X(0) \coloneq p\qquad &&X \coloneq \Diamond\Diamond( \Diamond \Diamond X \land \Diamond X).
    \end{aligned}
    \]
    By the construction above, the program $\Lambda'$ contains---in addition to the clock rules, as described before---the terminal and iteration clauses
    \[
    \begin{aligned}
        &X_X(0) \coloneq p &&X_X \coloneq_{T_{4}} \Diamond X_{\Diamond(\Diamond \Diamond X \land \Diamond X)}; X_X \\
                &X_{\Diamond X}(0) \coloneq \bot
        &&X_{\Diamond X} \coloneq_{T_1} \Diamond X_{X}; X_{\Diamond X}.\\
        &X_{ \Diamond \Diamond X}(0) \coloneq \bot &&X_{\Diamond \Diamond X} \coloneq_{T_2} \Diamond X_{\Diamond X}; X_{\Diamond \Diamond X}\\
        &X_{\Diamond( \Diamond \Diamond X \land  \Diamond X)}(0) \coloneq \bot &&X_{\Diamond( \Diamond \Diamond X \land  \Diamond X)} \coloneq_{T_3} \Diamond (X_{\Diamond\Diamond X} \land X_{ \Diamond X}); X_{\Diamond(\Diamond \Diamond X \land \Diamond X)}
    \end{aligned}
    \]

    
    It is straightforward to show by induction that the following conditions hold.
    \begin{enumerate}
        \item For every pointed $\Pi$-model $(M,w)$, we have $( M, w ) \models \theta_{( i, j )}^{n} \Leftrightarrow ( M, w ) \models X^{\mdi(\Lambda)n + i}_{\theta_{( i, j )} }$ for every $n$.
        \item For any $n\in \mathbb{N}$, and for any computation round $m\in\mathbb{N}$ such that
        $\mdi( \Lambda) n + i < m < \mdi( \Lambda) (n+1) + i,$
        the variable $X_{\theta_{(i,j)}}$ gets the same interpretation in round $m$ as in the round $\mdi( \Lambda) n + i$.
        
\end{enumerate}
    
    Therefore, the following conditions hold.
\begin{enumerate}
\item
    For every pointed $\Pi$-model $( M, w ) $, we have
    $
        ( M, w ) \models X^n_{i} \Leftrightarrow ( M, w ) \models X^{\mdi( \Lambda) n}_{X_{i}}
    $
for every $n$. 
\item
For any $n\in \mathbb{N}$, and for any computation round $m\in\mathbb{N}$ such that
$\mdi( \Lambda) n < m < \mdi( \Lambda) (n+1),$
the variable $X_{X_i}$ gets the same interpretation in round $m$ as in round $\mdi( \Lambda) n$.
\end{enumerate}
Therefore, the translated program is equivalent to the original program $\Lambda$ and, therefore, equivalent to $\Gamma$. The size of $\Lambda'$ is clearly $\ordo(\abs{\Lambda}) = \ordo(\abs{\Gamma})$ since the number of subschemata of $\Lambda$ is linear in $|\Lambda|$. The computation time is $\ordo(max(1,\md(\Lambda)))$ times the computation time of $\Gamma$.
\end{proof}

\section{Appendix B}

\subsection{Forest decomposition}\label{Forest decomposition}

The program starts by simulating the separation of the graph into oriented forests. It is not sufficient for the nodes to know the order of their neighbors; they also need to know whether the neighbors have a lower or higher $\ID$ than the node itself. This is done by comparing the bits of the colors one by one, which takes $\ell$ iterations. We count these iterations with the variables $\mathrm{T}_1, \dots, \mathrm{T}_{\ell + 2}$ (the terminal clause for each variable is $\bot$ unless otherwise specified):
\[
\mathrm{T}_1 \coloneq \top,
\quad
\mathrm{T}_2 \coloneq \mathrm{T}_1,
\quad
\dots
\quad
\mathrm{T}_{\ell} \coloneq \mathrm{T}_{\ell - 1},
\quad
\mathrm{T}_{\ell + 1} \coloneq \mathrm{T}_{\ell},
\quad
\mathrm{T}_{\ell + 2} \coloneq \mathrm{T}_{\ell + 1}.
\]
Each variable starts off as untrue ($\mathrm{T}_i (0) \coloneq \bot$). The variable $\mathrm{T}_1$ becomes true after the first round of iteration, the variable $\mathrm{T}_2$ becomes true after the second round, and so forth.

For the purpose of comparing a node's $\mathrm{ID}$ with those of its neighbors, we start by defining variable symbols $\mathrm{I}_i$ that contain the bits of a node's $\ID$:
\[
\begin{aligned}
    &\mathrm{I}_1 (0) \coloneq p_1, & &\mathrm{I}_2 (0) \coloneq p_2, & &\dots, & &\mathrm{I}_{\ell} (0) \coloneq p_{\ell}. \\
    &\mathrm{I}_1 \coloneq_{\mathrm{T}_{\ell + 1}, \mathrm{T}_1} \mathrm{I}_1; \mathrm{I}_{\ell}; \mathrm{I}_1, & &\mathrm{I}_2 \coloneq_{\mathrm{T}_{\ell + 1}, \mathrm{T}_1} \mathrm{I}_2; \mathrm{I}_1; \mathrm{I}_2, & &\dots, & &\mathrm{I}_{\ell} \coloneq_{\mathrm{T}_{\ell + 1}, \mathrm{T}_1} \mathrm{I}_{\ell}; \mathrm{I}_{\ell - 1}; \mathrm{I}_{\ell}.
\end{aligned}
\]
In the terminal clauses, each variable $\mathrm{I}_i$ receives the truth value of the proposition $p_i$. In the iteration clause, the variables cycle through each bit of the node's $\ID$ after the first iteration round. Once all $\ell$ bits have visited the $\ell$th position, the flag $\mathrm{T}_{\ell + 1}$ halts the cycle, leaving each variable with its original value (same as the terminal clause).

We also define variables $\mathrm{I}^\delta_i$ ($1 \leq \delta \leq \Delta$) that hold the bits of a node's $\delta$th neighbor's $\ID$:
\[
    \mathrm{I}^{\delta}_1 \coloneq_{T_{\ell + 1}, T_1} \mathrm{I}^{\delta}_1; \mathrm{I}^{\delta}_{\ell}; \Diamond_{\delta} \mathrm{I}_1,
    \quad
    \mathrm{I}^{\delta}_2 \coloneq_{T_{\ell + 1}, T_1} \mathrm{I}^{\delta}_2; \mathrm{I}^{\delta}_1; \Diamond_{\delta} \mathrm{I}_2,
    \quad
    \dots,
    \quad
    \mathrm{I}^{\delta}_{\ell} \coloneq_{T_{\ell + 1}, T_1} \mathrm{I}^{\delta}_{\ell}; \mathrm{I}^{\delta}_{\ell - 1}; \Diamond_{\delta} \mathrm{I}_{\ell}.
\]
In the first iteration round, the variables $\mathrm{I}^{\delta}_i$ inherit the bits of the $\ID$ of the $\delta$th neighbor; this is the first global communication round. After this, the variables begin rotating just like the variables $\mathrm{I}_i$, and they stop rotating at the same time as well.

After one iteration round, we have the leftmost bit of a node's $\ID$ in variables $\mathrm{I}_{\ell}$ and the leftmost bits of its neighbor's $\ID$s in variables $\mathrm{I}^{\delta}_{\ell}$. After a second iteration round, these same variables will hold the second bit, and so forth. We determine which node has the higher $\ID$ by finding the first bit that separates the $\ID$s; the node whose bit is $1$ has the higher $\ID$.

Next, we define variables $\mathrm{DIF}^{\delta}$ ($1 \leq \delta \leq \Delta$) for comparing the bits of the $\ID$s:
\[
    \mathrm{DIF}^{\delta} \coloneq_{\mathrm{T}_1} \neg \left( \mathrm{I}_{\ell} \leftrightarrow \mathrm{I}^{\delta}_{\ell} \right); \bot.
\]
The variable $\mathrm{DIF}^{\delta}$ is true in node $v$ if and only if the bits of the colors of the node and its $\delta$th neighbor don't match in position $\ell$. No comparison occurs in the first iteration round.

We take advantage of the first round in which $\mathrm{DIF}^{\delta}$ becomes true to define variables $\mathrm{HIGH}^{\delta}$ and $\mathrm{LOW}^{\delta}$ ($1 \leq \delta \leq \Delta$) that tell the node which neighbors have a higher $\ID$ and which neighbors have a lower $\ID$, since this depends entirely on the first distinct bit we find:
\[
    \mathrm{HIGH}^{\delta} \coloneq_{\mathrm{HIGH}^{\delta}, \mathrm{LOW}^{\delta}, \mathrm{DIF}^{\delta}} \top; \bot; \mathrm{I}_1; \bot,
    \quad
    \mathrm{LOW}^{\delta} \coloneq_{\mathrm{LOW}^{\delta}, \mathrm{HIGH}^{\delta}, \mathrm{DIF}^{\delta}} \top; \bot; \neg \mathrm{I}_1; \bot.
\]
The variable $\mathrm{HIGH}^{\delta}$ becomes true in a node if it has a higher $\ID$ than its $\delta$th neighbor. Likewise, the variable $\mathrm{LOW}^{\delta}$ becomes true if it has a lower $\ID$. Once the distinct bits are found in the variables $\mathrm{I}_\ell$, we copy their values from the variables $\mathrm{I}_1$ in the next iteration. After $\ell + 2$ iteration rounds we reach the point where either $\mathrm{HIGH}^{\delta}$ or $\mathrm{LOW}^{\delta}$ is true for every $\delta$ in every node. For the sake of the next definition, we also define $\mathrm{HIGH}^{0} \coloneqq \top$.

It is now possible to define the following formula:
\[
    \langle \delta \rangle \varphi \coloneqq \bigvee\limits_{\delta \leq i \leq \Delta} \left( \mathrm{HIGH}^{i - \delta} \land \mathrm{LOW}^{i - \delta + 1} \land \Diamond_i \varphi \right).
\]
The formula $\langle \delta \rangle \varphi$ is true in node $v$ if and only if $\varphi$ is true in the parent of node $v$ in the oriented forest $F_{\delta}$. Finally, we define the formula $\mathrm{END1} \coloneqq \mathrm{T}_{\ell + 2}$ that ends this first phase.

\subsection{Cole-Vishkin}\label{Cole-Vishkin}

We extend the program thus far to simulate the CV algorithm in each oriented forest. We compare the colors of nodes bit by bit and place communication rounds between the color comparisons. To do this, we use the clock defined in section \ref{clock}, using $\mathrm{END1}$ to mark the moment the clock begins ticking. This means that the clock variables keep their terminal truth values until this phase begins. We use the same variable symbols as before: $\mathrm{S}_1, \dots, \mathrm{S}_{\log(\ell)}$, $\mathrm{M}_1, \dots, \mathrm{M}_{\log(\ell)}$ and $\mathrm{S}_{\text{changing}}$. For the sake of convenience, we define that $\mathrm{S}_{\text{changing}} (0) \coloneq \top$; this only postpones the clock by a single iteration. We also define the formulae $\mathrm{S}_\text{on} \coloneqq \mathrm{S}_{\text{changing}}$ and $\mathrm{S}_\text{off} \coloneqq \neg \mathrm{S}_\text{on}$ as helpful abbreviations. The strings of the clock have $\log(\ell) = \log\log(n)$ bits because the minute hand always refers to one of the $\log(n)$ bits of a color and codes its position into binary a 2nd time.

We require an additional flag to mark the moments when global communication rounds take place:
\[
    \mathrm{CR} \coloneq_{\mathrm{END1}} \bigwedge\limits_{i = 1}^{\log(\ell)} \neg \mathrm{M}_i \land \mathrm{S}_{\text{on}}; \bot.
\]
The variable $\mathrm{CR}$ uses a unique moment at the very beginning of the clock's cycle to mark the global communication rounds. The unique moment is the first of two iterations where the minute hand is all zeroes. In the second such iteration, the formula $\mathrm{S}_\text{on}$ is no longer true. The same unique moment could be defined with the help of the variable $X_{\text{reset}}$ defined in section \ref{Ordering neighbors by propositions}, but we have no other need for the necessary forward clock(s).

For comparing the colors between a node and its neighbors, we define variables $\mathrm{B}^{\delta}_i$ ($1 \leq \delta \leq \Delta$) for the bits of the color of a node in forest ${F_{\delta}}$, starting with its $\mathrm{ID}$. We also define variables $\mathrm{P}^{\delta}_i$ ($1 \leq \delta \leq \Delta$) for the bits of the color of the parent of a node in forest ${F_{\delta}}$:
\[
\begin{aligned}
    &\mathrm{B}^{\delta}_1 \coloneq_{\mathrm{CR}, \mathrm{S}_{\text{off}}} \mathrm{N}^{\delta}_1; \mathrm{B}^{\delta}_2; \mathrm{B}^{\delta}_1,
    & &\mathrm{B}^{\delta}_2 \coloneq_{\mathrm{CR}, \mathrm{S}_{\text{off}}} \mathrm{N}^{\delta}_2; \mathrm{B}^{\delta}_3; \mathrm{B}^{\delta}_2,
    & &\dots,
    & &\mathrm{B}^{\delta}_{\ell} \coloneq_{\mathrm{CR}, \mathrm{S}_{\text{off}}} \mathrm{N}^{\delta}_{\ell}; \mathrm{B}^{\delta}_1; \mathrm{B}^{\delta}_{\ell}. \\
    &\mathrm{P}^{\delta}_1 \coloneq_{\mathrm{CR}, \mathrm{S}_{\text{off}}} \langle \delta \rangle \mathrm{N}^{\delta}_1; \mathrm{P}^{\delta}_2; \mathrm{P}^{\delta}_1,
    & &\mathrm{P}^{\delta}_2 \coloneq_{\mathrm{CR}, \mathrm{S}_{\text{off}}} \langle \delta \rangle \mathrm{N}^{\delta}_2; \mathrm{P}^{\delta}_3; \mathrm{P}^{\delta}_2,
    & &\dots,
    & &\mathrm{P}^{\delta}_{\ell} \coloneq_{\mathrm{CR}, \mathrm{S}_{\text{off}}} \langle \delta \rangle \mathrm{N}^{\delta}_{\ell}; \mathrm{P}^{\delta}_1; \mathrm{P}^{\delta}_{\ell}.
\end{aligned}
\]
Each variable $\mathrm{B}^{\delta}_i$ cycles through the bits of the color of the node in forest $F_{\delta}$. Each variable $\mathrm{P}^{\delta}_i$ does the same for the bits of the color of its parent. The variables rotate once every ``minute'', marked by the condition $\mathrm{S}_{\text{off}}$. The minute hand always indicates the original position of the bits currently in $\mathrm{B}^{\delta}_1$ and $\mathrm{P}^{\delta}_1$ in binary. During the global communication round, the variables $\mathrm{B}^{\delta}_i$ update to the next color $\mathrm{N}^{\delta}_i$ (see below) and the variables $\mathrm{P}^{\delta}_i$ receive the new color of the parent of the node in forest $F_\delta$. The formulae $\langle \delta \rangle \mathrm{N}^{\delta}_i$ are always untrue for root nodes, which means that the root compares its color to the color $0$, just as described in section \ref{Summary of the Cole-Vishkin algorithm}.

We compare the bits of colors between child and parent nodes using the same variables $\mathrm{DIF}^{\delta}$ as before, using the flag $\mathrm{CR}$ to prevent false positives. We compare bits in position $1$ instead of $\ell$ like before, because the CV algorithm seeks for the rightmost distinct bit:
\[
    \mathrm{DIF}^{\delta} \coloneq_{\mathrm{CR}, \mathrm{END1}, \mathrm{T}_1} \bot; \neg \left( \mathrm{B}^{\delta}_1 \leftrightarrow \mathrm{P}^{\delta}_1 \right); \neg \left( \mathrm{I}_{\ell} \leftrightarrow \mathrm{I}^{\delta}_{\ell} \right); \bot.
\]
We also define variables $\mathrm{GET}^{\delta}$ ($1 \leq \delta \leq \Delta$) that become true after the first distinct bit is found, until they are reset during the global communication rounds:
\[
    \mathrm{GET}^{\delta} \coloneq_{\mathrm{CR}, \mathrm{GET}^{\delta}} \bot; \top; \mathrm{DIF}^{\delta}.
\]

Now it is possible to define the variables $\mathrm{N}^{\delta}_i$ ($1 \leq \delta \leq \Delta$) that compute the next color of a node during each round of the clock:
\[
\begin{aligned}
    &\mathrm{N}^{\delta}_1 \coloneq_{\mathrm{GET}^{\delta}, \mathrm{DIF}^{\delta}} \mathrm{N}^{\delta}_1; \mathrm{B}^\delta_1; p_1, \\
    &\mathrm{N}^{\delta}_2 \coloneq_{\mathrm{GET}^{\delta}, \mathrm{DIF}^{\delta}} \mathrm{N}^{\delta}_2; \mathrm{M}_1; p_2, \\
    &\vdots \\
    &\mathrm{N}^{\delta}_{\ell} \coloneq_{\mathrm{GET}^{\delta}, \mathrm{DIF}^{\delta}} \mathrm{N}^{\delta}_{\ell}; \mathrm{M}_{\ell - 1}; p_{\ell}.
\end{aligned}
\]
The variables hold the $\ID$ of a node by default, which is used by the variables $\mathrm{B}^\delta_i$ and $\mathrm{P}^\delta_i$ during the first global communication round. The condition $\mathrm{DIF}^{\delta}$ copies the truth value of the first distinct bit to the variable $\mathrm{N}^{\delta}_1$ and the bits of its position from the minute hand to the rest of the variables $\mathrm{N}^{\delta}_i$. The condition $\mathrm{GET}^{\delta}$ preserves the calculated color until the next round of the clock, ensuring that only the first distinct bit is considered.

The CV algorithm takes $L = \log^*(n) + 3$ communication rounds. We count them with an hour hand consisting of variables $\mathrm{H}_1, \dots, \mathrm{H}_L$; after $\log^*(n) + 2$ ``hours'', $\mathrm{H}_L$ becomes true:
\[
    \mathrm{H}_1 \coloneq_{\mathrm{H}_1 \lor \mathrm{CR}} \top; \bot,
    \quad
    \mathrm{H}_2 \coloneq_{\mathrm{H}_2 \lor (\mathrm{H}_1 \land \mathrm{CR})} \top; \bot,
    \quad
    \dots,
    \quad
    \mathrm{H}_L \coloneq_{\mathrm{H}_L \lor (\mathrm{H}_{L - 1} \land \mathrm{CR})} \top; \bot.
\]
After one more hour, the clock is stopped using the following variable:
\[
\mathrm{STOP} \coloneq \mathrm{H}_L \land \left( \bigwedge\limits_{i = 1}^{\log(\ell)} \neg \mathrm{M}_i \land \mathrm{S}_{\text{on}} \right)
\]
The formula contained in the parentheses is the same one used for $\mathrm{CR}$. Thus, the variable turns on at the same time as $\mathrm{CR}$ in the last ``hour'' and can be used to stop the iteration of the variables $\mathrm{P}^{\delta}_{i}$ to prevent an unnecessary global communication round. One iteration later, the flag $\mathrm{END2} \coloneq \mathrm{STOP}$ is used to halt all other previously mentioned variables.

For the sake of Lemma \ref{Cole-Vishkin lemma}, we define $8^\Delta$ ``appointed'' predicates. These predicates correspond to the color of each node when its $\Delta$ colors from the forests are concatenated:
\[
\begin{aligned}
    &\mathrm{CLR}_0 \coloneq_{\mathrm{END2}} \bigwedge\limits_{\delta = 1}^{\Delta} \left( \neg \mathrm{B}^\delta_3 \land \neg \mathrm{B}^\delta_2 \land \neg \mathrm{B}^\delta_1 \right); \bot, \\
    &\vdots \\
    &\mathrm{CLR}_{8^{\Delta} - 1} \coloneq_{\mathrm{END2}} \bigwedge\limits_{\delta = 1}^{\Delta} \left( \mathrm{B}^\delta_3 \land \mathrm{B}^\delta_2 \land \mathrm{B}^\delta_1 \right); \bot.
\end{aligned}
\]
The colors are simply assembled from the bits. We can remove the variables $\mathrm{CLR}_0$, $\mathrm{CLR}_8$, $\mathrm{CLR}_{16}$, $\dots$ from the program, because they correspond to concatenations where the color of a node is $0$ in some oriented forest, which is impossible. If we consider the remaining variables $\mathrm{CLR}_{i}$ to be appointed predicates, Lemma \ref{Cole-Vishkin lemma} is now true:

\bigskip 

\noindent
$\blacktriangleright$ \textbf{Lemma\ \ref{Cole-Vishkin lemma}.}\ \emph{There exists a formula of $\mpmsc$ with the following properties.}
\begin{enumerate}
    \item \emph{It defines a proper $7^\Delta$-coloring.}
    \item \emph{Ignoring appointed predicates, it has $\mathcal{O}(\Delta \log(n))$ heads and a size of $\mathcal{O}(\Delta^2 \log(n))$.}
    \item \emph{The number of iterations needed is $\mathcal{O}(\log(n) \log\log(n) \log^*(n))$.}
    \item \emph{There are exactly $\log^*(n) + 4$ global communication rounds.}
\end{enumerate}

\begin{proof}
1) We have $7^\Delta$ appointed (attention and print) predicates: $\mathrm{CLR}_0, \dots, \mathrm{CLR}_{8^\Delta - 1}$, minus the variables \[\mathrm{CLR}_0, \mathrm{CLR}_8, \mathrm{CLR}_{16}, \dots.\] Because each possible combination of the truth values of $\mathrm{B}^{\delta}_{1}$, $\mathrm{B}^{\delta}_{2}$ and $\mathrm{B}^{\delta}_{3}$ ($1 \leq \delta \leq \Delta$) corresponds to a single variable $\mathrm{CLR}_{i}$, the print predicates are mutually exclusive, meaning that only one of them can be true in a given node. Each of the omitted predicates corresponds to an impossible combination of truth values, which means that one of the print predicates must become true in each node. This means that each node outputs a $7^\Delta$-bit string, where exactly one bit is a $1$. Thus, the program defines a $7^\Delta$-coloring.

If two neighbors shared the same output, they would have the same combination of truth values of variables $\mathrm{B}^{\delta}_{1}$, $\mathrm{B}^{\delta}_{2}$ and $\mathrm{B}^{\delta}_{3}$ ($1 \leq \delta \leq \Delta$). Each variable $\mathrm{B}^{\delta}_{i}$ ($i > 3$, $1 \leq \delta \leq \Delta$) has become untrue in each node by this point, which means that the neighbors would share the same color in every forest. This is impossible, because there is some forest $F_{\delta}$ where one of the neighbors is the parent of the other, and the program ensures that a parent and child never share the same color. Thus, no two neighbors share the same output, and the coloring is proper.

2) Let us count the number of heads and their lengths:
\begin{itemize}
    \item We have $\mathcal{O}(1)$ variables $\mathrm{S}_{\text{changing}}$, $\mathrm{CR}$ and $\mathrm{STOP}$ of size $\mathcal{O}(\log\log(n))$.
    \item We have $\mathcal{O}(\Delta) \mathcal{O}(\log(n))$ variables $\mathrm{T}_i$, $\mathrm{I}_i$, $\mathrm{I}^\delta_i$, $\mathrm{DIF}^{\delta}$, $\mathrm{HIGH}^{\delta}$, $\mathrm{LOW}^{\delta}$, $\mathrm{S}_i$, $\mathrm{M}_i$, $\mathrm{B}^{\delta}_{i}$, $\mathrm{P}^{\delta}_{i}$, $\mathrm{GET}^{\delta}$, $\mathrm{N}^{\delta}_{i}$, $\mathrm{H}_i$ and $\mathrm{END2}$ of size $\mathcal{O}(\Delta)$.
\end{itemize}
Adding the heads together we get
\[
    \mathcal{O}(1) + \mathcal{O}(\Delta) \mathcal{O}(\log(n)) = \mathcal{O}(\Delta \log(n)).
\]
Multiplying the heads by the sizes we get
\[
\begin{aligned}
    &\mathcal{O}(1) \mathcal{O}(\log\log(n)) + \mathcal{O}(\Delta) \mathcal{O}(\log(n)) \mathcal{O}(\Delta) & &= & &\mathcal{O}(\log\log(n)) + \mathcal{O}(\Delta^2 \log(n)) \\
    & & &= & &\mathcal{O}(\Delta^2 \log(n)).
\end{aligned}
\]

3) The first phase of the algorithm until the variable $\mathrm{END1}$ becomes true lasts for $\mathcal{O}(\log(n))$ iterations. After this, the clock begins ticking and lasts for $\mathcal{O}(\log^*(n))$ rounds.

Each round of the clock has the same number of iterations. For each position of the minute hand, we have $\mathcal{O}(\log\log(n))$ iterations: the first iteration where the minute hand updates to a new position, at most $\log\log(n)$ iterations where the second hand counts the rightmost $1$s of the minute hand and one more iteration where the second hand stops moving. The minute hand has a total of $\log(n)$ positions and the clock runs for $\mathcal{O}(\log^*(n))$ ``hours''. This dwarfs the iterations of the first phase and gives us a time complexity of $\mathcal{O}(\log(n) \log\log(n) \log^*(n))$.

4) The variables $\mathrm{I}^\delta_i$ and $\mathrm{P}^{\delta}_{i}$ are the only ones to contain a diamond. These diamonds are activated in the first iteration round and in every round where the variable $\mathrm{CR}$ is true. The latter happens once every ``hour'' and the hour hand limits the number of ``hours'' to $\log^*(n) + 3$. Hence, there are exactly $\log^*(n) + 4$ global communication rounds.
\end{proof}

The first global communication round is used to receive the $\ID$s of neighbors for comparison; after this, the remaining $\log^*(n) + 3$ global communication rounds are used for the computation of CV, which matches the number of communication rounds of the CV algorithm.

\subsection{Shift-down}\label{Shift-down}

The shift-down phase of the algorithm picks up from the $7$-colorings where the previous phase left off. It works by repeating the following three steps in each oriented forest:
\begin{enumerate}
    \item Each node inherits the color of its parent (a root changes color to $1$ or $2$, distinct from its previous color). This makes it so that all the children of a node share the same color. Each node stores its previous color for comparison. 
    \item Each node receives as a message the color of its parent. This is also stored for comparison.
    \item Each node compares its own color (inherited in step 1), the colors of its children (stored in step 1) and the color of its parent (stored in step 2). If it has the greatest color, it changes into the smallest color $1$, $2$ or $3$ that is not shared by its children or parent.
\end{enumerate}
After each repeat of these steps, the highest color in each oriented forest is reduced by at least one. After $4$ repeats, the $7$-coloring in each oriented forest is reduced to a $3$-coloring. An optimal $2$-coloring can't be achieved with this algorithm.

We extend the program from before to include the shift-down technique, omitting the previous appointed predicates. We mark the current step of shift-down using three variables:
\[
    \mathrm{Z}_1 \coloneq_{\mathrm{Z}_1 \lor \mathrm{Z}_2, \mathrm{END2}} \bot; \top; \bot,
    \quad
    \mathrm{Z}_2 \coloneq_{\mathrm{Z}_1} \top; \bot,
    \quad
    \mathrm{Z}_3 \coloneq_{\mathrm{Z}_2} \top; \bot.
\]
All three variables are untrue until the end of the previous phase activates $\mathrm{Z}_1$. The variables take turns turning on and off; just one of them is true in any given iteration round. The lower index of the true variable indicates the step the program performs in the next iteration.

The variables only have to cycle through $4$ loops. To terminate this phase, we define another set of timer variables like we did in the first phase:
\[
    \mathrm{T}'_1 \coloneq_{\mathrm{T}'_1\lor \mathrm{Z}_3} \top; \bot,
    \quad
    \mathrm{T}'_2 \coloneq_{\mathrm{T}'_2 \lor (\mathrm{T}'_1 \land \mathrm{Z}_3)} \top; \bot,
    \quad
    \mathrm{T}'_3 \coloneq_{\mathrm{T}'_3 \lor (\mathrm{T}'_2 \land \mathrm{Z}_3)} \top; \bot,
    \quad
    \mathrm{T}'_4 \coloneq_{\mathrm{T}'_4 \lor (\mathrm{T}'_3 \land \mathrm{Z}_3)} \top; \bot.
\]
The variable $\mathrm{T}'_1$ becomes true after the first loop, $\mathrm{T}'_2$ becomes true after the second loop and so forth. Each variable remains true once it activates. We also define the formula $\mathrm{END3} \coloneqq \mathrm{T}'_4$, which marks the end of the shift-down phase.

Given that we have reduced the number of colors in each forest to just $7$, it is convenient to use variables $\mathrm{C}^\delta_i$ that refer directly to a node's color in each forest; we define them later. We also use variables $\mathrm{Cc}^{\delta}_{i}$ ($1 \leq \delta \leq \Delta$) that store a node's earlier color that is passed down to its children in step $1$ and variables $\mathrm{Cp}^{\delta}_{i}$ ($1 \leq \delta \leq \Delta$) that store the color of its parent in step $2$. The upper indices refer to the labels of the oriented forests and the lower indices refer to colors. The latter two sets of variables are easy to define:
\[
\begin{aligned}
    &\mathrm{Cc}^{\delta}_{1} \coloneq_{\mathrm{Z}_1} \mathrm{C}^{\delta}_{1}; \mathrm{Cc}^{\delta}_{1},
    & &\dots,
    & &\mathrm{Cc}^{\delta}_{7} \coloneq_{\mathrm{Z}_1} \mathrm{C}^{\delta}_{7}; \mathrm{Cc}^{\delta}_{7}. \\
    &\mathrm{Cp}^{\delta}_{1} \coloneq_{\mathrm{Z}_2} \langle \delta \rangle \mathrm{C}^{\delta}_{1}; \mathrm{Cp}^{\delta}_{1},
    & &\dots,
    & &\mathrm{Cp}^{\delta}_{7} \coloneq_{\mathrm{Z}_2} \langle \delta \rangle \mathrm{C}^{\delta}_{7}; \mathrm{Cp}^{\delta}_{7}.
\end{aligned}
\]

For the variables $\mathrm{C}^{\delta}_{i}$, we need to consider that the roots need to change their color differently in step 1. For this purpose, we define additional variables $\mathrm{R}^{\delta}$ ($1 \leq \delta \leq \Delta$) to flag the roots of each oriented forest:
\[
\mathrm{R}^{\delta} \coloneq_{\mathrm{R}^{\delta}, \mathrm{CR}} \top; \neg \langle \delta \rangle \top; \bot.
\]
A node is flagged as a root in forest $F_{\delta}$ only if it has no parent in that forest, i.e. if the diamond $\langle \delta \rangle$ doesn't lead anywhere. Since other variables already use diamonds when $\mathrm{CR}$ is true in the previous phase, this does not add any additional global communication rounds, nor do we need any additional flags.

For step 3, we use helpful formulae to condense things. We define formulae $\mathrm{G}^{\delta}$ ($1 \leq \delta \leq \Delta$), that are true when a node has a color greater than its children and parent in forest $F_{\delta}$:
\[
\mathrm{G}^{\delta} \coloneqq \mathrm{Z}_3 \land \bigvee_{i = 3}^{7} \left( \mathrm{C}^{\delta}_{i} \land \neg \bigvee_{j = i + 1}^{7} \left( \mathrm{Cc}^{\delta}_{j} \lor \mathrm{Cp}^{\delta}_{j} \right) \right).
\]
The formulae state that the program is in step 3, the node has a color between $3$ and $7$ in forest $F_\delta$, and its parent and children do not posses a greater color.

We also define formulae $\mathrm{L}^{\delta}_{i}$ ($1 \leq \delta \leq \Delta$). The lower index of the true formula tells us which of the colors $1$, $2$ and $3$ is the lowest available color a node can switch to in forest $F_{\delta}$:
\[
    \mathrm{L}^{\delta}_1 \coloneqq \neg \left( \mathrm{Cc}^{\delta}_{1} \lor \mathrm{Cp}^{\delta}_{1} \right),
    \quad
    \mathrm{L}^{\delta}_2 \coloneqq \neg \mathrm{L}^{\delta}_1 \land \neg \left( \mathrm{Cc}^{\delta}_{2} \lor \mathrm{Cp}^{\delta}_{2} \right),
    \quad
    \mathrm{L}^{\delta}_3 \coloneqq \neg \mathrm{L}^{\delta}_1 \land \neg \mathrm{L}^{\delta}_2.
\]

Now we are ready to define the variables $\mathrm{C}^{\delta}_{i}$ ($1 \leq \delta \leq \Delta$):
\[
\begin{aligned}
    &\mathrm{C}^{\delta}_{1}\coloneq_{\mathrm{G}^{\delta}, \mathrm{Z}_3 \lor \mathrm{Z}_2, \mathrm{R}^{\delta} \land \mathrm{Z}_1, \mathrm{Z}_1, \mathrm{END2}} & &\mathrm{L}^{\delta}_1; & &\mathrm{C}^{\delta}_{1}; & &\neg \mathrm{C}^{\delta}_{1}; & &\langle \delta \rangle \mathrm{C}^{\delta}_{1}; & &\neg \mathrm{B}^{\delta}_{3} \land \neg \mathrm{B}^{\delta}_{2} \land \mathrm{B}^{\delta}_{1}; & &\bot, \\
    &\mathrm{C}^{\delta}_{2} \coloneq_{\mathrm{G}^{\delta}, \mathrm{Z}_3 \lor \mathrm{Z}_2, \mathrm{R}^{\delta} \land \mathrm{Z}_1, \mathrm{Z}_1, \mathrm{END2}} & &\mathrm{L}^{\delta}_2; & &\mathrm{C}^{\delta}_{2}; & &\mathrm{C}^{\delta}_{1}; & &\langle \delta \rangle \mathrm{C}^{\delta}_{2}; & &\neg \mathrm{B}^{\delta}_{3} \land \mathrm{B}^{\delta}_{2} \land \neg \mathrm{B}^{\delta}_{1}; & &\bot, \\
    &\mathrm{C}^{\delta}_{3}\coloneq_{\mathrm{G}^{\delta}, \mathrm{Z}_3 \lor \mathrm{Z}_2, \mathrm{Z}_1, \mathrm{END2}} & &\mathrm{L}^{\delta}_3;  & &\mathrm{C}^{\delta}_{3}; & & & &\langle \delta \rangle \mathrm{C}^{\delta}_{3}; & &\neg \mathrm{B}^{\delta}_{3} \land \mathrm{B}^{\delta}_{2} \land \mathrm{B}^{\delta}_{1}; & &\bot, \\
    &\mathrm{C}^{\delta}_{4} \coloneq_{\mathrm{G}^{\delta}, \mathrm{Z}_3 \lor \mathrm{Z}_2, \mathrm{Z}_1, \mathrm{END2}} & &\bot; & &\mathrm{C}^{\delta}_{4}; & & & &\langle \delta \rangle \mathrm{C}^{\delta}_{4}; & &\mathrm{B}^{\delta}_{3} \land \neg \mathrm{B}^{\delta}_{2} \land \neg \mathrm{B}^{\delta}_{1}; & &\bot, \\
    &\vdots \\
    &\mathrm{C}^{\delta}_{7} \coloneq_{\mathrm{G}^{\delta}, \mathrm{Z}_3 \lor \mathrm{Z}_2, \mathrm{Z}_1, \mathrm{END2}} & &\bot; & &\mathrm{C}^{\delta}_{7}; & & & &\langle \delta \rangle \mathrm{C}^{\delta}_{7}; & &\mathrm{B}^{\delta}_{3} \land \mathrm{B}^{\delta}_{2} \land \mathrm{B}^{\delta}_{1}; & &\bot.
\end{aligned}
\]
The upper index refers to the label of the oriented forest and the lower index refers to the node's color in said forest. From left to right, we first consider step 3: if the node has a color greater than its children and parent, then the lowest available variable $\mathrm{C}^{\delta}_{1}$, $\mathrm{C}^{\delta}_{2}$ or $\mathrm{C}^{\delta}_{3}$ becomes true. If the node has a lower color or if we are in step 2, the variables don't update. If we are in step 1 and the node is a root, then only the variables $\mathrm{C}^{\delta}_{1}$ and $\mathrm{C}^{\delta}_{2}$ alternate ($\langle \delta \rangle$-formulae are always untrue for root nodes). If the node is not a root in step 1, then it inherits the color of its parent with the diamond $\langle \delta \rangle$. When the variables are activated, they are assembled from the bit configuration of the final color in the previous phase.

Finally, we can use the flag $\mathrm{END3}$ as a condition to stop all the variables in this section from updating after $4$ loops of steps 1-3. One iteration before the formula $\mathrm{END3}$ becomes true, the variable $\mathrm{Z}_3$ is true, which means that the color variables $\mathrm{C}^{\delta}_{i}$ finish their current loop before $\mathrm{END3}$ stops them on the next iteration.

Once again, we define $4^\Delta$ ``appointed'' predicates for the sake of a helpful lemma. They signify the final color of a node in oriented forest $F_{\delta}$:
\[
\begin{aligned}
    &\mathrm{CLR}_{0} \coloneq_{\mathrm{END3}} \bigwedge\limits_{\delta = 1}^{\Delta} \left( \neg \mathrm{C}^{\delta}_{3} \land \neg \mathrm{C}^{\delta}_{2} \land \neg \mathrm{C}^{\delta}_{1} \right); \bot, \\
    &\vdots \\
    &\mathrm{CLR}_{4^\Delta - 1} \coloneq_{\mathrm{END3}} \bigwedge\limits_{\delta = 1}^{\Delta} \left( \mathrm{C}^{\delta}_{3} \land \neg \mathrm{C}^{\delta}_{2} \land \neg \mathrm{C}^{\delta}_{1} \right); \bot.
\end{aligned}
\]
The colors are assembled from the forest-specific colors as if they were pairs of bits. As such, we can once again remove the variables $\mathrm{CLR}_{0}, \mathrm{CLR}_{4}, \mathrm{CLR}_{8}, \dots$ from the program, because the color $0$ is not used in any forest. If we consider the remaining variables $\mathrm{C}^{\delta}_{i}$ to be appointed predicates, the following lemma is true:

\begin{lemma}\label{Shift-down lemma}
There exists a formula of $\mpmsc$ with the following properties.
\begin{enumerate}
    \item It defines a proper $3^\Delta$-coloring.
    \item Ignoring appointed predicates, it has $\mathcal{O}(\Delta \log(n))$ heads and a size of $\mathcal{O}(\Delta^2 \log(n))$.
    \item The number of iterations needed is $\mathcal{O}(\log(n) \log\log(n) \log^*(n))$.
    \item There are exactly $\log^*(n) + 12$ global communication rounds.
\end{enumerate}
\end{lemma}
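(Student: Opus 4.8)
The plan is to obtain $\Lambda$ by extending the $\mpmsc$-program $\Lambda_{7}$ of Lemma~\ref{Cole-Vishkin lemma} (which already defines a proper $7^{\Delta}$-coloring with the stated head/size/iteration bounds and exactly $\log^{*}(n)+4$ global communication rounds): discard the appointed predicates of $\Lambda_{7}$ and append precisely the shift-down machinery of Section~\ref{Shift-down} — the step counter $\mathrm{Z}_1,\mathrm{Z}_2,\mathrm{Z}_3$, the loop counter $\mathrm{T}'_1,\dots,\mathrm{T}'_4$ with $\mathrm{END3}$, the stored-color variables $\mathrm{Cc}^{\delta}_i,\mathrm{Cp}^{\delta}_i$, the root flags $\mathrm{R}^{\delta}$, the working colors $\mathrm{C}^{\delta}_i$ (initialized from the final bit configuration $\mathrm{B}^{\delta}_1,\mathrm{B}^{\delta}_2,\mathrm{B}^{\delta}_3$ of the CV phase), and the auxiliary formulae $\mathrm{G}^{\delta},\mathrm{L}^{\delta}_i$. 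As appointed predicates of $\Lambda$ we take the $\mathrm{CLR}_i$ whose index encodes a color in $\{1,2,3\}$ in every forest — exactly $3^{\Delta}$ of them. The four claims are then verified one by one.

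For property~1 I would argue by induction on the loop index $t\in\{0,1,2,3,4\}$ that in the iteration where $\mathrm{Z}_1$ becomes true for the $(t{+}1)$st time, the variables $\mathrm{C}^{\delta}_1,\dots,\mathrm{C}^{\delta}_7$ encode at every node a proper coloring of the oriented forest $F_{\delta}$ (a child and its parent always carry different colors), and the maximal color used has dropped by at least one since loop $t$, so that after the fourth loop every $F_{\delta}$ carries a $3$-coloring, exactly as in the informal description. The induction step amounts to checking that step~1 ($\mathrm{Z}_1$ hot) has every non-root inherit its parent's color via $\langle\delta\rangle\mathrm{C}^{\delta}_i$ while each root (flagged by $\mathrm{R}^{\delta}$, latched during an earlier $\mathrm{CR}$-round) alternates between colors $1$ and $2$ and stores its old color in $\mathrm{Cc}^{\delta}_i$; that step~2 ($\mathrm{Z}_2$ hot) stores the parent's new color in $\mathrm{Cp}^{\delta}_i$; and that step~3 ($\mathrm{Z}_3$ hot) makes a node with locally greatest color (detected by $\mathrm{G}^{\delta}$) switch to the least color in $\{1,2,3\}$ avoided by its children and parent (selected by $\mathrm{L}^{\delta}_i$), everyone else being unchanged; the conditional-rule timing forces these three steps to execute in order, four times. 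When $\mathrm{END3}$ fires, each node makes exactly one of $\mathrm{C}^{\delta}_1,\mathrm{C}^{\delta}_2,\mathrm{C}^{\delta}_3$ true in each forest (and all $\mathrm{C}^{\delta}_i$ with $i>3$ are false), so exactly one kept predicate $\mathrm{CLR}_i$ — the one encoding the $\Delta$-tuple of forest colors — becomes true, i.e.\ $\Lambda$ outputs at each node a $3^{\Delta}$-bit string with a single $1$ and thus defines a $3^{\Delta}$-coloring in the sense of Definition~\ref{define coloring}. Properness follows because any edge $(w,v)\in R$ is oriented inside some forest $F_{\delta}$, where $w$ and $v$ therefore disagree on $(\mathrm{C}^{\delta}_1,\mathrm{C}^{\delta}_2,\mathrm{C}^{\delta}_3)$ and hence on their output strings.

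Properties~2--4 are counting arguments on top of Lemma~\ref{Cole-Vishkin lemma}. For property~2, the shift-down block adds $\mathcal{O}(1)$ heads of size $\mathcal{O}(1)$ (the $\mathrm{Z}_j$ and $\mathrm{T}'_j$) together with $\mathcal{O}(\Delta)$ heads ($\mathrm{Cc}^{\delta}_i,\mathrm{Cp}^{\delta}_i,\mathrm{C}^{\delta}_i,\mathrm{R}^{\delta}$ over the constantly many $i\le 7$ and all $\delta\le\Delta$), each of size $\mathcal{O}(\Delta)$ since the only non-constant ingredient in their bodies is the macro $\langle\delta\rangle$ of length $\mathcal{O}(\Delta)$ (the formulae $\mathrm{G}^{\delta},\mathrm{L}^{\delta}_i$ are of constant size), so the extension contributes $\mathcal{O}(\Delta)$ heads and $\mathcal{O}(\Delta^{2})$ size, absorbed into the $\mathcal{O}(\Delta\log(n))$ heads and $\mathcal{O}(\Delta^{2}\log(n))$ size already present (appointed predicates are ignored, as the statement allows). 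For property~3, the shift-down phase runs four loops of three steps plus $\mathcal{O}(1)$ set-up and tear-down, i.e.\ $\mathcal{O}(1)$ iterations, so the bound $\mathcal{O}(\log(n)\log\log(n)\log^{*}(n))$ is unchanged. For property~4, Lemma~\ref{Cole-Vishkin lemma} contributes exactly $\log^{*}(n)+4$ global communication rounds; in the shift-down phase a communication clause broadcasts only in step~1 (the $\langle\delta\rangle\mathrm{C}^{\delta}_i$ consequence of $\mathrm{C}^{\delta}_i$, when $\mathrm{Z}_1$ is hot) and step~2 (the $\langle\delta\rangle\mathrm{C}^{\delta}_i$ consequence of $\mathrm{Cp}^{\delta}_i$, when $\mathrm{Z}_2$ is hot), giving $2$ rounds per loop and $8$ in all, while the diamond hidden in the rule for $\mathrm{R}^{\delta}$ is triggered only on rounds where $\mathrm{CR}$ is hot, which lie in the CV phase and are already counted, and no other new clause contains a diamond; hence there are exactly $\log^{*}(n)+4+8=\log^{*}(n)+12$ global communication rounds.

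The main obstacle is property~1: making the informal three-step shift-down loop line up faithfully with the conditional-rule dynamics — the timing enforced by $\mathrm{Z}_1/\mathrm{Z}_2/\mathrm{Z}_3$, the special root treatment through $\mathrm{R}^{\delta}$ (which must have been latched to the correct value during a $\mathrm{CR}$-round of the CV phase), and the ``greatest color''/``least available color'' bookkeeping via $\mathrm{G}^{\delta}$ and $\mathrm{L}^{\delta}_i$ — together with the delicate audit in property~4 confirming that reusing $\mathrm{CR}$-rounds for $\mathrm{R}^{\delta}$ introduces no extra global communication round. Both are of the ``straightforward but tedious induction'' type already relied upon elsewhere in the paper, so the argument is a careful verification rather than a new idea.
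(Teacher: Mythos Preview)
Your proposal is correct and follows essentially the same approach as the paper: extend the program of Lemma~\ref{Cole-Vishkin lemma} with the shift-down machinery of Section~\ref{Shift-down}, then verify the four claims by the same counting arguments (properties~2--4) and the same mutual-exclusivity/properness reasoning for the appointed predicates (property~1). Your treatment is in fact slightly more explicit than the paper's in two places---the induction on the loop index for property~1, and the observation that the diamond in $\mathrm{R}^{\delta}$ fires only on already-counted $\mathrm{CR}$-rounds for property~4---but these are elaborations of the same argument rather than a different route.
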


\begin{proof}
1) We have $3^\Delta$ appointed (attention and print) predicates: $\mathrm{CLR}_{0}, \dots, \mathrm{CLR}_{4^\Delta - 1}$, minus the variables \[\mathrm{CLR}_{0}, \mathrm{CLR}_{4}, \mathrm{CLR}_{8}, \dots.\] Each appointed predicate corresponds with a single true variable $\mathrm{C}^{\delta}_{1}$, $\mathrm{C}^{\delta}_{2}$ or $\mathrm{C}^{\delta}_{3}$ for each $1 \leq \delta \leq \Delta$, which means that they are once again mutually exclusive and only one of them can be true in a given node. For each $1 \leq \delta \leq \Delta$, the variables $\mathrm{C}^{\delta}_{i}$ ($1 \leq i \leq 7$) are mutually exclusive and one of them must also be true. Additionally, the variables $\mathrm{C}^{\delta}_{i}$ ($i > 3$) are all untrue at this point. This means that at least one appointed predicate must become true in each node. Thus, each node outputs a $3^\Delta$-bit string where exactly one bit is a $1$, and the program defines a $3^\Delta$-coloring.

If two neighbors shared the same output string, they would share the same truth values for all variables $\mathrm{C}^{\delta}_{i}$ ($1 \leq i \leq 7$, $1 \leq \delta \leq \Delta$). This means that the neighbors would share the same color in each oriented forest. This is once again impossible, because there is some oriented forest $F_\delta$ where one of the neighbors is the other's parent, and the program ensures that a parent and child never share the same color. Thus, no two neighbors share the same output, and the coloring is proper.

2) Let us count the number of heads and their lengths. According to Lemma \ref{Cole-Vishkin lemma}, we have $\mathcal{O}(\Delta \log(n))$ heads from before. Their size was $\mathcal{O}(\Delta^2 \log(n))$. Let us count the new heads and their sizes:
\begin{itemize}
    \item We have $\mathcal{O}(\Delta)$ variables $\mathrm{Z}_i$, $\mathrm{T}'_i$, $\mathrm{Cc}^{\delta}_{i}$, $\mathrm{Cp}^{\delta}_{i}$, $\mathrm{R}^{\delta}$ and $\mathrm{C}^{\delta}_{i}$ of size $\mathcal{O}(\Delta)$.
\end{itemize}
Adding the heads together with the previous part of the program, we get
\[
    \mathcal{O}(\Delta \log(n)) + \mathcal{O}(\Delta) = \mathcal{O}(\Delta \log(n)).
\]
Adding the sizes together, we get
\[
    \mathcal{O}(\Delta^2 \log(n)) + \mathcal{O}(\Delta) \mathcal{O}(\Delta) = \mathcal{O}(\Delta^2 \log(n)) + \mathcal{O}(\Delta^2) = \mathcal{O}(\Delta^2 \log(n)).
\]

3) The number of previous iterations was $\mathcal{O}(\log(n) \log\log(n) \log^*(n))$ by Lemma \ref{Cole-Vishkin lemma}. The number of iterations for the shift-down phase is the number of steps multiplied by the number of loops. Since there are $\mathcal{O}(1)$ steps and $\mathcal{O}(1)$ loops, this means an additional $\mathcal{O}(1)$ iterations. The total number of iterations is still
\[
\mathcal{O}(\log(n) \log\log(n) \log^*(n)) + \mathcal{O}(1) = \mathcal{O}(\log(n) \log\log(n) \log^*(n)).
\]

4) The previous phase had exactly $\log^*(n) + 4$ global communication rounds. In this phase, diamonds are used exactly during steps 1 and 2. Given that we go through $4$ loops of these steps, this gives us an additional $4 \cdot 2 = 8$ global communication rounds, which matches the algorithm. The total number of global communication rounds is then
\[
\log^*(n) + 4 + 8 = \log^*(n) + 12.
\]
\end{proof}

\subsection{Basic color reduction}\label{basic color reduction}

In the next phase, the $3$-colorings from the previous phase are combined together into a coloring for the whole graph by placing them together: the color of a node in forest $F_1$ becomes the two rightmost bits of its color, its color in forest $F_2$ becomes the next two bits, and so forth, until its color in forest $F_{\Delta}$ becomes the two leftmost bits. This gives us a proper $(3^{\Delta})$-coloring of the whole graph.

Finally, this is turned into a $(\Delta + 1)$-coloring with a technique called basic color reduction. In each round, the nodes send their color to all of their neighbors. After this, the nodes with a higher color than their neighbors change color to the lowest available color from the set $[\Delta + 1]$; at least one of these colors is available, because each node has at most $\Delta$ neighbors. We repeat this algorithm until the highest color of a node is $\Delta + 1$, which will take at most $7^{\Delta} - (\Delta + 1)$ communication rounds.

We need $2 \Delta$ variables that will contain the bits of the new color. For each $\delta$ we define two variables that contain the bits of a node's color in forest $F_\delta$. Together, these bits will define the whole color of the node in the graph:
\[
\begin{aligned}
    &\mathrm{B}_1 \coloneq_\mathrm{END3} \mathrm{C}^1_1 \lor \mathrm{C}^1_3; \bot,
    & &\mathrm{B}_3 \coloneq_\mathrm{END3} \mathrm{C}^2_1 \lor \mathrm{C}^2_3; \bot,
    & &\dots,
    & &\mathrm{B}_{2 \Delta - 1} \coloneq_\mathrm{END3} \mathrm{C}^{\Delta}_1 \lor \mathrm{C}^{\Delta}_3; \bot, \\
    &\mathrm{B}_2 \coloneq_\mathrm{END3} \mathrm{C}^1_2 \lor \mathrm{C}^1_3; \bot,
    & &\mathrm{B}_4 \coloneq_\mathrm{END3} \mathrm{C}^2_2 \lor \mathrm{C}^2_3; \bot,
    & &\dots,
    & &\mathrm{B}_{2 \Delta} \coloneq_\mathrm{END3} \mathrm{C}^{\Delta}_2 \lor \mathrm{C}^{\Delta}_3; \bot.
\end{aligned}
\]
The variable $\mathrm{B}_{2 \delta - 1}$ refers to the rightmost bit of a node's color in forest $F_\delta$ and $\mathrm{B}_{2 \delta}$ refers to the leftmost bit.

In the final phase of the algorithm, each node will compare its color to the colors of its neighbors. We already have variables for the bits of the node's own color, but we also have to define variables for all the bits of its neighbor's colors. Just like in sections \ref{Forest decomposition} and \ref{Cole-Vishkin}, the bits will rotate as we compare each node's color with those of its neighbors. For this purpose, we need $2 \Delta$ timer variables. After comparing the bits, the nodes change color and compare them again. The timer needs to be reset each time and we use a flag $\mathrm{CR}'$ to reset them (which we will define later):
\[
\mathrm{T}^1_1 \coloneq_{\mathrm{CR}', \mathrm{T}^1_1} \top; \top; \bot,
\quad
\mathrm{T}^1_2 \coloneq_{\mathrm{CR}', \mathrm{T}^1_1} \bot; \top; \bot,
\quad
\dots,
\quad
\mathrm{T}^1_{2 \Delta} \coloneq_{\mathrm{CR}', \mathrm{T}^1_{2 \Delta - 1}} \bot; \top; \bot.
\]
We add another $\Delta + 1$ timer variables on top of this for a phase where we determine the lowest available color that a node can switch to, giving us a total of $3 \Delta + 1$ timer variables:
\[
\mathrm{T}^2_1 \coloneq_{\mathrm{CR}', \mathrm{T}^1_{2 \Delta}} \bot; \top; \bot,
\quad
\mathrm{T}^2_2 \coloneq_{\mathrm{CR}', \mathrm{T}^2_1} \bot; \top; \bot,
\quad
\dots,
\quad
\mathrm{T}^2_{\Delta + 1} \coloneq_{\mathrm{CR}', \mathrm{T}^2_{\Delta}} \bot; \top; \bot.
\]

On top of this two-part timer, we need a second timer to count the loops of the first timer. The maximum color of a node in a graph is reduced by at least one in every loop, so we need $t = 3^{\Delta} - \Delta$ variables; after $t - 1$ loops we will have reduced the color of each node to at most $\Delta + 1$. We define them as follows:
\[
\mathrm{LP}_1 \coloneq_{\mathrm{LP}_1, \mathrm{CR'}} \top; \top; \bot,
\quad
\mathrm{LP}_2 \coloneq_{\mathrm{LP}_2, \mathrm{CC}} \top; \mathrm{LP}_1; \bot,
\quad
\dots,
\quad
\mathrm{LP}_t \coloneq_{\mathrm{LP}_t, \mathrm{CC}} \top; \mathrm{LP}_{t - 1}; \bot.
\]
The lower index refers to the current loop of the counter. Once $\mathrm{END4} \coloneqq \mathrm{LP}_t$ becomes true, the program can be stopped. We will use this flag to stop all variables, but we omit it from the definitions to save space.

We define two flags: $\mathrm{CC}$ for the round when nodes change color and $\mathrm{CR}'$ for the global communication rounds right after.
\[
    \mathrm{CC} \coloneq_{\mathrm{CR}' \lor \mathrm{CC}, \mathrm{T}^2_{\Delta + 1}} \bot; \top; \bot,
    \quad
    \mathrm{CR}' \coloneq_{\mathrm{CC}, \mathrm{CR}' \lor \mathrm{T}^1_1, \mathrm{END3}} \top; \bot; \top; \bot.
\]
The condition $\mathrm{END3}$ triggers the first global communication round; this condition will never be revisited. After this, the timer variables begin ticking. Once they have all turned on, the variable $\mathrm{CC}$ becomes true for one round when the nodes change color. Following this, the variable $\mathrm{CR}'$ becomes true for one round as well, marking the global communication round. After this, the counter resets and we start over.

We revise the definitions of all bit variables $\mathrm{B}_i$ to rotate while the first part of the counter is active:
\[
\begin{aligned}
    &\mathrm{B}_1 \coloneq_{\mathrm{T}^2_1, \mathrm{T}^1_1, \mathrm{END3}} \mathrm{B}_1; \mathrm{B}_{2 \Delta}; \mathrm{C}^1_1 \lor \mathrm{C}^1_3; \bot, \\
    &\mathrm{B}_2 \coloneq_{\mathrm{T}^2_1, \mathrm{T}^1_1, \mathrm{END3}} \mathrm{B}_2; \mathrm{B}_{1}; \mathrm{C}^1_2 \lor \mathrm{C}^1_3; \bot, \\
    &\vdots \\
    &\mathrm{B}_{2 \Delta} \coloneq_{\mathrm{T}^2_1, \mathrm{T}^1_1, \mathrm{END3}} \mathrm{B}_{2 \Delta}; \mathrm{B}_{2 \Delta - 1}; \mathrm{C}^{\Delta}_2 \lor \mathrm{C}^{\Delta}_3; \bot.
\end{aligned}
\]
The variables rotate truth values for exactly $2 \Delta$ iterations, stopping at their original values. They remain the same while we search for the smallest available color during the next $\Delta + 1$ iterations. We will later revise these definitions one more time to show how the nodes change colors.

We also need variables $\mathrm{B'}^{\delta}_i$ ($1 \leq \delta \leq \Delta$) for the bits of each neighbor's colors:
\[
\begin{aligned}
    &\mathrm{B'}^{\delta}_1 \coloneq_{\mathrm{CR'}, \mathrm{T}^2_1, \mathrm{T}^1_1} \Diamond_{\delta} \mathrm{B}_1; \mathrm{B'}^{\delta}_1; \mathrm{B'}^{\delta}_{2 \Delta}; \bot, \\
    &\mathrm{B'}^{\delta}_2 \coloneq_{\mathrm{CR'}, \mathrm{T}^2_1, \mathrm{T}^1_1} \Diamond_{\delta} \mathrm{B}_2; \mathrm{B'}^{\delta}_2; \mathrm{B'}^{\delta}_{1}; \bot, \\
    &\vdots \\
    &\mathrm{B'}^{\delta}_{2 \Delta} \coloneq_{\mathrm{CR'}, \mathrm{T}^2_1, \mathrm{T}^1_1} \Diamond_{\delta} \mathrm{B}_{2 \Delta}; \mathrm{B'}^{\delta}_{2 \Delta}; \mathrm{B'}^{\delta}_{2 \Delta - 1}; \bot.
\end{aligned}
\]
The bits rotate just like before. During global communication rounds, the variables receive the bits of the neighbor's updated color.

Next, we define variables $\mathrm{DIF'}^{\delta}$ ($1 \leq \delta \leq \Delta$) that compare the bits of a node with those of its neighbors. We are comparing the sizes of the colors, which means we must start by comparing the leftmost bits:
\[
    \mathrm{DIF'}^{\delta} \coloneq_{\mathrm{CR'}, \mathrm{T}^1_1} \bot; \neg \left( \mathrm{B}_{2 \Delta} \leftrightarrow \mathrm{B'}^{\delta}_{2 \Delta} \right); \bot.
\]
We defined a similar variable in section \ref{Cole-Vishkin}. Similarly, we define variables $\mathrm{HIGH'}^{\delta}$ and $\mathrm{LOW'}^{\delta}$ ($1 \leq \delta \leq \Delta$) that store the information on which node has a higher color:
\[
\begin{aligned}
    &\mathrm{HIGH'}^{\delta} \coloneq_{\mathrm{CR'}, \mathrm{HIGH'}^{\delta}, \mathrm{LOW'}^{\delta}, \mathrm{DIF'}^{\delta}} \bot; \top; \bot; \mathrm{B}_{1}; \bot, \\
    &\mathrm{LOW'}^{\delta} \coloneq_{\mathrm{CR'}, \mathrm{LOW'}^{\delta}, \mathrm{HIGH'}^{\delta}, \mathrm{DIF'}^{\delta}} \bot; \top; \bot; \neg \mathrm{B}_{1}; \bot.
\end{aligned}
\]
The variable $\mathrm{HIGH'}^{\delta}$ becomes true if a node has a higher color than its $\delta$th neighbor and $\mathrm{LOW'}^{\delta}$ becomes true if it has a lower color. The truth value is calculated from the bit on the opposite end (index $1$) than where they were compared (index $2 \Delta$), because the truth value of those variables rotated to the other side during the single round of iteration in between.

Just like in section \ref{Shift-down}, we define a formula that tells us if a node has a greater color than all of its neighbors:
\[
\mathrm{G} \coloneq_{\mathrm{CC}, \mathrm{T}^2_{\Delta + 1}} \bot; \bigwedge\limits_{\delta = 1}^{\Delta} \mathrm{HIGH'}^{\delta}; \bot.
\]

If a node has a greater color than all of its neighbors, it needs to change its color to the lowest color that isn't shared by any of its neighbors. For this purpose, we define $\Delta + 1$ variables that express that a node should change to that color:
\[
\begin{aligned}
    &\mathrm{L}_1 \coloneq_{\mathrm{T}^2_1} \bigwedge\limits_{\delta = 1}^{\Delta} \neg \left( \bigwedge\limits_{i = 2}^{\log(\Delta + 1)} \neg \mathrm{B'}^{\delta}_i \land \mathrm{B'}^{\delta}_1 \right); \bot, \\
    &\mathrm{L}_2 \coloneq_{\mathrm{T}^2_2} \bigwedge\limits_{\delta = 1}^{\Delta} \neg \left( \bigwedge\limits_{i = 3}^{\log(\Delta + 1)} \neg \mathrm{B}'^{\delta}_i \land \mathrm{B'}^{\delta}_2 \land \neg \mathrm{B'}^{\delta}_1 \right) \land \neg \mathrm{L}_1; \bot, \\
    &\mathrm{L}_{3} \coloneq_{\mathrm{T}^2_3} \bigwedge\limits_{\delta = 1}^{\Delta} \neg \left( \bigwedge\limits_{i = 3}^{\log(\Delta + 1)} \neg \mathrm{B'}^{\delta}_i \land \mathrm{B'}^{\delta}_2 \land \mathrm{B'}^{\delta}_1 \right) \land \bigwedge\limits_{\delta = 1}^{2} \neg \mathrm{L}_{\delta}; \bot, \\
    &\vdots
\end{aligned}
\]
In the same iteration round where the corresponding timer variable $\mathrm{T}^2_i$ turns true, each variable $\mathrm{L}_i$ calculates whether the color in its lower index is the lowest available color. In the right conjunction, they check that all lower colors are already taken. In the left conjunction, they check if this color is available using its bits. After $\Delta + 1$ iterations, all variables have calculated a truth value, such that only one of them has become true.

At last, we revise the definitions for the variables $\mathrm{B}_i$ for when nodes change color (we denote $\ell' = \log(\Delta + 1)$, $f(x) = \lceil \frac{x}{2} \rceil$ and $g(x) = x - 2 f(x) + 2$; the functions $f$ and $g$ are used to fix certain indices):
\[
\begin{aligned}
    &\mathrm{B}_1 \coloneq_{\mathrm{G}, \mathrm{T}^2_1, \mathrm{T}^1_1, \mathrm{END3}} \bigvee\limits_{i = 1}^{2^{\ell' - 1}} \mathrm{L}_{2 i - 1}; \mathrm{B}_1; \mathrm{B}_{2 \Delta}; \mathrm{C}^1_1 \lor \mathrm{C}^1_3; \bot, \\
    &\mathrm{B}_2 \coloneq_{\mathrm{G}, \mathrm{T}^2_1, \mathrm{T}^1_1, \mathrm{END3}} \bigvee\limits_{j = 1}^{2} \bigvee\limits_{i = 1}^{2^{\ell' - 2}} \mathrm{L}_{4 i - j}; \mathrm{B}_2; \mathrm{B}_{1}; \mathrm{C}^1_2 \lor \mathrm{C}^1_3; \bot, \\
    &\vdots \\
    &\mathrm{B}_{\ell'} \coloneq_{\mathrm{G}, \mathrm{T}^2_1, \mathrm{T}^1_1, \mathrm{END3}} \bigvee\limits_{j = 1}^{2^{\ell' - 1}} \mathrm{L}_{2^{\ell'} - j}; \mathrm{B}_{\ell'}; \mathrm{B}_{\ell' - 1}; \mathrm{C}^{f(\ell')}_{g(\ell')} \lor \mathrm{C}^{f(\ell')}_3; \bot, \\
    &\mathrm{B}_{\ell' + 1} \coloneq_{\mathrm{G}, \mathrm{T}^2_1, \mathrm{T}^1_1, \mathrm{END3}} \bot; \mathrm{B}_{\ell' + 1}; \mathrm{B}_{\ell'}; \mathrm{C}^{f(\ell' + 1)}_{g(\ell' + 1)} \lor \mathrm{C}^{f(\ell' + 1)}_3; \bot, \\
    &\vdots \\
    &\mathrm{B}_{2 \Delta} \coloneq_{\mathrm{G}, \mathrm{T}^2_1, \mathrm{T}^1_1, \mathrm{END3}} \bot; \mathrm{B}_{2 \Delta}; \mathrm{B}_{2 \Delta - 1}; \mathrm{C}^{\Delta}_2 \lor \mathrm{C}^{\Delta}_3; \bot.
\end{aligned}
\]
During the round when nodes change color, if a node does not have the highest color, its bits stay the same. If it has the highest color, then it changes to a color between $1$ and $\Delta + 1$. This means that bits $\ell' + 1$ onward become $0$ and the corresponding variables become untrue. The truth values of the remaining $\ell'$ variables are calculated from the color a node changes into:
\begin{itemize}
    \item The colors where the rightmost bit is $1$ are $[1] , [3] , [5] , \dots$.
    \item The colors where the second rightmost bit is $1$ are $[2, 3] , [6, 7] , [10, 11], \dots$.
    \item The colors where the third rightmost bit is $1$ are $[4, 5, 6, 7]$ , $[12, 13, 14, 15]$ ,\\ $[20, 21, 22, 23] , \dots$.
    \item The colors where the $k$th rightmost bit is $1$ are the colors $2^k i - j$, where $i \in \Z_+$ and $j \in [k]$.
\end{itemize}

Finally, we define $\Delta + 1$ appointed predicates, one for each possible final color:
\[
\begin{aligned}
    &\mathrm{CLR}_1 \coloneq_{\mathrm{END4}} \bigwedge\limits_{i = 2}^{\log(\Delta + 1)} \neg \mathrm{B}_i \land \mathrm{B}_1; \bot, \\
    &\mathrm{CLR}_2 \coloneq_{\mathrm{END4}} \bigwedge\limits_{i = 3}^{\log(\Delta + 1)} \neg \mathrm{B}_i \land \mathrm{B}_2 \land \neg \mathrm{B}_1; \bot, \\
    &\mathrm{CLR}_3 \coloneq_{\mathrm{END4}} \bigwedge\limits_{i = 3}^{\log(\Delta + 1)} \neg \mathrm{B}_i \land \mathrm{B}_2 \land \mathrm{B}_1; \bot, \\
    &\vdots
\end{aligned}
\]

With these appointed predicates, Theorem \ref{Cole-Vishkin theorem} is now true:

\bigskip 

\noindent
$\blacktriangleright$ \textbf{Theorem\ \ref{Cole-Vishkin theorem}.}\ \emph{There exists a formula of $\mpmsc$ with the following properties.}
\begin{enumerate}
    \item \emph{It defines a proper $(\Delta + 1)$-coloring.}
    \item \emph{It has $\mathcal{O}(\Delta \log(n)) + \mathcal{O}(3^\Delta)$ heads and its size is $\mathcal{O}(\Delta^2 \log(n)) + \mathcal{O}(3^\Delta)$.}
    \item \emph{The number of iterations needed is $\mathcal{O}(\log(n) \log\log(n) \log^*(n)) + \mathcal{O}(\Delta 3^{\Delta})$.}
    \item \emph{There are exactly $\log^*(n) + 10 + 3^\Delta - \Delta$ global communication rounds.}
\end{enumerate}

\begin{proof}
    1) We have $\Delta + 1$ appointed (attention and print) predicates: $\mathrm{CLR}_1, \dots, \mathrm{CLR}_{\Delta + 1}$. They all correspond to different combinations of the variables $\mathrm{B}_1, \dots, \mathrm{B}_{\ell'}$. This makes the print predicates mutually exclusive, meaning that at most one of them can be true in a given node. At least one of the variables $\mathrm{B}_i$ must be true in each node, and by the time the nodes output, the variables $\mathrm{B}_{\ell' + 1}$ onward have all become untrue in every node, which means that at least one print predicate becomes true in each node. Thus, each node outputs a ($\Delta + 1$)-bit string where exactly one bit is a $1$, and the program defines a $(\Delta + 1)$-coloring.
    
    If two neighbors shared the same output, they would share the same truth values for variables $\mathrm{B}_1, \dots, \mathrm{B}_{\ell'}$. They would also share the same truth values for variables $\mathrm{B}_{\ell' + 1}$ onward, because they have all become untrue in every node. This means that the neighbors would share the same color, which is impossible according to the program. Thus, no two neighbors share the same output, and the coloring is proper.

    2) According to Lemma \ref{Shift-down lemma}, we have $\mathcal{O}(\Delta \log(n))$ heads from before and the size of the program was $\mathcal{O}(\Delta^2 \log(n))$. Let's add the new heads and their sizes.
    \begin{itemize}
        \item We have $\mathcal{O}(\Delta)$ variables $\mathrm{T}^i_j$, $\mathrm{CC}$, $\mathrm{CR'}$, $\mathrm{B'}^\delta_{i}$, $\mathrm{DIF'}^\delta$, $\mathrm{HIGH'}^\delta$, $\mathrm{LOW'}^\delta$, $\mathrm{G}$, $\mathrm{L}_i$, $\mathrm{B}_{i}$ and $\mathrm{CLR}_{i}$ of size $\mathcal{O}(\Delta^2)$.
        \item We have $\mathcal{O}(3^\Delta)$ variables $\mathrm{LP}_i$ of size $\mathcal{O}(1)$.
    \end{itemize}
    Adding the heads together with the previous parts of the program, we get
    \[
        \mathcal{O}(\Delta \log(n)) + \mathcal{O}(\Delta) + \mathcal{O}(3^\Delta) = \mathcal{O}(\Delta \log(n)) + \mathcal{O}(3^\Delta).
    \]
    Multiplying the heads by their sizes we get
    \[
    \begin{aligned}
        &\mathcal{O}(\Delta^2 \log(n)) + \mathcal{O}(\Delta) \mathcal{O}(\Delta^2) + \mathcal{O}(3^\Delta) \mathcal{O}(1) \\
        &= \mathcal{O}(\Delta^2 \log(n)) + \mathcal{O}(\Delta^3) + \mathcal{O}(3^\Delta) \\
        &= \mathcal{O}(\Delta^2 \log(n)) + \mathcal{O}(3^\Delta).
    \end{aligned}
    \]

    3) The previous phases took $\mathcal{O}(\log(n) \log\log(n) \log^*(n))$ iterations. The basic color reduction phase takes almost $3^{\Delta}$ repeats of a multiple of $\Delta$ iterations each. The total number of iterations is
    \[
        \mathcal{O}(\log(n) \log\log(n) \log^*(n)) + \mathcal{O}(\Delta) \mathcal{O}(3^{\Delta}) = \mathcal{O}(\log(n) \log\log(n) \log^*(n)) + \mathcal{O}(\Delta 3^{\Delta}).
    \]

    4) There were $\log^*(n) + 12$ previous iterations in the program that required a diamond. In this phase, a diamond is used each time the variable $\mathrm{CR'}$ is true, which happens once each loop. The number of loops is  $3^\Delta - (\Delta + 1)$, which matches the communication rounds of the algorithm.
    This means that the total number of global communication rounds is $\log^*(n) + 3^\Delta - \Delta + 11$.
\end{proof}

If we assume that $\Delta$ is a constant, then the number of heads and the size of the program is $\mathcal{O}(\log(n))$, the number of iterations required is $\mathcal{O}(\log(n) \log\log(n) \log^*(n))$ and the number of global communication rounds is $\log^*(n) + \mathcal{O}(1)$.

\end{document}